\documentclass[letterpaper]{article}
\usepackage{aaai2027}
\usepackage[hyphens]{url}
\usepackage{graphicx}
\usepackage{natbib}
\usepackage{caption}

\usepackage{algorithm}
\usepackage{algorithmic}

\usepackage{amsmath}
\usepackage{amssymb}
\usepackage{mathtools}
\usepackage{amsthm}

\usepackage{cleveref}

\usepackage{xcolor}

\theoremstyle{plain}
\newtheorem{theorem}{Theorem}[section]

\newtheorem{lemma}[theorem]{Lemma}

\theoremstyle{definition}
\newtheorem{definition}[theorem]{Definition}

\theoremstyle{remark}

\usepackage{newfloat}
\usepackage{listings}
\DeclareCaptionStyle{ruled}{labelfont=normalfont,labelsep=colon,strut=off}
\floatstyle{ruled}
\newfloat{listing}{tb}{lst}{}
\floatname{listing}{Listing}

\usepackage{booktabs}

\nocopyright

\title{A Theoretical Framework for Parallel Lifelong MAPF Using \\ Group Decentralized Planning}
\author{
	Alex DeWeese \textsuperscript{\rm 1}, Jiaoyang Li \textsuperscript{\rm 2}, Guannan Qu \textsuperscript{\rm 1}\\
}
\affiliations{
    \textsuperscript{\rm 1}Department of Electrical and Computer Engineering, Carnegie Mellon University, Pittsburgh, PA 15213, US\\
    \textsuperscript{\rm 2} Robotics Institute, Carnegie Mellon University, Pittsburgh, PA 15213, US\\
    \{mdeweese, gqu\}@andrew.cmu.edu, \{jiaoyangli\}@cmu.edu\\
}

\begin{document}

\maketitle

\begin{abstract}

In the Lifelong Multi-Agent Path Finding (L-MAPF) problem, agents must repeatedly move from one destination to another while avoiding obstacles and inter-agent collisions. Widely regarded as one of the highest-performing solutions to this problem is the Rolling-Horizon Collision Resolution (RHCR) framework. However, commensurate with its quality solutions, it incurs a computational cost that limits its applicability to even modest agent counts.
In this paper, leveraging theoretical methods from the Locally Interdependent Multi-Agent MDP literature \cite{deweese2024locally}, we first theoretically prove the near-optimality of RHCR in a discounted MDP formulation of the L-MAPF problem.
Then, we leverage these results to naturally motivate an extended framework called Group Decentralized RHCR (GD-RHCR) which incorporates a group decentralized structure that partitions agents based on a transitive communication scheme and plans for each partition of agents in parallel. We show that both RHCR and GD-RHCR achieve similar exponentially close to optimal guarantees, establishing a theoretical duality between the time based restrictions performed by vanilla RHCR and the additional space based partitioning performed by GD-RHCR.
Lastly, we show that across varying maps, GD-RHCR is able to attain high throughput that scales into higher agent counts while maintaining a significantly lower per plan cost.

\end{abstract}

 \begin{links}
 \end{links}

\section{Introduction}

The (one-shot) multi-agent path finding (MAPF) problem is a highly studied area which navigates a large number of agents from some start points to end points on a map while avoiding obstacles as well as collisions among each other. Applications are broad, encompassing warehouse logistics \cite{ma2017lifelong}, airport logistics \cite{li2019departure}, UAV traffic management \cite{ho2019multi}, parking navigation \cite{okoso2019multi}, video games \cite{li2020moving} etc.

When agents are continuously moving such as in automated package delivery in warehouses, this shifts to a new paradigm called \textbf{Lifelong MAPF} (L-MAPF). In this setting as agents reach a destination, they are immediately assigned a new destination. They must attain a high throughput (e.g. delivered packages) while avoiding collisions with obstacles and with other agents.

A wide range of methods have been proposed for this lifelong setting (see \cref{related_works}). Of these, one important baseline is Priority Inheritance with Backtracking (PIBT), an ultra fast and scalable method that serves as a strong baseline but can have greedy behavior. In contrast to PIBT, another popular solution is the Rolling-Horizon Collision Resolution (RHCR) framework which is seen as one of the highest performing solutions in terms of attaining high throughput.

Unfortunately, a weakness of RHCR is its high computational cost especially when compared to PIBT. In practical scenarios like robot navigation in warehouses, RHCR must repeatedly use one-shot MAPF solvers in a short amount of time. Even when using suboptimal solvers like Priority Based Search (PBS), we observe that RHCR explodes exponentially in computational cost.
In fact, within the research community, running RHCR as a baseline is seen as a computationally expensive part of running simulations and often must be reported as a partial plan before timeouts hit or run on a smaller subset of problems \cite{arita2026lifelong, chen2024traffic}.
In practice, RHCR can be used with a fallback mechanism where another algorithm (such as PIBT) is taken when some timeout is hit.

While it is tempting to speed up RHCR by developing parallelized versions of it (across agents), it remains open how to do so in a theoretically principled way without sacrificing the performance of RHCR.

\subsection{Contributions}

The goal of this work is to \emph{develop theoretically principled parallelized RHCR methods that keep (or exceeds) the high-throughput of RHCR, and plans with a fraction of the compute cost.
} Towards this goal, we build a \emph{novel theoretical foundation for RHCR}, which leads to a theoretically principled parallelization of RHCR called \emph{Group Decentralized RHCR (GD-RHCR)}, as detailed below.

\textbf{Theoretical foundation of RHCR.} We start with a theoretical underpinning of vanilla RHCR as a near-optimal method when modeling the MAPF problem as a discounted MDP using newly established techniques from the Locally Interdependent Multi-Agent MDP (LI-MDP) literature \cite{deweese2024locally, deweese2025thinking}. That is, in this discounted setting, the performance of RHCR converges to optimal exponentially fast with the increase in the planning horizon (\cref{rhcr_guarantee}). This worst-case bound is agnostic to the replan window (the frequency of reevaluation), suggesting a theoretical motivation for this previously existing concept.

\textbf{Group Decentralized RHCR.} The theoretical foundation also motivates a Group Decentralized (GD) version (see \cref{gd_def}) to the RHCR framework. We show that the new GD-RHCR framework satisfies similar theoretical guarantees as RHCR up to constant factors
(\cref{gdrhcr_guarantee}),  establishing a theoretical duality between the time based restrictions performed by RHCR and the group decentralized distance based partitioning scheme performed by GD-RHCR. These theory results mean GD-RHCR \textit{keeps the high-performance advantage} of RHCR.

In addition to the theoretical guarantee, GD-RHCR has a variety of desirable empirical characteristics that significantly \textit{reduce the computational cost} compared to RHCR.
(1) The time complexity of near-optimal one-shot MAPF algorithms (e.g., CBS and PBS) grows exponentially with the number of agents. By partitioning agents into groups and solving each group independently, GD-RHCR significantly reduces the complexity of the planning problem.
(2) GD-RHCR enables replanning for different groups to be performed in parallel and even asynchronously. In particular, only a subset of the groups needs to be replanned at each timestep, reducing the computational burden of online planning.
(3) GD-RHCR allows different one-shot MAPF planners to be assigned to different groups. For example, more optimal solvers such as PBS can be used for small groups, while more scalable planners such as PIBT can be used for large groups, providing a natural trade-off between solution quality and computational efficiency.

Because of these strengths, for small agent counts that RHCR can already handle, GD-RHCR achieves similar throughput in many cases with significantly faster planning speed; for large agent counts that only PIBT can handle, GD-RHCR can also solve them yet with consistently higher throughput.

\textbf{Experimental Validation. }
We demonstrate GD-RHCR empirically across a variety of maps and show that the average plan time of can be reduced by a factor of $24.9$x and match RHCR in performance before its collapse. Further, it performs well deeper into the large agent range and shows up to a $57.7\%$ improvement in the throughput against PIBT and collapsed RHCR.

\section{Related Works}
\label{related_works}
\textbf{One-Shot MAPF}
One-shot MAPF solvers are widely put into two categories: Theoretically guaranteed versus greedy
methods. The theoretically guaranteed methods provide optimal or near-optimal solution at a potential of high computation time. Some algorithms include Conflict Based Search (CBS) \cite{SHARON201540}, Enhanced CBS (ECBS) \cite{barer2014suboptimal}, Explicit Estimation CBS (EECBS) \cite{li2021eecbs}. However, since solving MAPF optimally is NP-Hard \cite{yu2013structure}, these methods can tend to blow up in computation time especially as the number of agents increase.

For greedy methods, there are priority based planners such as Priority Planning \cite{erdmann1987multiple} and Priority Based Search (PBS) \cite{ma2019searching}. These methods can be fast
but forgo the theoretical near-optimality. In a similar vein, there are ultra-fast but greedy methods that rely on Priority Inheritance with Backtracking (PIBT) \cite{okumura2022priority} and its variants such as LaCAM \cite{okumura2023lacam, okumura2023improving}. These methods are extremely scalable but may be suboptimal (see \cref{simulations}).

There are also learning based methods that use reinforcement learning or imitation learning directly to solve the MAPF problem \cite{sartoretti2019primal}. These methods can perform some parallelism through decentralization, yet there is no guarantee on how much performance is lost due to parallelization/decentralization.

\textbf{Lifelong MAPF}
Largely, this literature is divided into the the slow but high quality solutions provided by RHCR and the fast but greedy solutions like PIBT. Recall, RHCR enables conversion of the one-shot MAPF solvers to the lifelong setting but may incur a large computation time since one-shot MAPF is NP-hard \cite{morag2023adapting}.

On the other hand, PIBT is a method that can be used in the one-shot context or the lifelong setting and it along with its variations \cite{chen2024traffic, okumura2019winpibt} remain as a fast alternatives that are greedy and can often suffer from deadlocks when the graph is not biconnected \cite{okumura2022priority}.

Analogously we have multi-agent RL and imitation learning methods for this setting as well \cite{damani2021primal, jiang2025deploying}.

\textbf{Grouping in MAPF}
There have been a number of prior works that attempt to speed up computation by grouping agents. A static grouping method can divide the map into sub-regions and treat agents in each sub-region as a group \cite{leet2022shard}. In another context, more sophisticated methods attempt to dynamically identify independent groupings \cite{veerapaneni2025windowed, zhang2026dynamic}. This however is done in the context of speeding up an algorithm that maintains completeness guarantees rather than identifying coordination groups that may perform well together.

\textbf{Locally Interdependent Multi-Agent MDP}
Adjacent to the MAPF literature, there has been a recent breakthrough in the study of discounted multi-agent MDP environments with dynamic local dependencies.
A model was proposed called the Locally Interdependent Multi-Agent MDP (LI-MDP) which is a theoretical model for multi-agent systems with local interactions \cite{deweese2024locally, deweese2025thinking}.

Further the same works proposed the group decentralized setting which connects agents transitively based on their visibility and connected agents are allowed to coordinate. This special observability structure in between the centralized and decentralized setting permits strong theoretical guarantees not available to the decentralized setting (exponentially close to optimal with respect to visibility) but still allows for groups of agents to plan in parallel.

We show that the LI-MDP framework brings insight into the effectiveness of existing methods like RHCR to prove near optimality and can be used to propose new methods like GD-RHCR that integrates this new observability structure.

\section{Motivation: Natural Extension to RHCR}
\label{motivation}

 \begin{figure}
     \centering
     \includegraphics[width=0.85\linewidth]{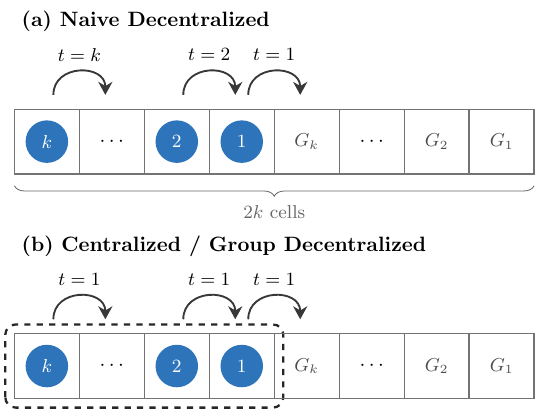}
     \caption{$k$ agents with goal locations $G_i$ on the right. Arrows depict timestep of first movement.}
     \label{cent_dec}
 \end{figure}

To motivate our method, we begin by introducing an abstracted version of RHCR. To stay consistent with traditional MDP notation, we will use $H$ to refer to the planning horizon and $\alpha$ will refer to the replanning period.

\textbf{RHCR:} Assume a one-shot solver \textit{SOLVER(sources, destinations, H)} that produces a $H$ step collision-free path for all agents and as $H \rightarrow \infty$, the path moves all agents from their sources to destinations.
The RHCR framework proposes to run \textit{SOLVER} for
some $\alpha\leq H$ timesteps before recomputing and repeating. That is, at each $t = k \alpha$, RHCR will run $SOLVER(positions[k\alpha], remaining\_destinations[k\alpha], H)$ and that path will be taken for $t' \in [k\alpha, k\alpha + \alpha)$.

In this paper, we seek to mitigate the computational requirements of repeatedly using $SOLVER$ in RHCR by running $SOLVER$ on groups of agents in parallel. Our observation is that we can view RHCR as \emph{temporally} ignoring agent interactions beyond some time horizon $H$ to reduce computation time. Therefore, if we want to achieve parallelization, a natural idea is also to ignore interactions but now based on spatial distance (e.g. with some threshold $\mathcal{V}$) -- this way, the planning can be broken up into parallel groups where inter-group interactions have been intentionally ignored. The question is then,  \emph{(*) what is the ``correct way'' to ignore interactions based on distance to achieve parallelization without sacrificing the performance of RHCR?}

\textbf{Failure of Naive Decentralization:} Naively, it appears that the conventional fully decentralized method, where all agents are computed independently in parallel using the information within visibility $\mathcal V$ of the agents (without communication), would be the answer to the question.
However, in \cref{cent_dec}, we show a counter example.
Notice that a centralized scheme could move the agents to their goals in $k$ steps but for a decentralized scheme it would take at least $2k$ timesteps because agents cannot move before the agent in front has moved (at least $k$ iterations for agent $k$ to move its first step).

Therefore, increasing $\mathcal V$ for the independent decentralized setting is clearly not analogous to increasing the time horizon $H$ in $SOLVER$, which would approach the optimal solution as $H \rightarrow \infty$.

\textbf{Group Decentralized Setting:} Fortunately, it turns out that L-MAPF problem can be modeled as a recently proposed Locally Independent Multi-Agent MDP (LI-MDP) model \cite{deweese2024locally}, which also presents a theoretically grounded method to divide the agents into parallel groups without sacrificing much performance. The main solution concept is the group decentralized setting.

\begin{definition}[group decentralization]
    Given visibility radius $\mathcal{V}$ and a distance metric, for agents $\mathcal N$ treated as vertices, create a graph there is an edge between two agents if they are within distance $\mathcal V$. We define groups as the connected components of the graph. Each group is planned independently.
    \label{gd_def}
\end{definition}
\noindent Notice this resolves the issues faced in \cref{cent_dec}. See \cref{gd_empty} for a visual of a group decentralization partitioning.

In \cref{rhcr_theory}, we will formally formulate L-MAPF as a discounted LI-MDP problem and provide a theoretical guarantee for RHCR. Then, we propose our full extension to the RHCR framework that uses this group decentralized setting (GD-RHCR) in  \cref{gdrhcr}, then formally prove this new framework non-trivially satisfies a similar guarantee to RHCR in \cref{gdrhcr_theory}. We show the performance of this method in practice in \cref{simulations}.

\begin{figure}
    \centering
    \includegraphics[width=0.4\linewidth]{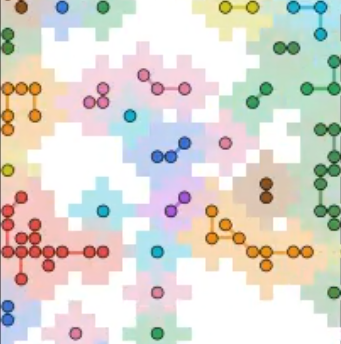}
    \caption{Group decentralized partitioning for $\mathcal V = 2$.  Groups are connected by lines and color coded (colors are reused).}
    \label{gd_empty}
\end{figure}

\section{Theory: LI-MDP Analysis of RHCR}
\label{rhcr_theory}

In this section we will set up the L-MAPF as a discounted reward LI-MDP model. RHCR when viewed in a LI-MDP environment will turn out to be a near optimal method.

\textbf{Conventional L-MAPF Model:} $n$ agents are placed in a grid environment with open spaces $\mathcal X$. At each step every agent $i$ may use movements $\mathcal D = \{UP, DOWN, LEFT, RIGHT, STAY\}$ to transition from $x_i^{prev}\in \mathcal X$ to the same or an adjacent space $x_i \in \mathcal X$ if no other agent moves to the space at the same time $x_i \neq x_j\forall j \in \mathcal N$  (vertex collision) and agents do not interchange locations $\neg(x_i^{prev} = x_j \wedge x_j^{prev} = x_i),  \forall j \in \mathcal N$ (edge collision). Each agent $i$ is assigned a list of goal locations $g_i^1, g_i^2, \ldots$ to be navigated to in sequence and the metric of evaluation is number of goals achieved per unit time referred to as the throughput.

\textbf{L-MAPF as Multi-Agent MDP:}
To model the L-MAPF problem an MDP with $n$ agents, we will assume the state space for every agent $i$ is $\mathcal S_i = \mathcal X \times D \times G$ where $\mathcal D$
is interpretted as the direction the agent came from
and $G \subset \mathcal X$ is a set of goal states. This creates the joint state $\mathcal S = \mathcal S_1 \times \mathcal S_2 \ldots\times \mathcal S_n$. The action space for every agent $i$ will be the standard MAPF grid movements $\mathcal A_i = \mathcal D$ with joint action space $\mathcal A = \mathcal A_1 \times \mathcal A_2, \ldots \mathcal A_n$. The transition function
will transition an agent $i$'s state $(x_i, d_i, g_i)$ deterministically
with action $a_i$ by updating $x_i$ based on the action taken and adjacent positions in $\mathcal X$, update $d_i$ with the direction of their previous locations, and set a new goal whenever $x_i = g_i$ (keeping the same $g_i$ otherwise).
If the spaces are blocked by obstacles, they will remain in their current positions.

Lastly, the reward function will be the sum of a positive reward $r_{goal} > 0$ for every agent $i$ at their goal ($g_i = x_i$) and negative reward (penalty) $p_{collide} < 0$ for every agent $i$ involved in a vertex or edge collision. That is $r((x, d, g), a) = \sum_i \big( I[x_i = g_i]r_{goal} + \sum_j I[collision(x_i, x_j)]p_{collide}\big)$. Here $collision$ refers to any vertex or edge collisions where $x_j^{prev}$ is inferred from the current positions $x$ and the directions they came from $d$. We will assume $\lvert p_{collide}\rvert$ is much larger than $\lvert r_{goal}\rvert$. Without loss of generality, we will also assume $\lvert r\rvert \leq 1$ for bounded reward.

Our metric for evaluation will be $V^\pi(s) = \mathbb E_{\tau \sim \pi\lvert_s} [\sum_t \gamma^t r(s(t), a(t))]$ where $\gamma \in (0,1)$ and the trajectory $\tau \sim \pi\lvert_s$ is the trajectory $(s(t), a(t))$ starting at $s(0) = s$ taking the policy $\pi: \mathcal S \rightarrow \Delta(\mathcal A)$ at each timestep. This discounted model is helpful for the lifelong setting as the sum of $r_{goal}$ terms may diverge as $t \rightarrow \infty$ but the discounted rewards will remain finite. For analysis purposes, for a finite horizon policy $\pi_{finite}(s) = \{\pi_0, \pi_1, \ldots, \pi_{H - 1}\}$ we will define the finite horizon value function $V_h^{\pi_{finite}}(s) = \mathbb {E}_{\tau \sim \pi_{finite}\lvert_s} [\sum_{t = h}^{H - 1} \gamma^{t - h} r(s(t), a(t))].$

Denote $\pi^*$ as the optimal stationary infinite horizon policy and the optimal discounted finite horizon policy for horizon $H$ as $\pi_{finite}^* = \{\pi^*_0, \pi^*_1, \ldots, \pi^*_H\}$. We will denote $V^*(s) = V^{\pi^*}(s)$ and $V_h^*(s) = V^{\pi^*_{finite}}_h (s)$.

\textbf{L-MAPF as LI-MDP:}
It turns out that these assumptions made on the multi-agent MDP fit into the LI-MDP model \cite{deweese2024locally} except for a minor discrepancy with edge collisions. Agents move independently in a space with a distance metric (shortest path) and move at most 1 space at a time with local rewards and penalties for interactions. See \cref{limdp_fit} for a formal introduction to the LI-MDP and how this model is a near special case of LI-MDP.

Modeling the MAPF problem as a LI-MDP has two consequences (1) leveraging techniques from the LI-MDP theory, we prove a near optimality guarantee for RHCR  below (2) LI-MDP theory also suggests a near optimal group decentralized policy structure (see \Cref{gd_def}), which motivates our GD-RHCR framework.

\textbf{Near Optimality of RHCR:}
Under this model, if we assume that $SOLVER$ outputs a trajectory $\tau^{SOLVER}_s$ is $\epsilon$-near optimal $V^*_0 - \mathbb E_{\tau^{SOLVER}_s}[\sum_t \gamma^t r(s(t), a(t))] \leq \epsilon$ then,
\begin{theorem}
    Let $\tau^{RHCR}_s$ denote the trajectory generated by RHCR with planning horizon $H$ and replan window $\alpha \in [1,H]$ starting at the state $s$. Let $V^{RHCR}(s)$ the corresponding sum of discounted rewards for $\tau^{RHCR}_s$. Then, the following result holds:

    \hspace{7ex}$V^*(s) - V^{RHCR}(s) \leq \frac{2\gamma^H}{(1 - \gamma)^2} + \frac{\epsilon}{1 - \gamma}$
    \label{rhcr_guarantee}
\end{theorem}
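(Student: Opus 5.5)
The plan is to use the standard receding-horizon telescoping argument. The guarantee will depend only on the discounting and on the boundedness $\lvert r\rvert \le 1$; the LI-MDP structure is not needed here.

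\textbf{Step 1 (horizon truncation).} For any state $s$, compare the infinite-horizon optimum with the $H$-step optimum. Truncating $\pi^*$ to $H$ steps changes the discounted return by at most $\sum_{t\ge H}\gamma^t = \gamma^H/(1-\gamma)$. Conversely, any $H$-step plan can be followed by an arbitrary continuation, which costs at most another $\gamma^H/(1-\gamma)$. Together these give
\[
\lvert V^*(s) - V^*_0(s)\rvert \le \frac{\gamma^H}{1-\gamma}.
\]

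\textbf{Step 2 (one-step decomposition).} Treat RHCR as a nonstationary policy that replans at the times $k\alpha$. Write $s_t$ for the RHCR state at time $t$ and $r_t$ for the reward it collects. The key quantity is the per-step loss $\delta_t = V^*(s_t) - r_t - \gamma V^*(s_{t+1})$. Telescoping then gives $V^*(s) - V^{RHCR}(s) = \sum_t \gamma^t \delta_t$, so it suffices to bound each $\delta_t$ uniformly by roughly $2\gamma^H/(1-\gamma) + \epsilon$. Summing with weights $\gamma^t$ then yields the factor $1/(1-\gamma)$ and the claimed bound.

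\textbf{Step 3 (bounding $\delta_t$ along a plan).} Let $t$ lie in the window that starts at $k\alpha$, where the plan $\tau^{SOLVER}_{s_{k\alpha}}$ is executed. Since that plan is $\epsilon$-optimal for $V^*_0(s_{k\alpha})$, Step 1 shows its discounted $H$-step return $R$ satisfies $R \ge V^*(s_{k\alpha}) - \gamma^H/(1-\gamma) - \epsilon$. Follow the plan for the first $j = t-k\alpha$ steps. Its remaining $H-j$ steps, followed by the optimal continuation, form a candidate policy from $s_t$, so $V^*(s_t)$ is at least the plan's tail return minus a truncation error. In the same way, the tail from $s_{t+1}$ lower-bounds $V^*(s_{t+1})$.

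The main obstacle is the position inside a window. For $j>0$ the remaining plan horizon $H-j$ is shorter than $H$, so the truncation error $\gamma^{H-j}/(1-\gamma)$ seems to grow. The fix is to avoid measuring slack per step against a degrading horizon. Instead, I would compare whole windows: telescope over windows rather than single steps. For window $k$, the loss is $V^*(s_{k\alpha}) - [\text{reward over the window}] - \gamma^\alpha V^*(s_{(k+1)\alpha})$. I would bound this by $\epsilon + 2\gamma^H/(1-\gamma)$, using two facts. First, the plan's return splits into the executed window reward plus $\gamma^\alpha$ times a tail. Second, that tail, extended by the optimal policy, is dominated by $V^*(s_{(k+1)\alpha})$ up to an error of $\gamma^{H-\alpha}\cdot\gamma^\alpha/(1-\gamma) = \gamma^H/(1-\gamma)$, measured from time $k\alpha$. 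The second $\gamma^H/(1-\gamma)$ comes from Step 1.

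Summing over windows gives
\[
\sum_k \gamma^{k\alpha}\Bigl(\epsilon + \frac{2\gamma^H}{1-\gamma}\Bigr) \le \frac{1}{1-\gamma^\alpha}\Bigl(\epsilon + \frac{2\gamma^H}{1-\gamma}\Bigr) \le \frac{2\gamma^H}{(1-\gamma)^2} + \frac{\epsilon}{1-\gamma},
\]
where the last inequality uses $1-\gamma^\alpha \ge 1-\gamma$. This bound is independent of $\alpha$, as the theorem states.
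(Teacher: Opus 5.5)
Your proof is correct and reaches the stated bound. It shares the paper's window-level telescoping, but it telescopes against a different potential.

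You use $V^*$ at every replan state, so each window pays two infinite-tail errors. One is the truncation gap $V^*(s_{k\alpha})-V^*_0(s_{k\alpha})\le\gamma^H/(1-\gamma)$. The other is the error $\gamma^{\alpha}\cdot\gamma^{H-\alpha}/(1-\gamma)$ from continuing the leftover tail with $\pi^*$. Together they sum to $\frac{1}{1-\gamma^\alpha}\bigl(\epsilon+\frac{2\gamma^H}{1-\gamma}\bigr)$.

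The paper instead telescopes against the solver's finite-horizon values. It writes $V^{RHCR}(s)$ as the initial plan's value minus $\sum_{\ell\ge1}\gamma^{\alpha\ell}\Delta_\ell$. Here $\Delta_\ell$ is the leftover tail of the previous plan minus the value of the new plan at $s(\alpha\ell)$. It then pads that tail with $\alpha$ arbitrary steps, so it becomes an $H$-step candidate from $s(\alpha\ell)$. The per-window comparison is therefore finite-horizon against finite-horizon and costs only $\epsilon+\gamma^{H-\alpha}\frac{1-\gamma^\alpha}{1-\gamma}$. The gap between $V^*$ and $V^*_0$ is paid once, at the initial state.

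This buys something. If you keep the factor $1-\gamma^\alpha$ (the paper drops it), the padding errors sum to a single $\gamma^H/(1-\gamma)$. The paper's route then actually supports $\frac{2\gamma^H}{1-\gamma}+\frac{\epsilon}{1-\gamma^\alpha}$, whereas the $(1-\gamma)^{-2}$ is intrinsic to your choice of potential. Your route, in exchange, is the standard single-benchmark argument and needs no separate final comparison with $V^*$.

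In the write-up, drop the per-step plan of Step 2. An action taken $j$ steps into a window is chosen with only $H-j$ steps of lookahead, so a uniform bound on $\delta_t$ of order $\epsilon+\gamma^H/(1-\gamma)$ is not available. That is exactly why only the window-level telescoping goes through.
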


Aside from the error from the $\epsilon$-optimality of $SOLVER$, we can see that the suboptimality decreases exponentially fast with the increase in $H$. Also, perhaps counterintuitively, we see that the guarantee is independent of the replan window $\alpha$ establishing a theoretical justification for using a replan window to improve on computation time from \cite{li2021lifelong}. Note however, that a smaller $\alpha$ can be still be useful in practice to reduce $V^*(s) - V^{RHCR}(s)$ within $[0, \frac{2\gamma^H}{(1 - \gamma)^2} + \frac{\epsilon}{1 - \gamma}]$.

\textbf{Why group decentralized?}
It is shown in \citet{deweese2024locally} that group decentralized policies can also satisfy these type of near optimality bounds, so we expect a group decentralized version of RHCR (\cref{algorithm}) to satisfy a similar guarantee (see \cref{gdrhcr_theory}). The primary intuition will be that agents within different groups theoretically cannot collide within $\lfloor \frac{\mathcal V}{2}\rfloor$ steps (see \cref{dtl}). This means that agents in different groups can safely be ``ignored'' in the computation when considering a finite horizon. This will also be the explanation to why the group decentralized scheme is the answer to question (*) posed in \cref{motivation}.

After introducing our algorithm in the following section, we show in \cref{gdrhcr_theory} the theoretical results for our group decentralized method.

\renewcommand{\algorithmiccomment}[1]{\hfill // #1}
\begin{algorithm}[t]
\caption{GD-RHCR}
\label{algorithm}
\begin{algorithmic}[1]
\STATE $\lambda_i \gets \varnothing$ for each agent $i \in \mathcal{N}$
\FOR{$t = 0, 1, 2, \dots$}
  \STATE $\mathcal{G} \gets \textsc{Connected Components}(\mathcal{N}, \mathcal{V})$\label{bfs_line}
  \STATE $\bar{\lambda} \gets \lambda:=(\lambda_i)_{i\in\mathcal{N}}$\label{line_copy}
         \COMMENT{snapshot end-of-previous-step plans}
  \FORALL{groups $g \in \mathcal{G}$ \textbf{in parallel}}
    \IF{$|g| \ge K_{\text{th}}$}
      \STATE $P \gets \textsc{Solver2}(g)$\label{line_early_fallback}
             \COMMENT{early termination}
      \STATE $\lambda_i \gets P$ for all $i \in g$
    \ELSIF{$t = 0$ \OR $\exists\, i,j \in g:\ \lambda_i, \lambda_j$ not from same solve
           \OR $(\lambda_i)_{i\in g}$ exhausted  }
      \STATE $C \gets \textsc{SoftConstraints}(\{\bar{\lambda}_k : k \notin g\},\ c_{\text{soft}})$\label{line_soft}
      \STATE $P \gets \textsc{Solver}(g,\ H,\ C,\ \tau)$\label{line_solver}
             \COMMENT{lazy recompute}
      \IF{$P = NULL$}
        \STATE $P \gets \textsc{SOLVER2}(g)$\label{line_timeout_fallback}
               \COMMENT{solver timeout}
      \ENDIF
      \STATE $\lambda_i \gets P$ for all $i \in g$
    \ELSE
      \STATE \textbf{continue}
             \COMMENT{all agents agree: reuse cached plan}\label{line_continue}
    \ENDIF
  \ENDFOR
  \STATE advance each $i \in \mathcal{N}$ one step along $\lambda_i$
\ENDFOR
\end{algorithmic}
\end{algorithm}
\section{Our Method: Group Decentralized RHCR}
\label{gdrhcr}

Our algorithm will consist of four components and is summarized in \cref{algorithm}. Firstly, we will identify the groups from the group decentralized scheme described earlier. Secondly, our algorithm will have a ``as needed'' lazy evaluation scheme which will dynamically adjust the reevaluation window $\alpha$ for groups of agents.  Next, to improve the accuracy of our parallel computation across groups, we will add soft constraints for out of view agents in other groups in our lower level planner. Lastly, the algorithm will include a fast secondary solver to rapidly solve large congestion groups.

\textbf{Group Decentralized:}
We begin by identifying the partitions in the group decentralized setting by identifying the connected components as in definition \ref{gd_def}.
This is performed at each timestep shown in line \ref{bfs_line} of the algorithm.

In practice, we use the shortest path distance and the union find data structure to hold the connected components. The shortest path will become longer near obstacles, allowing us to take advantage of map topology (such as warehouse-10-20-2-1 in \cref{simulations}). Agents remain sparser and the grouping scheme is more likely to break apart into smaller groups.

\newlength{\plotwidth}
\newlength{\plotheight}
\setlength{\plotwidth}{0.12\linewidth}
\setlength{\plotheight}{2.2cm}

\newlength{\mapwidth}
\newlength{\mapheight}
\setlength{\mapwidth}{0.4\linewidth}
\setlength{\mapheight}{2.2cm}

\newcommand{\mapentry}[2]{
  \begin{minipage}[t]{\plotwidth}\centering
    \includegraphics[width=\mapwidth,height=\mapheight]{#1}\\[-0.1em]
    {\tiny #2}
  \end{minipage}
}
\newcommand{\plotentry}[1]{
  \begin{minipage}[t]{\plotwidth}\centering
    \includegraphics[width=\linewidth,height=\plotheight]{#1}
  \end{minipage}
}
\newcommand{\ylabeltp}{\rotatebox{90}{\hspace{2ex}\small   package throughput }}
\newcommand{\ylabelrt}{\rotatebox{90}{\qquad\small  plan runtime [s]}}
\newcommand{\ylabelwt}{\rotatebox{90}{\small wall time [s]}}

\begin{figure*}[!t]
  \centering
  \setlength{\tabcolsep}{1pt}
\setlength{\plotheight}{3.1cm}
  \setlength{\plotwidth}{0.20\linewidth}
  \begin{tabular}{@{}c@{}cccc@{}}
    &
    \setlength{\mapwidth}{0.21\linewidth}
    \setlength{\mapheight}{2cm}
    \mapentry{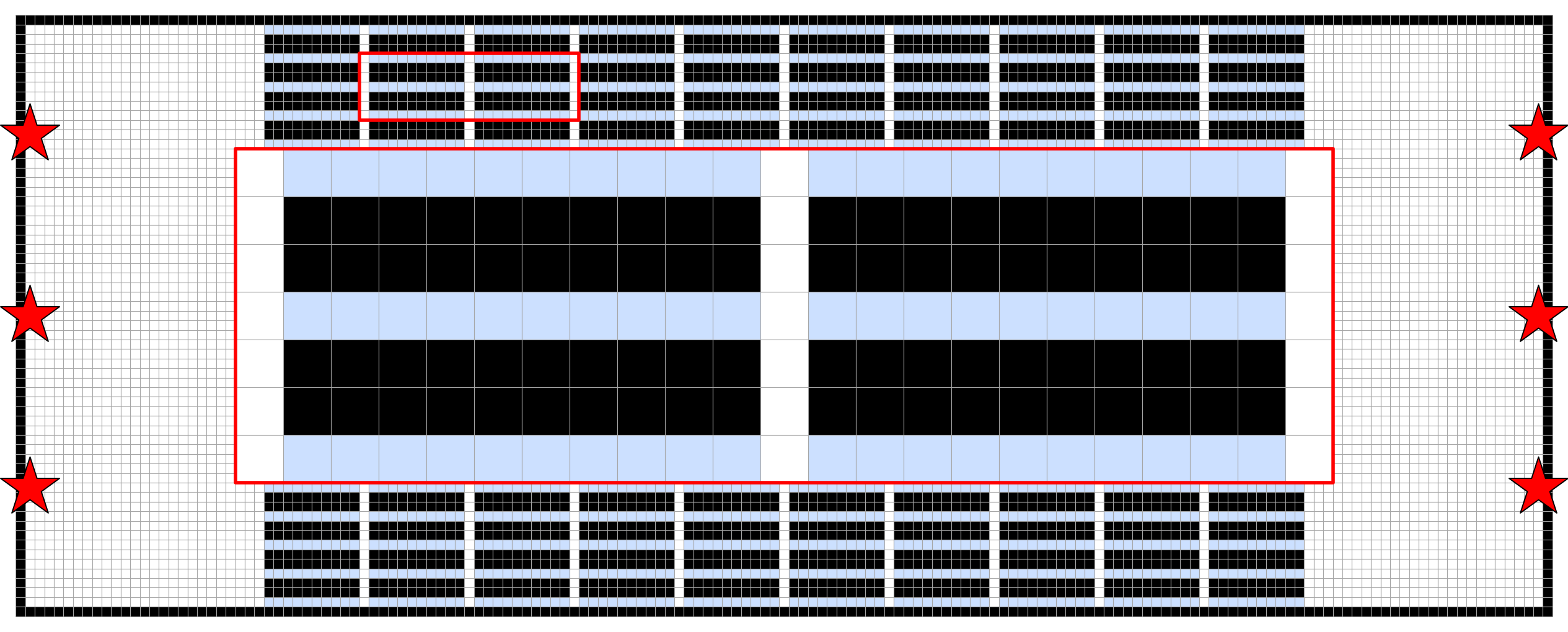}{warehouse-10-20-10-2-1} &
    \setlength{\mapwidth}{0.21\linewidth}
    \setlength{\mapheight}{2cm}
    \mapentry{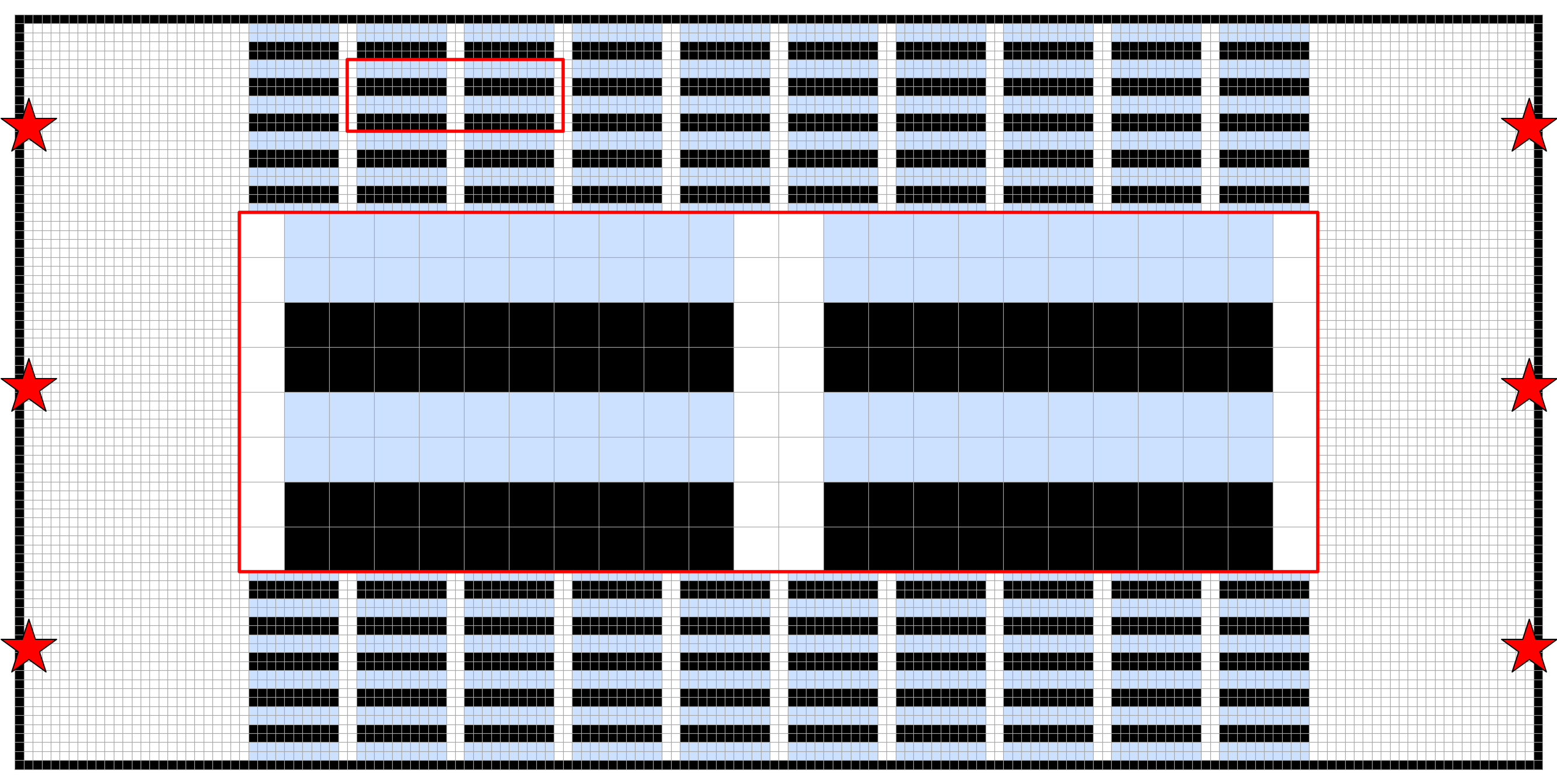}{warehouse-10-20-10-2-2} &
    \setlength{\mapheight}{2cm}
    \setlength{\mapwidth}{0.19\linewidth}
    \mapentry{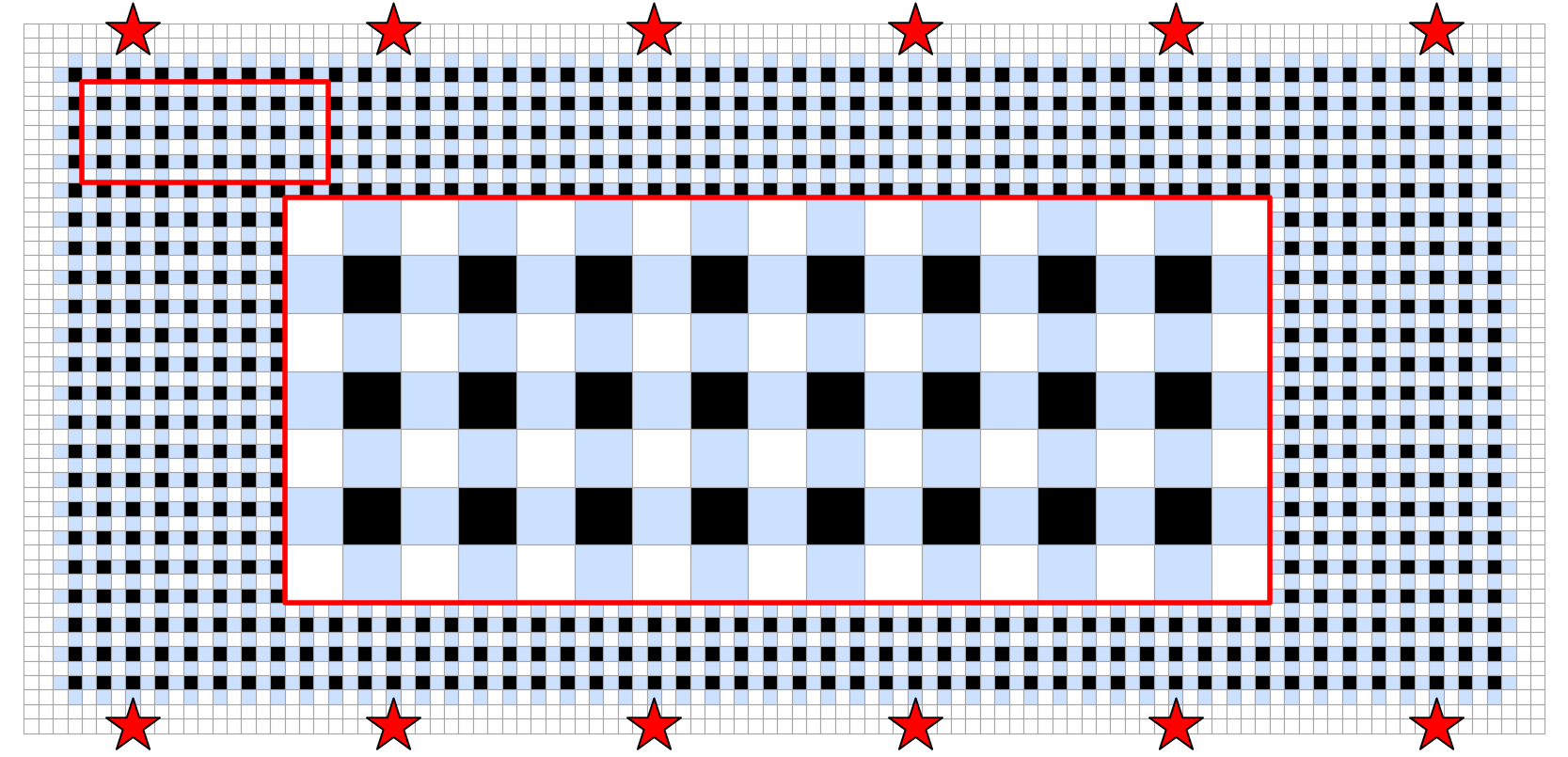}{sortation-1}&
    \setlength{\mapwidth}{0.21\linewidth}
    \setlength{\mapheight}{2cm}
    \mapentry{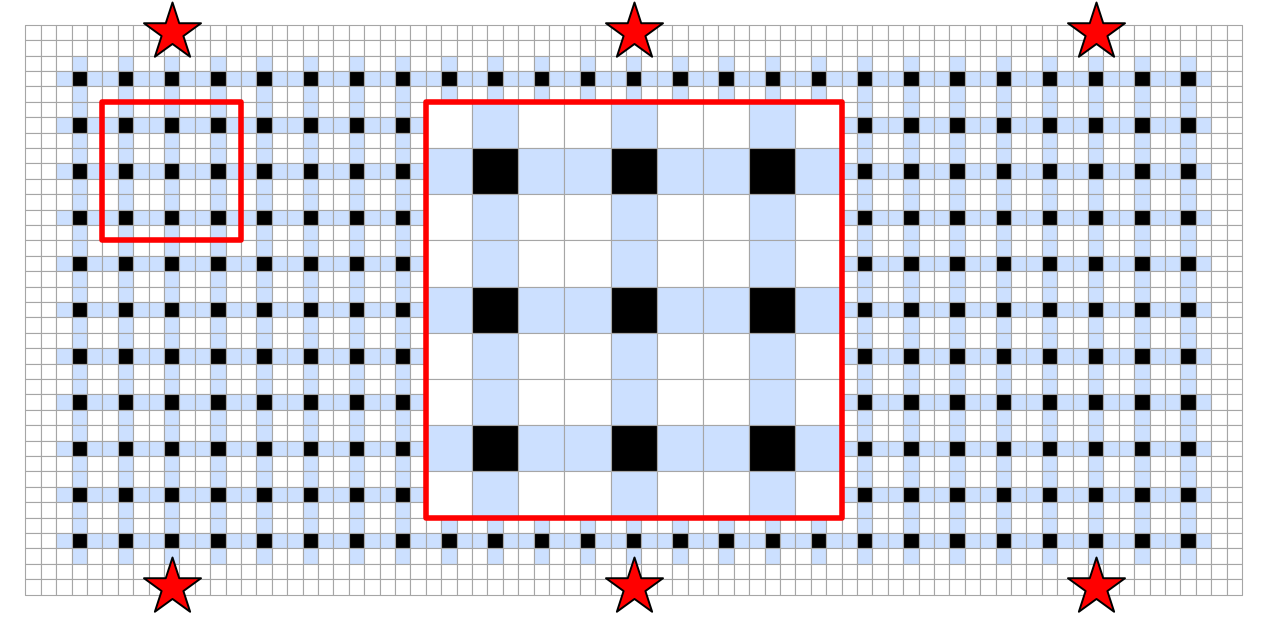}{sortation-2}
    \\[0.8em]
    \ylabeltp &
    \plotentry{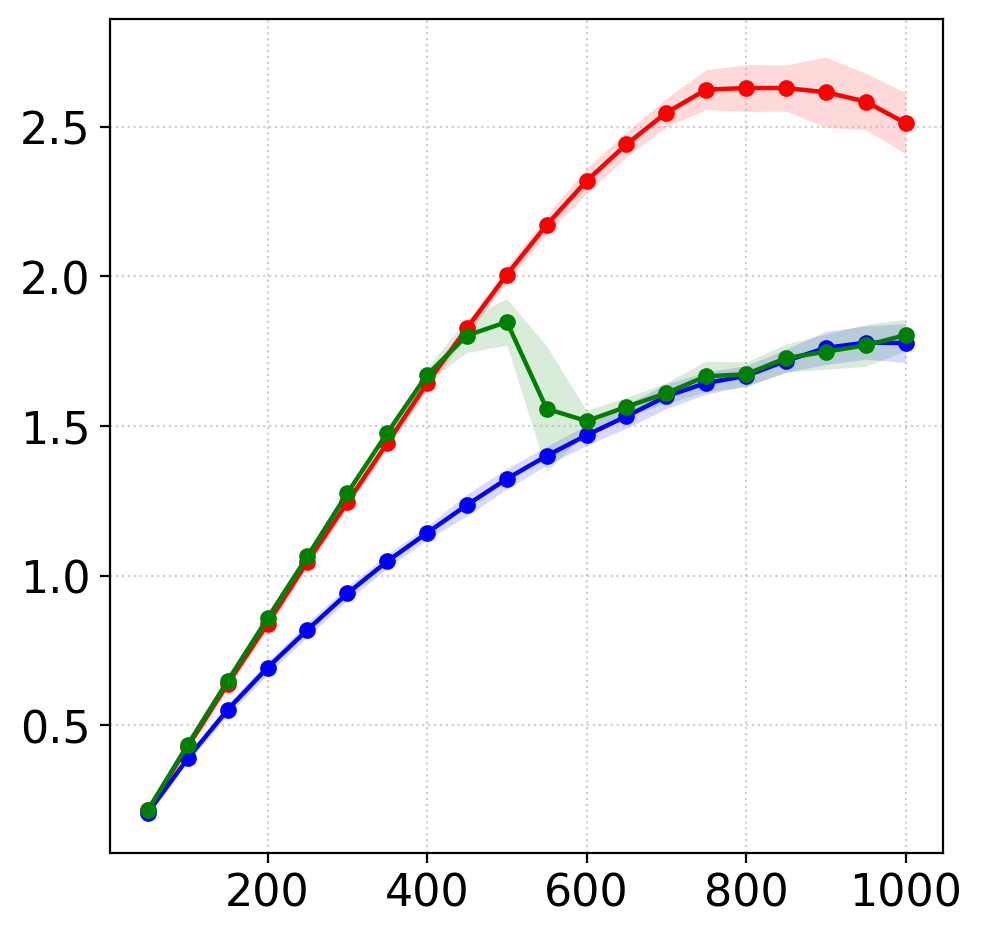} &
    \plotentry{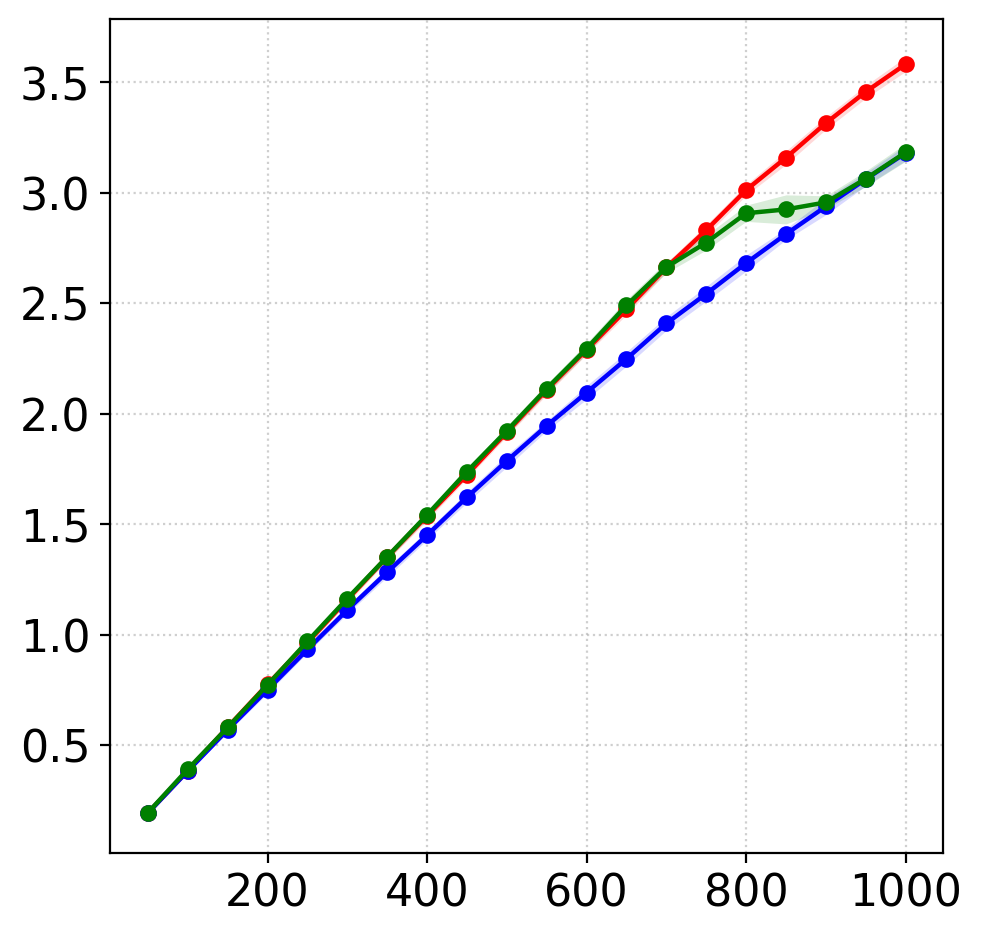} &
    \plotentry{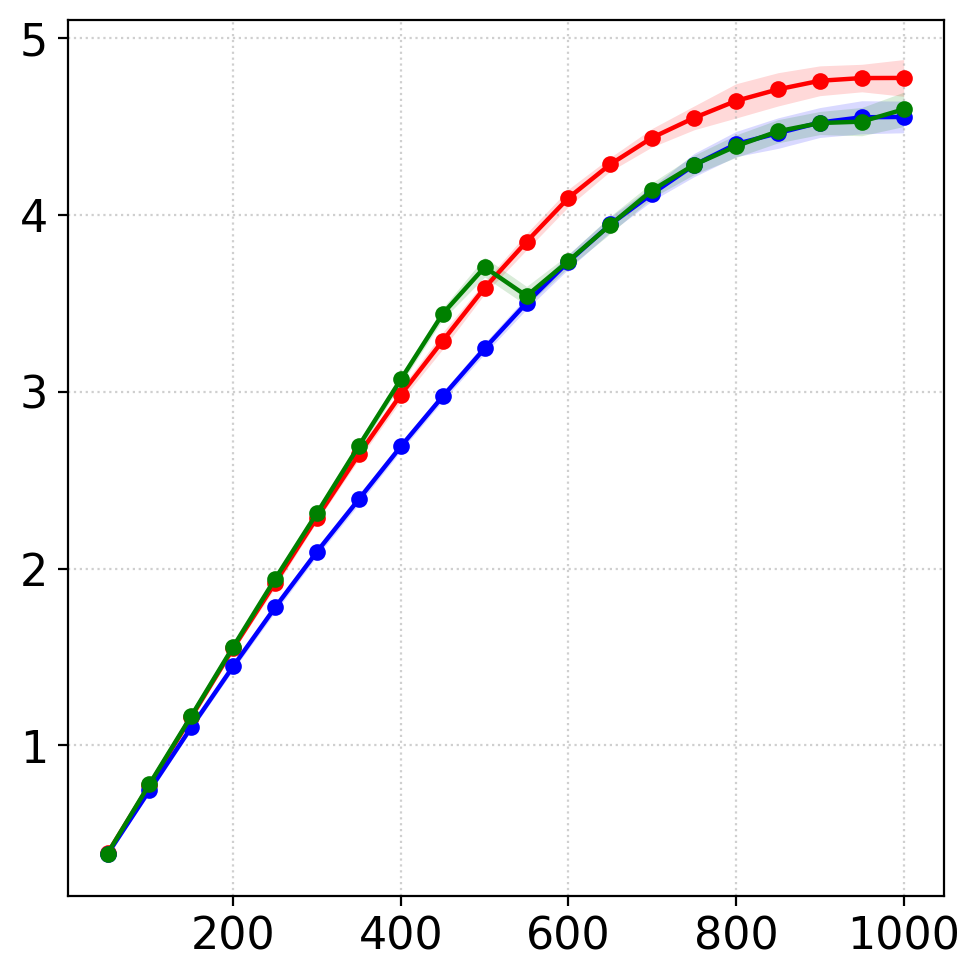} &
    \plotentry{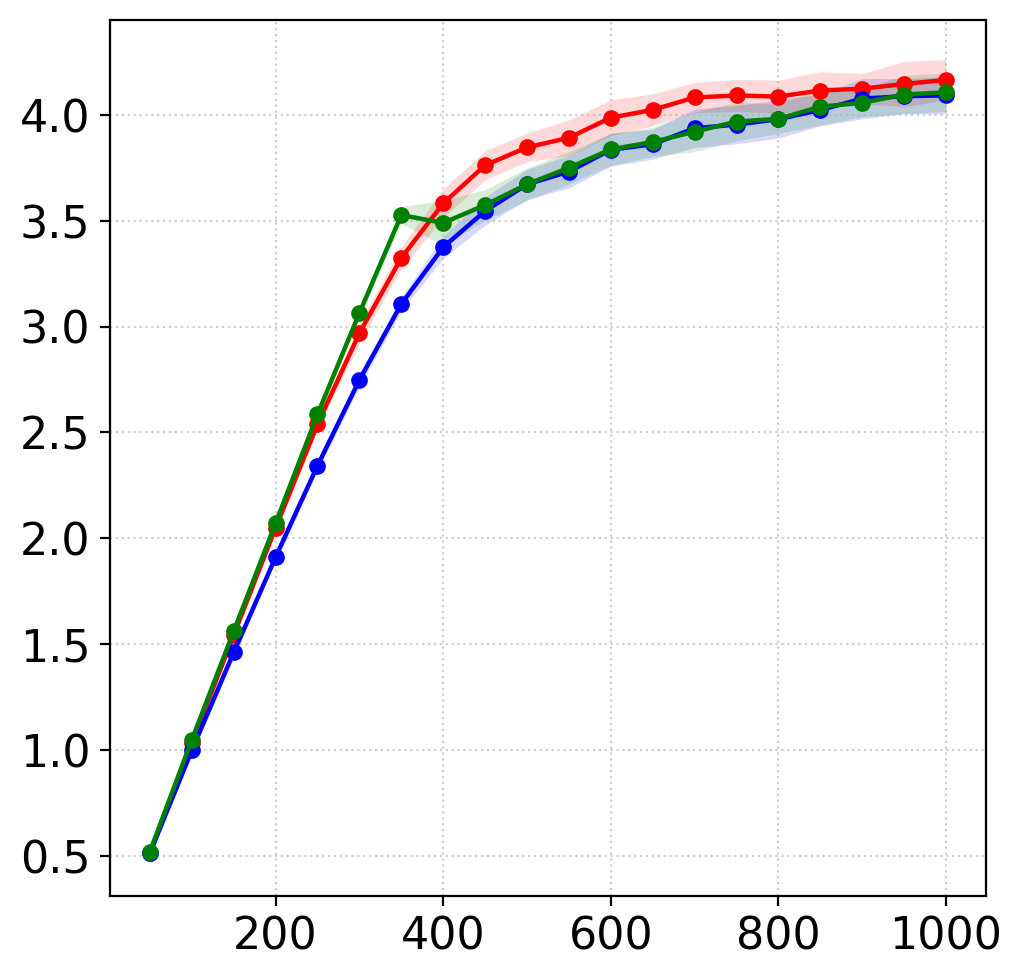}
    \\[0.3em]
    \ylabelrt &
    \plotentry{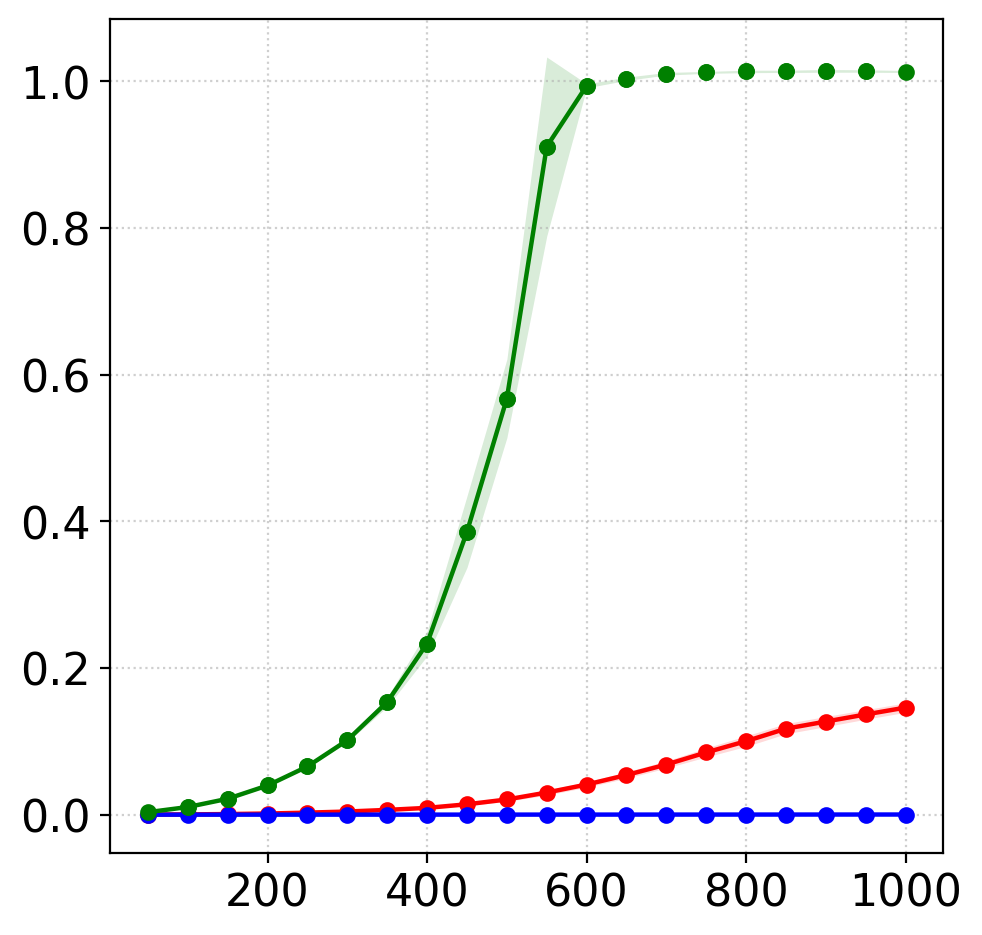} &
    \plotentry{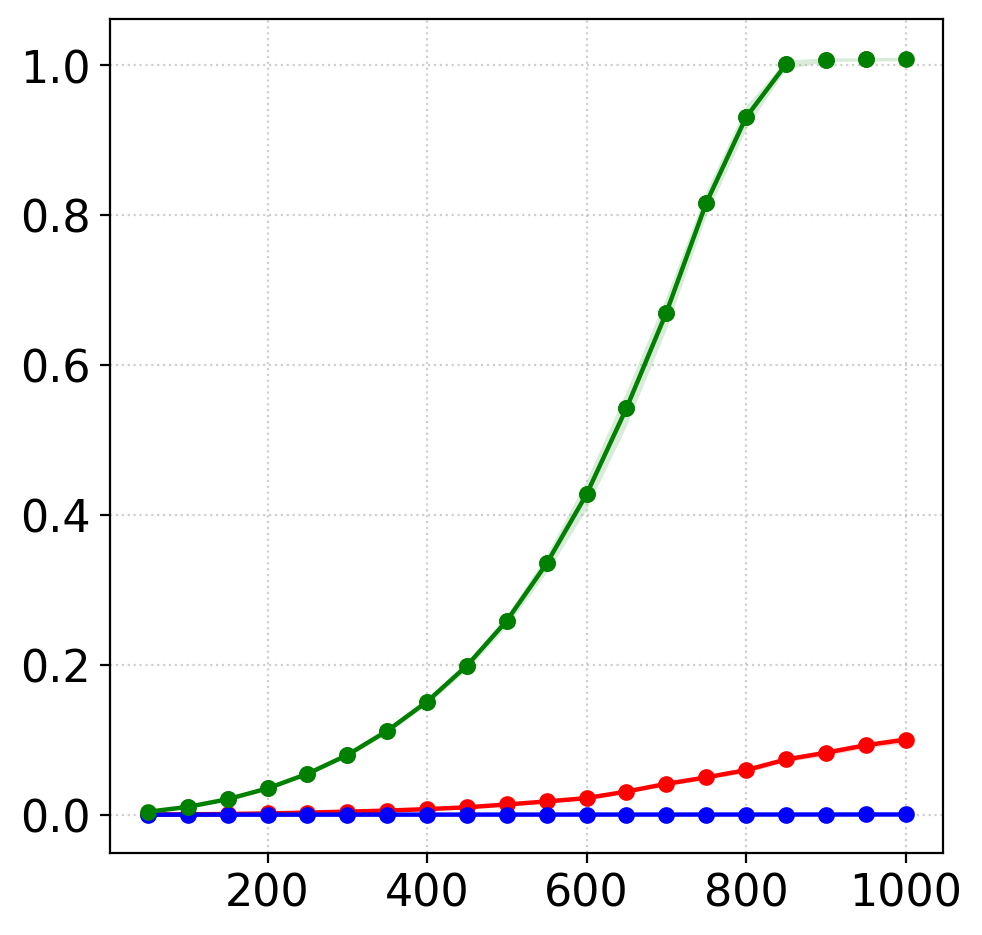} &
    \plotentry{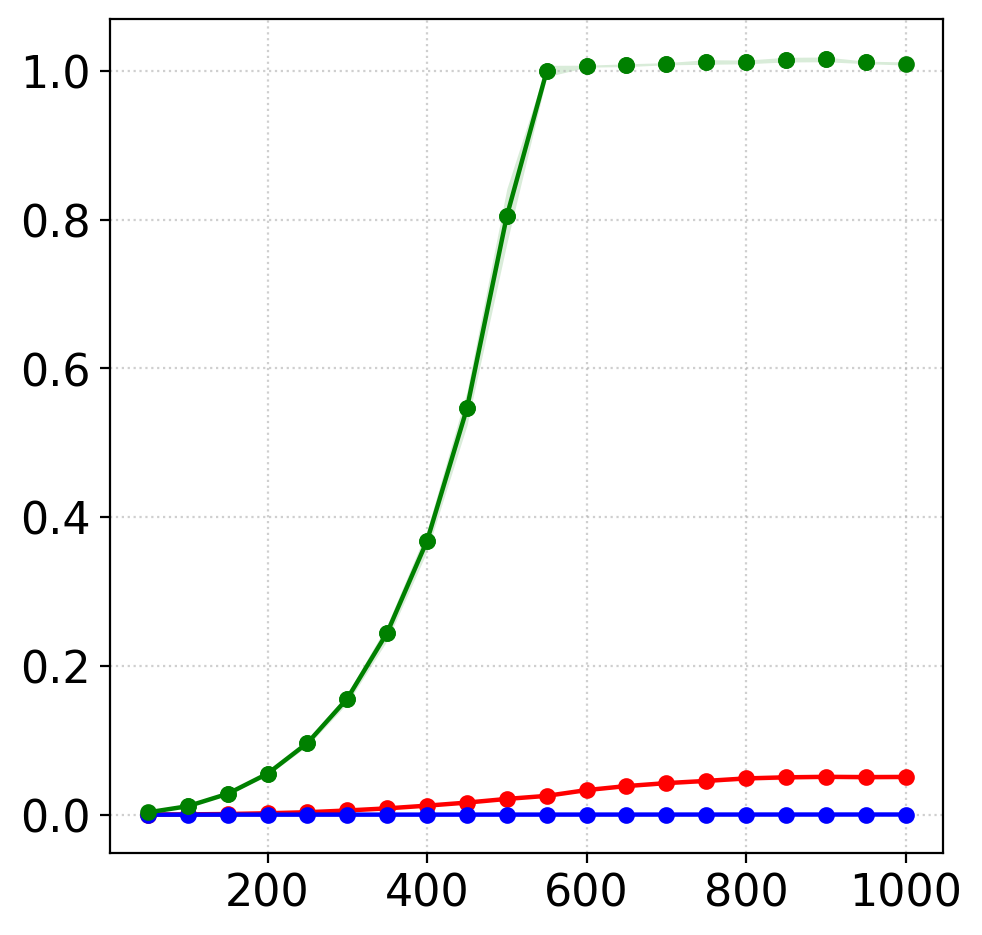} &
    \plotentry{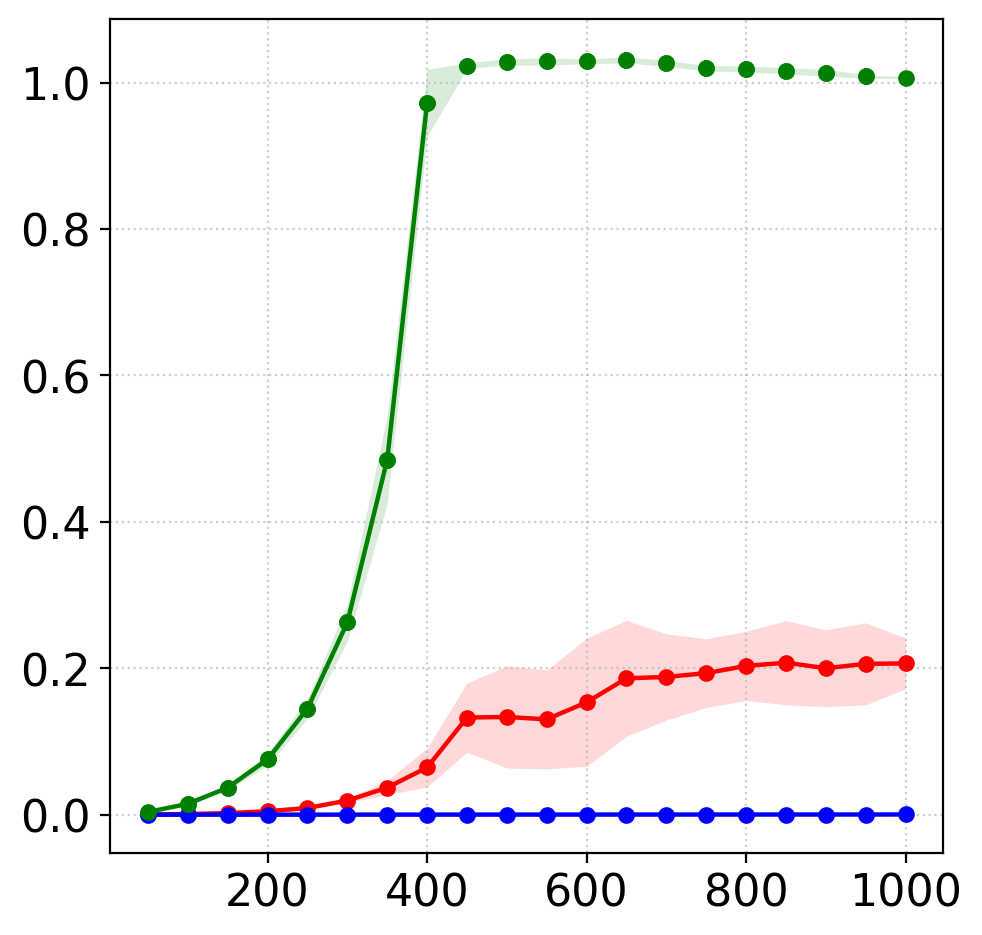}
    \\[0.3em]
    & \multicolumn{4}{c}{agents}
  \end{tabular}
  \caption{MAPD simulations. Agents move between red stars and blue squares. Color coding is \textcolor{red}{GD-RHCR}, \textcolor{green!50!black}{RHCR}, and \textcolor{blue}{PIBT}}
  \label{mapd_graphs}
\end{figure*}

\textbf{Lazy Evaluation:}
Now that the groups have been identified, we will now decide which groups will require computation. At the first timestep of the algorithm, we will run $SOLVER$ with some horizon $H$ on every identified group in parallel (line \ref{line_solver}). Agents are given a reference to the finite horizon ``plan'' $(\lambda_i)_{i \in g}$ computed for their group and move according to their plan. Then, at every subsequent timestep, if all of the agents in a group have references to the same solution, then the agents will act according to the referenced ``plan'' (line \ref{line_continue}). On the other hand, if the agents in the group disagree about the plan, the agents will recompute a new plan which all the agents will now reference (again line \ref{line_solver}). Intuitively, this triggers when a new agent unaccounted for in the plan has just joined the group.

Notice this lazy evaluation scheme can be interpereted as dynamically deciding on a replan window $\alpha$ for different subsets of agents. This can improve the the longevity of our often cheap parallel computations made across groups and staggers the evaluations which significantly reduces the \textit{per iteration} computational load compared to RHCR.

\textbf{Soft Constraints:}
When planning for a group, we will take the plans of all of the agents not within our group \textit{at the end of the previous timestep} (line \ref{line_copy}) and any collision with those plans caused during our compute in $SOLVER$ will introduce a cost $c_{soft}$, applied to the lower level search (line \ref{line_soft}). For example, our simulations we will use PBS with an efficient space-time A* implementation, SIPP, as the lower level planner, and a soft penalty $c_{soft}$ is applied to the lower level SIPP search every time the constraint is violated.

Plans from the previous timestep are used because the parallel computation across groups must begin computation before plans for other groups are settled. However, the staleness of those trajectories is mitigated with lazy evaluation, since not all groups are planning at the same time. Therefore, only the plans of a subset of agents which are planning are stale at any given time.

It is important that $c_{soft}$ remains relatively small as a large cost can increase the computation time of the lower level planner (e.g. SIPP) that is called frequently (see our ablation simulation in appendix \ref{abalations}). In fact, if we use hard constraints ($c_{soft} = \infty$), we may have no solution as this would effectively give all agents in other groups higher priority than agents we are planning for.

\textbf{Secondary Solver:}
Since the large computation time for RHCR can often be caused by a small number of congested regions, we may use our group decentralized mechanism to use different algorithms in different locations. For our algorithm, if the group size reaches some threshold $K_{threshold}$, we will immediately use a fast secondary solver $SOLVER2$ (in our case PIBT) to avoid those expensive computations (line \ref{line_early_fallback}). This allows for the use of our primary $SOLVER$ for routing other groups while a faster secondary solver like PIBT is used to decongest large groups.

This will also have the counterintuitive but highly desirable property that when $K_{threshold}$ is hit more and uses $SOLVER2$ (e.g. highly congested problems), the algorithm will \textit{speed up}. This is in contrast to RHCR which will run $SOLVER$ all the way to timeout every time it fails to find a solution quickly enough before relying on a secondary solver. Therefore GD-RHCR leverages the ``fallback'' as a part of the algorithm rather than a ``failure'' of the method as in RHCR.

\textbf{Flexible choice of solvers.} Our framework allows flexible choice of potentially heterogeneous solvers for different groups, exploiting heterogeneous structures of the groups.
Although for this work, our specific implementation uses a specific ($SOLVER$, $SOLVER2$) pair that mirrors a fallback mechanism similar to RHCR across different groups. In general, this group decentralization sets up a modular framework where different solvers can easily be used for different groups of agents. We believe this will open the door to many variations and new research directions.

\section{Theory: Near Optimality of GD-RHCR}
\label{gdrhcr_theory}

For the theoretical results, we will again consider the L-MAPF problem modeled as a discounted MDP as described in \cref{rhcr_theory}. We will also consider a simplified version of the algorithm where $K_{threshold} = \infty$ so the secondary solver is not used and the soft constraint cost $c_{soft} = 0$ to simplify the analysis. In practice $c_{soft}$ will be small and incorporating estimations of out of view often does not change the overall theoretical guarantee \cite{deweese2025thinking}.

Again assuming that $SOLVER$ outputs an $\epsilon$-optimal finite horizon policy, we have the following theorem.
\begin{theorem}
    Let $\tau^{GD}_s$ denote the trajectory generated by GD-RHCR with $c_{soft} = 0$, $K_{threshold} = \infty$, and horizon $H$ starting at the state $s$. Let $V^{GD}(s)$ be the corresponding sum of discounted rewards (taking expectation over $\tau^{GD}_s$).  Then, the following result holds:

    \hspace{7ex}$V^*(s) - V^{GD}(s) \leq \frac{6\gamma^{\min(\lfloor\mathcal V /2\rfloor + 1, H)}}{(1 - \gamma)^2} + \frac{\epsilon}{1 - \gamma}$
    \label{gdrhcr_guarantee}
\end{theorem}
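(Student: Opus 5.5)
We will follow the template used for \cref{rhcr_guarantee}: a per-step performance-difference argument. Set $k = \min(\lfloor \mathcal V/2\rfloor + 1, H)$ and write $Q^*$ for the optimal infinite-horizon action-value function. The performance difference lemma gives
\[
V^*(s) - V^{GD}(s) = \sum_t \gamma^t\, \mathbb E_{s(t)\sim \tau^{GD}_s}\big[V^*(s(t)) - Q^*(s(t), a^{GD}(t))\big].
\]
It therefore suffices to show that at every state along the trajectory, the action GD-RHCR takes is near-greedy with respect to $Q^*$, with per-step loss at most $\frac{6\gamma^{k}}{1-\gamma} + \epsilon$. Summing with the factor $\frac{1}{1-\gamma}$ then yields the claim. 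As in \cref{rhcr_guarantee}, the $\epsilon$ term is handled by the $\epsilon$-optimality of $SOLVER$ on each group's finite-horizon problem and propagates additively.

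\textbf{Step 1: decoupling across groups.} We will use the distance-to-interact property (\cref{dtl}). Agents in different connected components are at distance greater than $\mathcal V$, and each agent moves at most one cell per step. Hence agents in distinct groups cannot be involved in a vertex or edge collision within the first $\lfloor \mathcal V/2\rfloor$ steps. The edge-collision discrepancy noted in \cref{limdp_fit} needs a separate check here, since swapping requires adjacency. Because rewards are sums of individual goal terms and pairwise collision terms, the $k$-step truncated joint problem from $s(t)$ splits into a sum of independent per-group truncated problems. So the concatenation of per-group optimal $k$-horizon plans is an optimal $k$-horizon joint plan. Concretely, we will compare three objects: the joint $k$-horizon optimal value $V^*_{0,k}$; the sum of group $k$-horizon values, which are equal by decoupling; and the infinite-horizon $V^*$. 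Each truncation at horizon $k$ costs at most $\frac{\gamma^{k}}{1-\gamma}$ because $|r|\le 1$.

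\textbf{Step 2: handling lazy/stale plans.} This is the main obstacle. At time $t$ a group may not replan; it may follow a plan computed at an earlier time $t_0$ for the same agent set, with horizon $H$ possibly longer than $k$. We will argue as in the RHCR proof that the tail of a plan that was $\epsilon$-optimal at $t_0$ remains near-optimal from $t$, via a Bellman/suffix-optimality argument on the group's truncated problem. The lazy rule guarantees no new agent has joined the group, so the group decomposition at $t$ refines the one at $t_0$. Groups may split, but splitting only preserves decoupling.

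A further difficulty is that the cached plan has a shrinking remaining horizon. We intend to bound this using the fact that in the group subproblem the relevant horizon is $k \le H$. Each of three places — truncation of $V^*$, truncation of the planned value, and mismatch from groups whose plans were made under a different partition — will contribute at most $\frac{2\gamma^{k}}{1-\gamma}$, giving the constant $6$. If the remaining-horizon issue does not close cleanly, the fallback is to rely on the "exhausted" replan trigger in \cref{algorithm}, so that every executed action comes from a plan with at least $k$ steps of lookahead consistent with the current grouping.

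Finally, combining the per-step bound with the performance difference sum gives $V^*(s) - V^{GD}(s) \le \frac{6\gamma^{k}}{(1-\gamma)^2} + \frac{\epsilon}{1-\gamma}$.
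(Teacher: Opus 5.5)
Your Step~1 is sound; it is essentially how the paper compares $V^*(s)$ with the sum of per-group finite-horizon values. Step~2, however, has a genuine gap that your fallback does not close.

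The performance-difference identity requires a bound on $V^*(s(t)) - Q^*(s(t),a(t))$ at \emph{every} step. A uniform bound of order $\gamma^{k}/(1-\gamma)+\epsilon$ is false for \cref{algorithm}. A group whose membership does not change keeps executing its cached plan until it is exhausted. So the action taken $m$ steps after the solve is only the first action of an $(H-m)$-step suffix, and near exhaustion the lookahead is a single step. In fact, since $r$ depends only on the state, the final action of an $H$-step plan does not enter $V_0$ at all. Even an exact $SOLVER$ may therefore choose it arbitrarily (e.g.\ stepping away from the next goal), and the per-step loss there does not decay with $k$.

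Forcing a replan whenever the remaining lookahead drops below $k$ analyzes a different algorithm. The $\epsilon$ term breaks the same way: the suffix of an $\epsilon$-optimal plan is only guaranteed to be $\epsilon\gamma^{-m}$-optimal, so $\epsilon$ does not propagate additively per step. A refined per-step accounting with lookahead-dependent losses charges the myopic loss at each of the final steps of every cached plan, so it picks up an extra $k$-dependent factor instead of the stated constant. Note also that the proof of \cref{rhcr_guarantee} is not a per-step argument. Its independence of $\alpha$ is exactly what per-step accounting cannot reproduce.

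The missing idea is to charge the truncation loss once per \emph{replanning event} rather than once per step. The paper writes $V^{GD}(s)$ as $\sum_{z\in Z(s)}[V_0^{\pi^*_{finite}}]_z(s)$ minus a correction at each time $t$ at which some group replans. That correction is $\gamma^t$ times the difference between the discarded remainders of its agents' old plans (computed $\delta_i^t$ steps earlier) and the value of the new plan. Three facts bound each correction:
\begin{itemize}
\item Each remainder, extended by arbitrary actions to a full $H$-step plan, loses at most $\gamma^{H-\delta_i^t}/(1-\gamma)$, scaled by that agent's reward share.
\item Cross-group collision terms that the old plans ignored can only arise more than $\lfloor\mathcal V/2\rfloor$ steps after the old plan was computed, by Lemma~\ref{dtl}; before agents from different old groups can collide they must merge into one group, which triggers a replan.
\item The ($\epsilon$-)optimality of the new plan dominates the extended remainder.
\end{itemize}
After discounting, each charge is $\gamma^{(t-\delta_i^t)+H}$ or $\gamma^{(t-\delta_i^t)+\lfloor\mathcal V/2\rfloor+1}$ divided by $1-\gamma$. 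Because the plan start times $t-\delta_i^t$ are distinct, the sums are geometric. This gives terms of order $\gamma^{k}/(1-\gamma)^2$ no matter how long plans are reused. Your Step~1 then supplies the final comparison of $V^*(s)$ with $\sum_z[V_0^{\pi^*_{finite}}]_z(s)$.
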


Notice that the exponent $\min(\lfloor\mathcal V /2\rfloor + 1, H)$ captures the duality between the space restriction and time restriction. With the group decentralized scheme (and $c_{soft} = 0$), we will be ignoring agents beyond distance $\mathcal V$ from any agent in the group. Notice that by the Dependence Time Lemma in lemma \ref{dtl}, agents in different groups are not able to collide within $\lfloor\mathcal V /2\rfloor$ timesteps, so if $H = \lfloor\mathcal V /2\rfloor + 1$ computing according to parallel groups will have no consequence on the solution and improves the computation time for free. In practice, $\mathcal V$ and $H$ may not be aligned so the theoretical guarantee depends on the minimum of the two.

Overall, in the same way that the replan window $\alpha$ in RHCR can be motivated through theoretical justification (see \cref{rhcr_theory}), the theory also motivates this group decentralized parallelization scheme as well as the lazy evaluation scheme.

\section{Simulations}
\label{simulations}

 \Cref{algorithm} is simulate across in different maps and settings. We will  plot the behavior between PIBT, RHCR with PIBT fallback, and GD-RHCR with PIBT as $SOLVER2$.

\textbf{Average Plan Time Metric:} There will be important nuance with the computational runtime graphs, where we will  compare \textit{the average planning time} and \textit{assume perfect parallelism}. That is, for GD-RHCR we will plot the average maximum time to complete the parallel groups across that timestep (only a subset of groups will be computing at each timestep).
For RHCR and PIBT, the \textit{average planning time} is just the average of the planning times whenever it plans (RHCR only plans every $1/\alpha$ iterations).

For RHCR, this is a useful metric when there is a uniform computation limit across all timesteps such as in applications with physical robots and move times. However, this metric may not capture this reduced number of overall computations (reduced by $1/\alpha$).
In \cref{appendix_wall}, total wall times are provided which incorporate the imperfect parallelism as well as the reduced frequency of RHCR evaluations. Our method still demonstrates a improvement in this case as well.

\textbf{Maps:} We will consider the Multi-Agent Pickup Delivery (MAPD) setting with 4 maps in \cref{mapd_graphs} where agents move from random pickup and dropoff locations. The sortation-1 and sortation-2 maps are inspired by \cite{li2021lifelong}.

We also consider the random navigation  L-MAPF setting where agents are assigned goals to random open spaces with 6 maps in \cref{random_graphs}. We introduce three more custom maps room-32-32-var1, room-64-64-var1, room-64-64-var2 which are variations on the traditional room maps (room-32-32-4 and room-64-64-8) with uniform openings between the rooms (see \cref{random_graphs}). This is because the specifically placed openings between the rooms in the traditional room maps make it more ``maze like'' and suitable for traditional MAPF rather than testing the topology of the room structure in L-MAPF.

\textbf{Parameters:} All RHCR simulations are run with PBS as the main solver with PIBT fallback and $\alpha = 5$,$H = 20$. Simulations for GD-RHCR will primarily use $c_{soft} = 1$, $H = 20$, $\mathcal V = 2$ and a threshold that depends on the agent count $K_{thresh} = \min(\lfloor 0.1k\rfloor, 20)$ (sortation-2 uses $K_{thresh} = \min(\lfloor 0.1k\rfloor, 50)$). We use 30 seeds with 3 in parallel for 500 timesteps on a 13th Gen Intel(R) Core(TM) i9-13900HX. Plotted is the average with 1-SD bands.

\subsection{Results}

The outcomes of our MAPD simulations are shown in \cref{mapd_graphs} and random navigation simulations in \cref{random_graphs}. See appendix \ref{abalations} and \ref{group_stats} for abalations and group statistics respectively.

\setlength{\mapwidth}{0.36\linewidth}
\setlength{\mapheight}{2.5cm}

\renewcommand{\ylabeltp}{\rotatebox{90}{\hspace{5ex}\small  throughput }}
\begin{figure*}[!ht]
  \centering
  \setlength{\tabcolsep}{1pt}
  \setlength{\plotwidth}{0.16\linewidth}
  \setlength{\plotheight}{2.8cm}
  \begin{tabular}{@{}c@{}cccccc@{}}
    &
    \mapentry{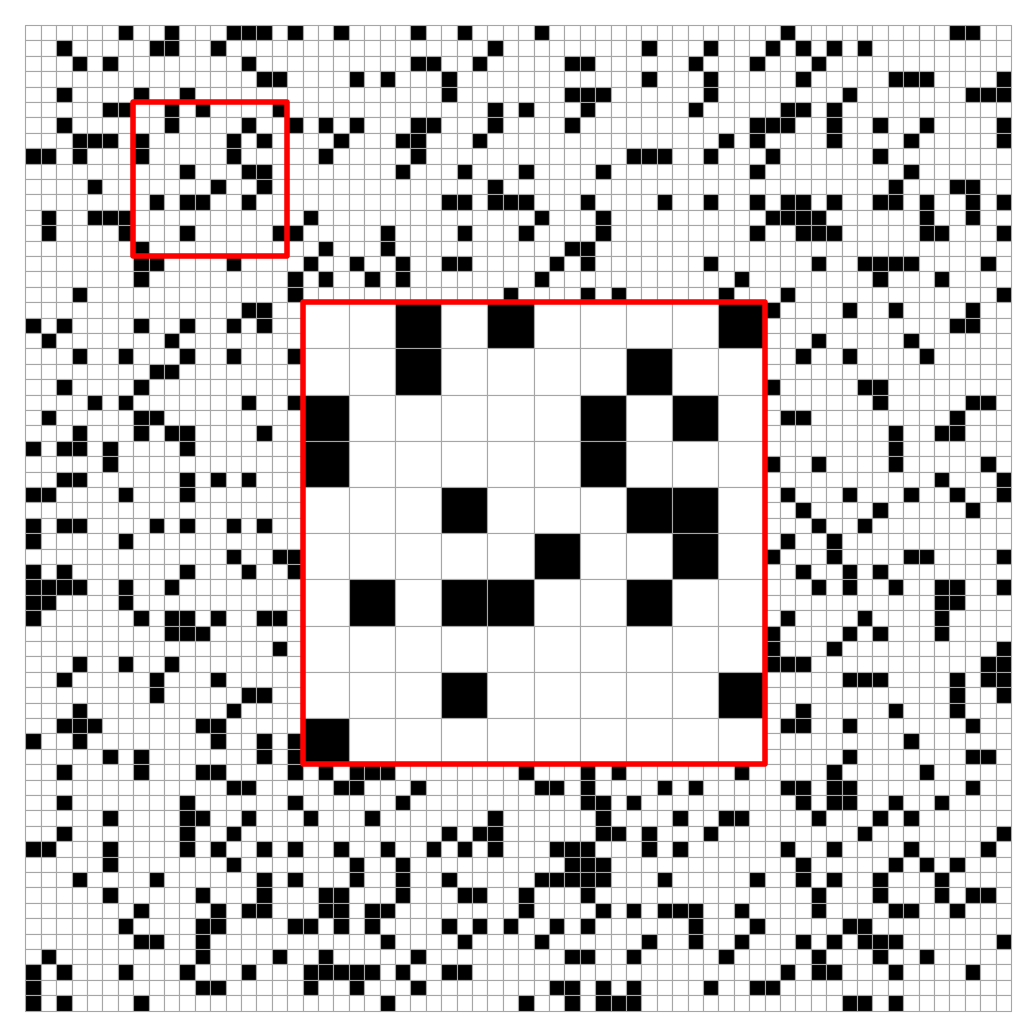}{random-64-64-20} &
    \mapentry{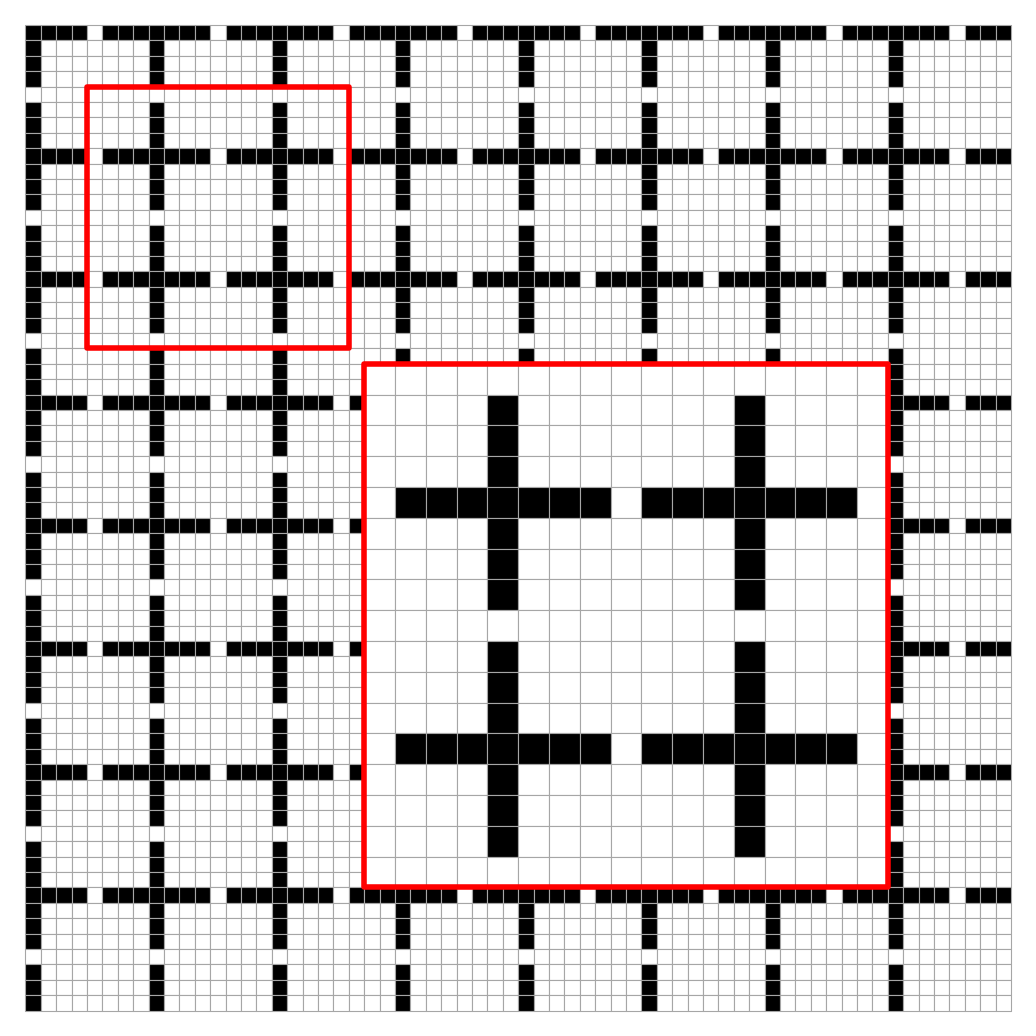}{room-64-64-var1} &
    \mapentry{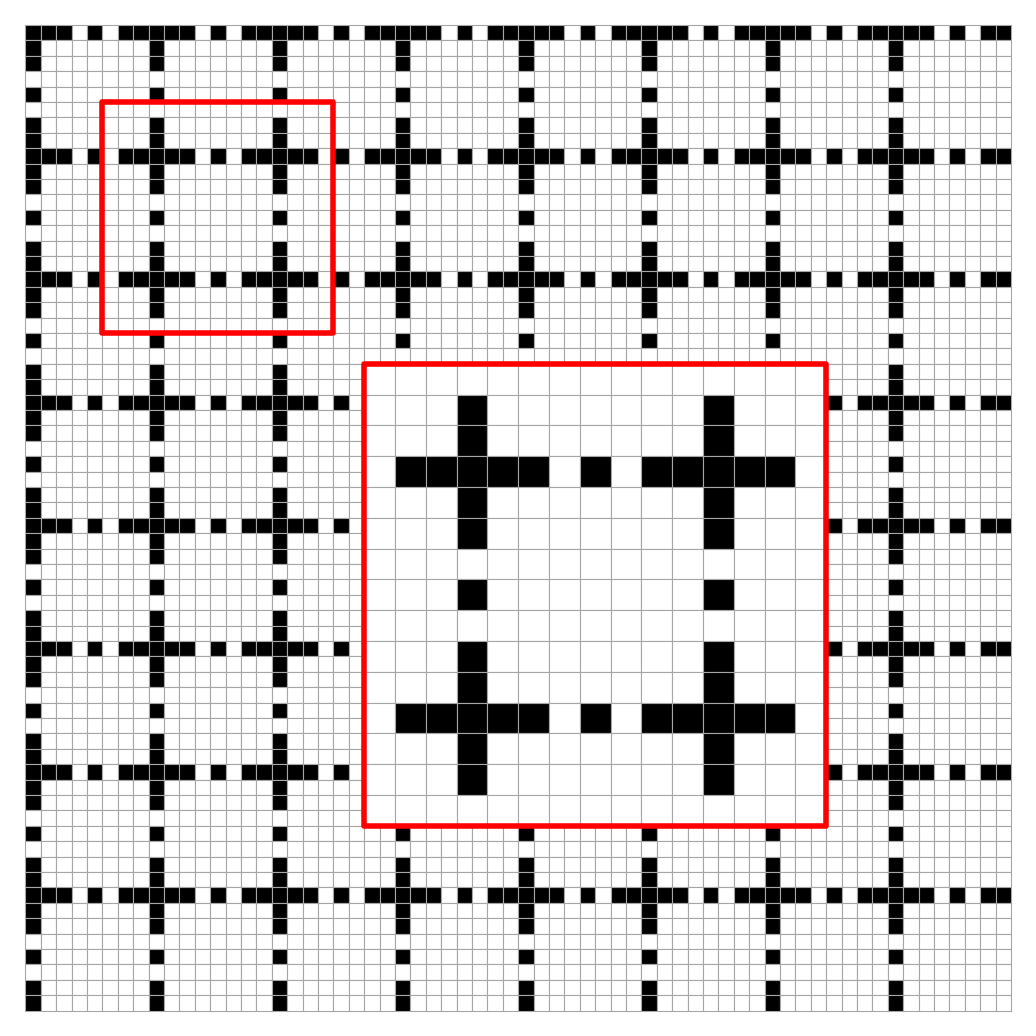}{room-64-64-var2}&
    \mapentry{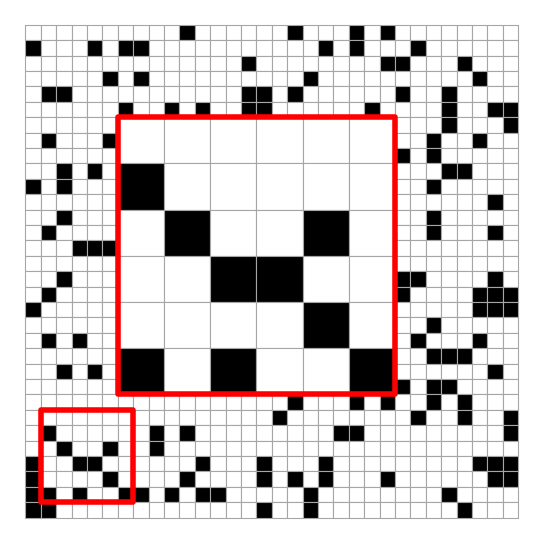}{random-32-32-20} &
    \mapentry{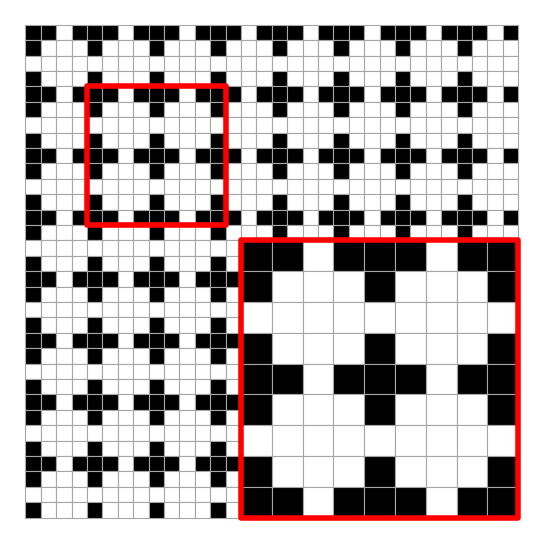}{room-32-32-var1} &
    \mapentry{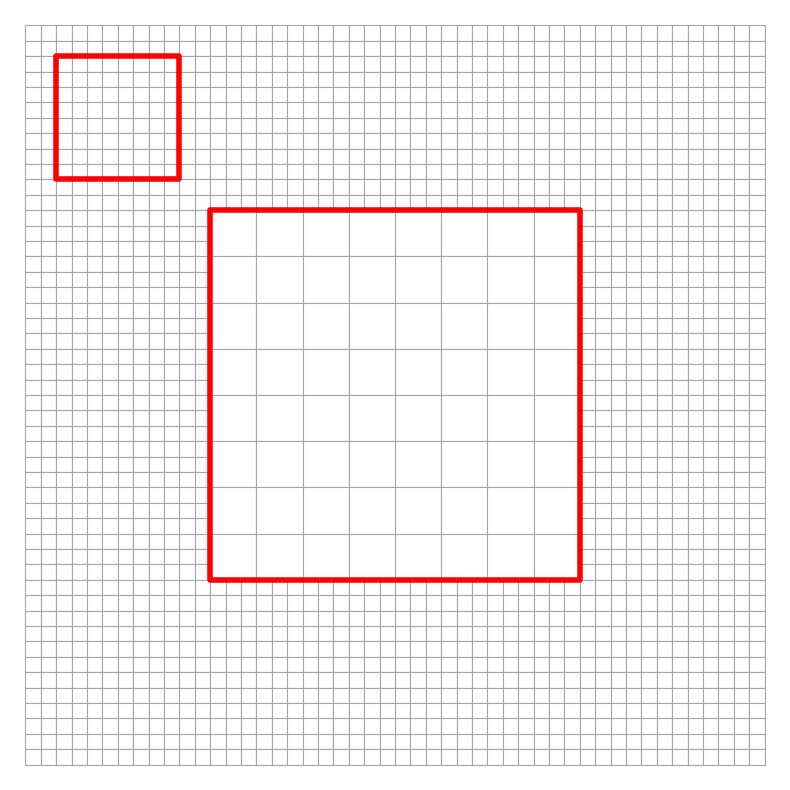}{empty-48-48}
    \\[0.8em]
    \ylabeltp &
    \plotentry{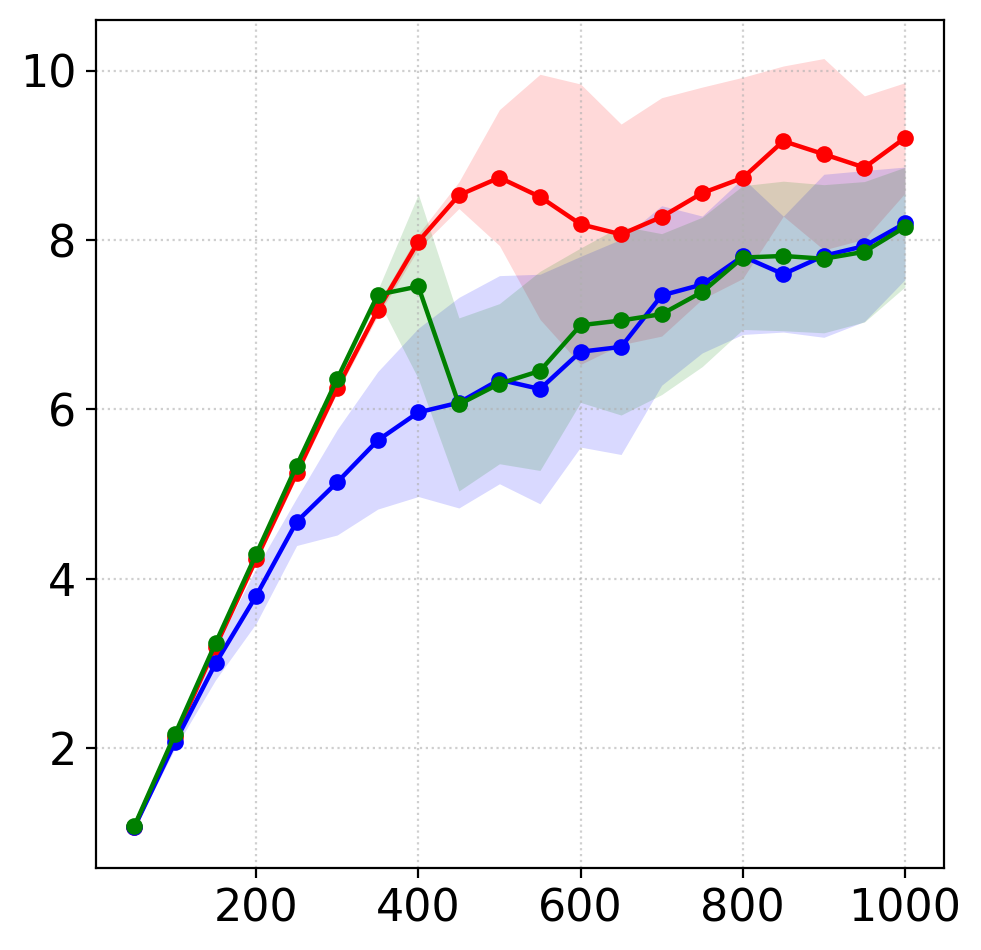} &
    \plotentry{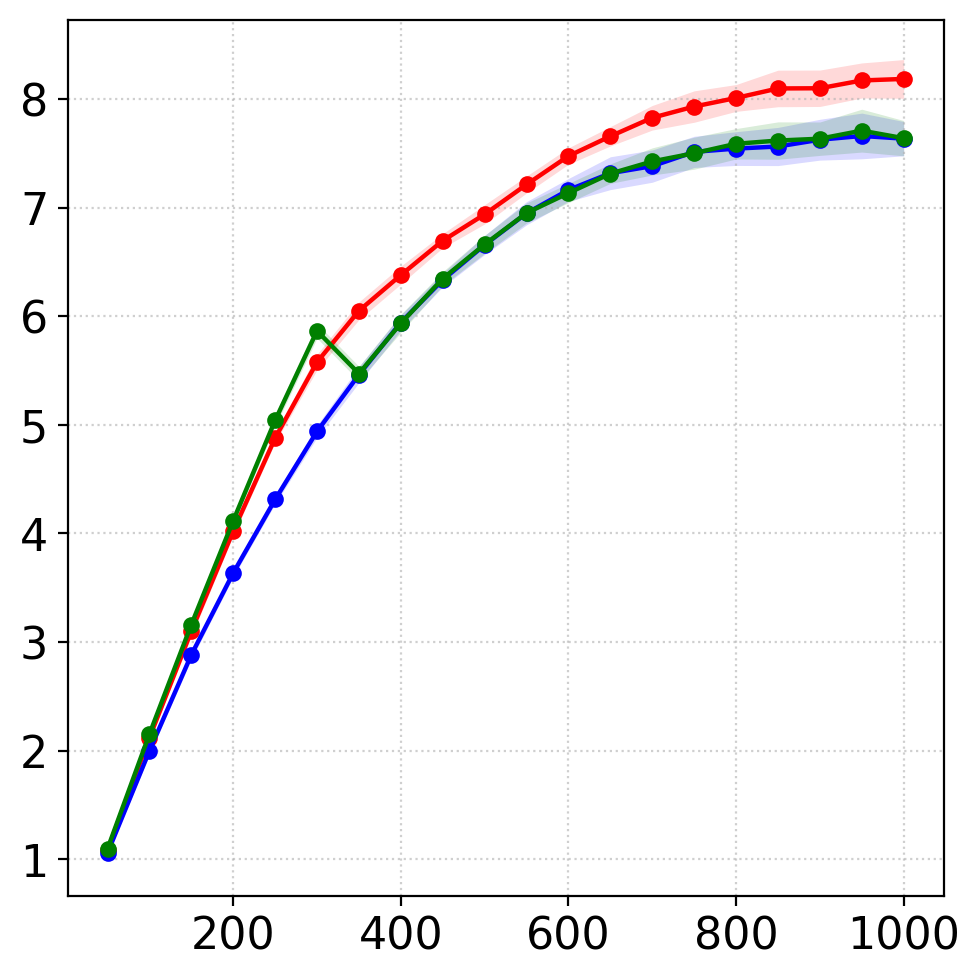}&
    \plotentry{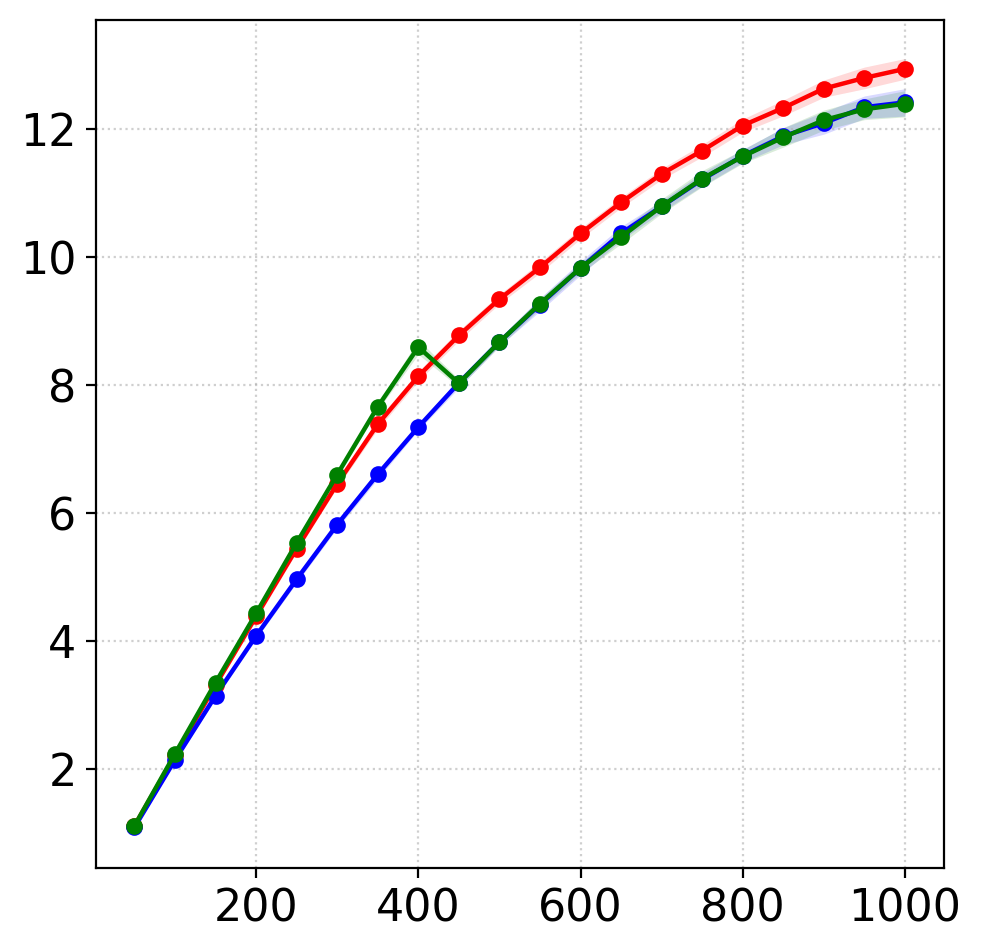}&
    \plotentry{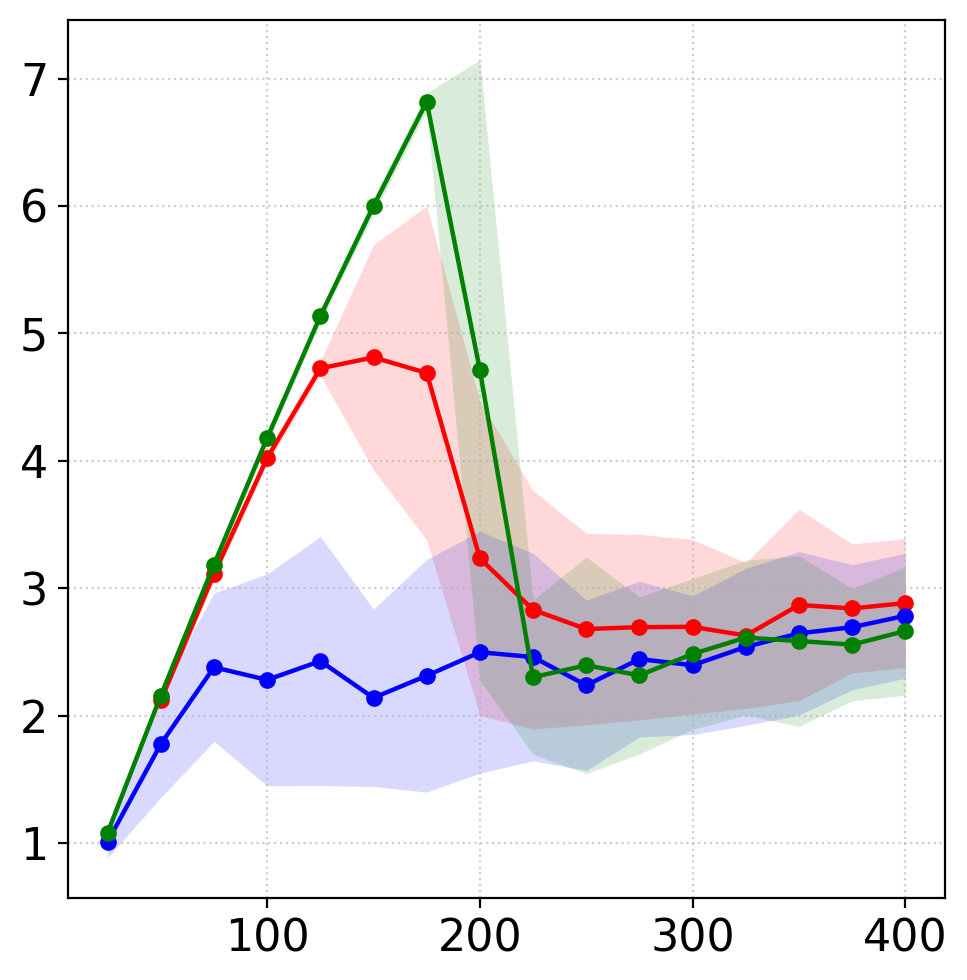} &
    \plotentry{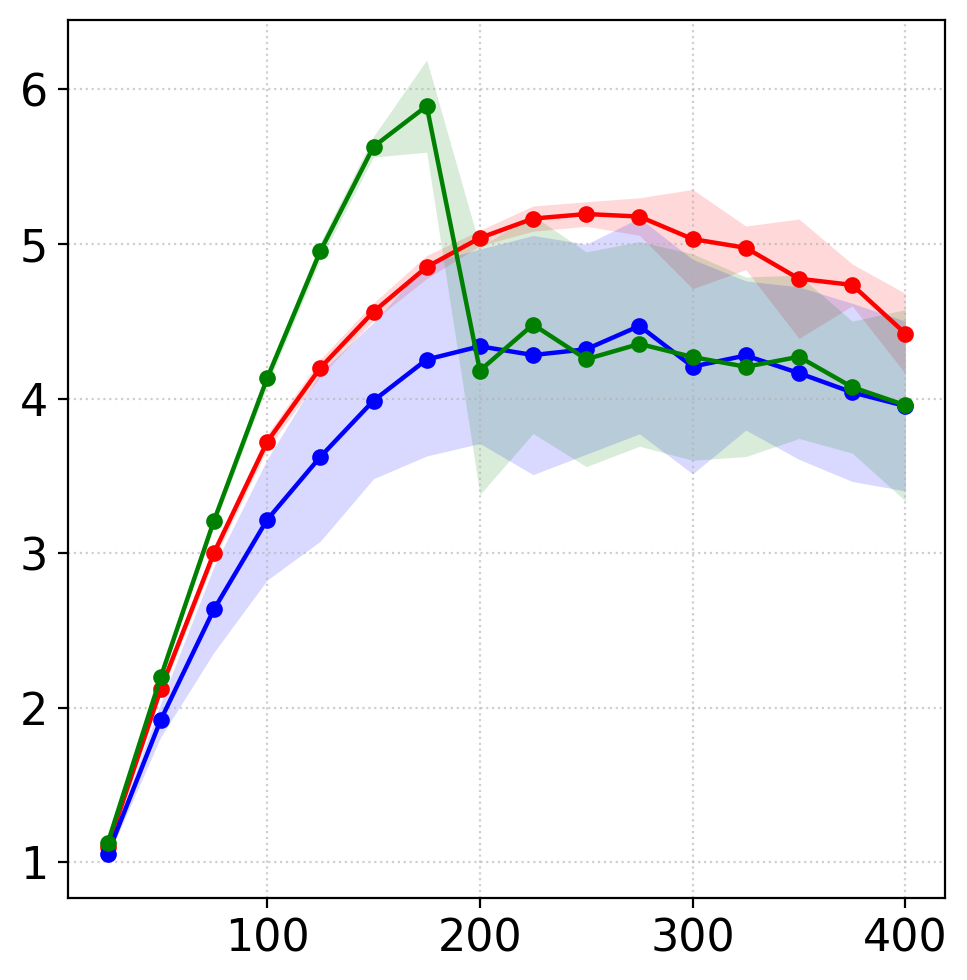} &
    \plotentry{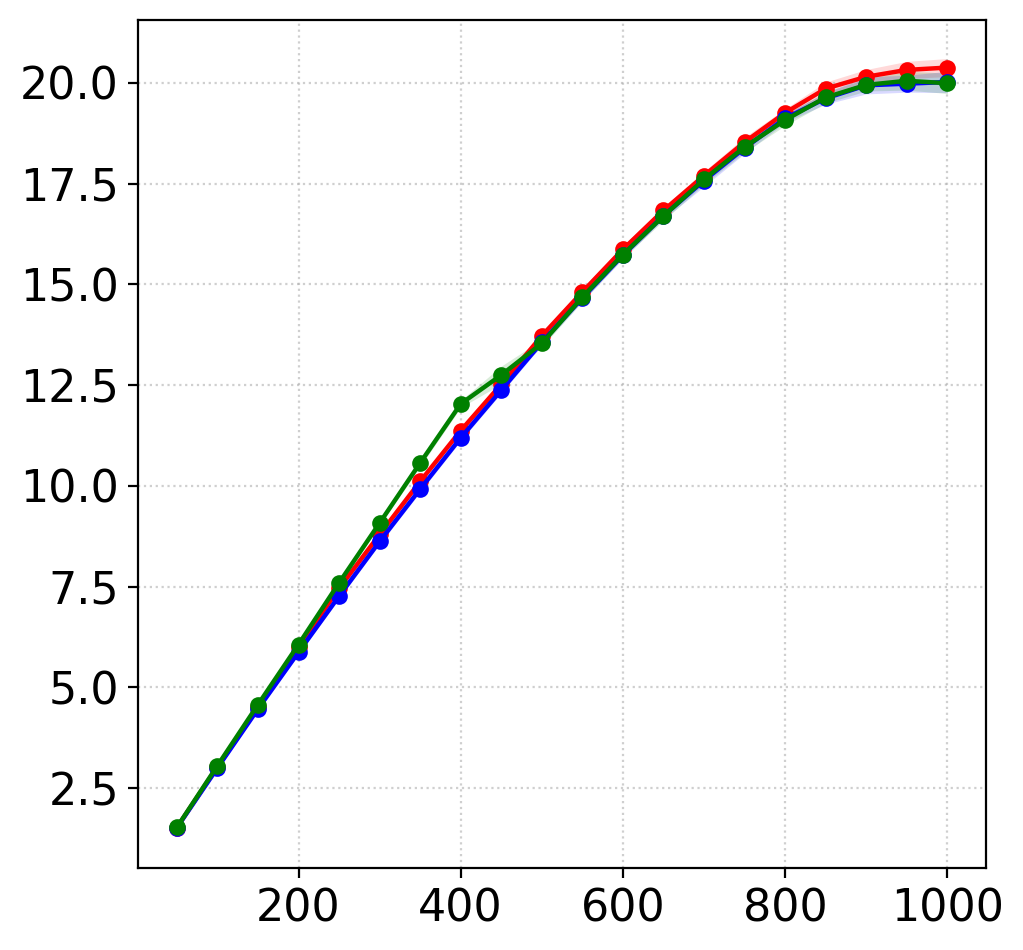}
    \\[0.3em]
    \ylabelrt &
    \plotentry{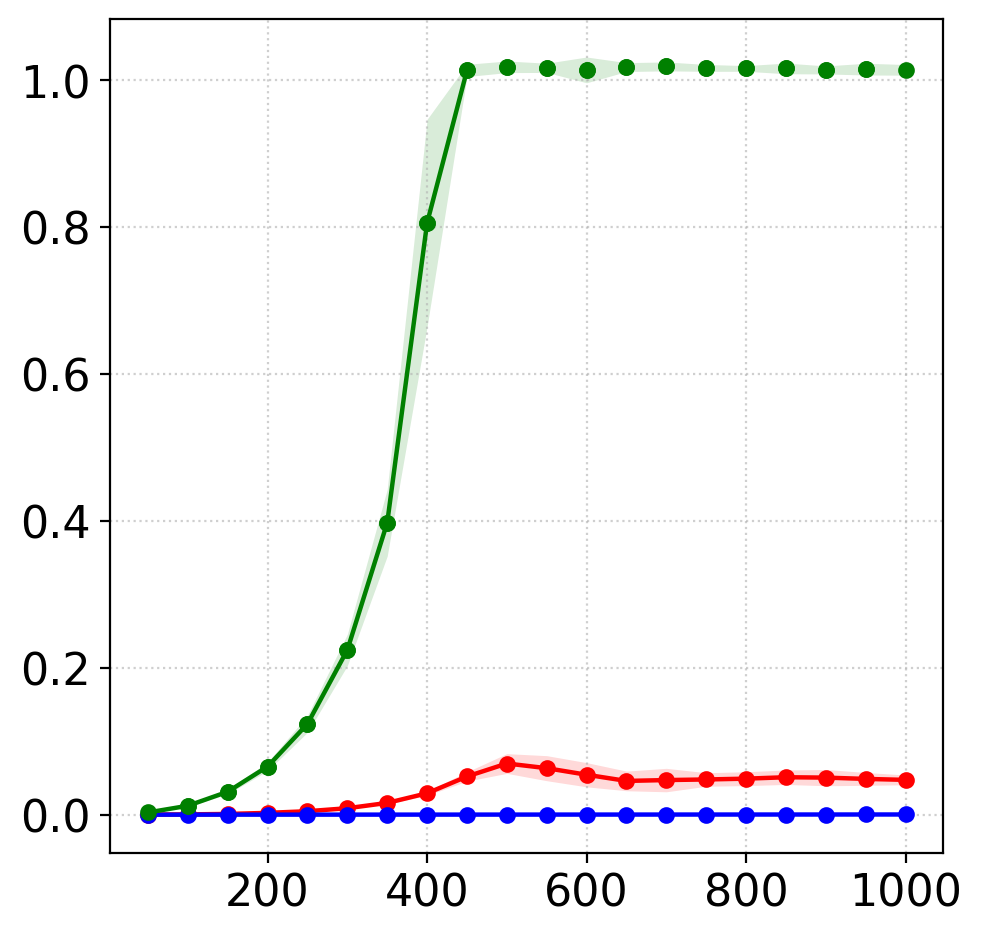} &
    \plotentry{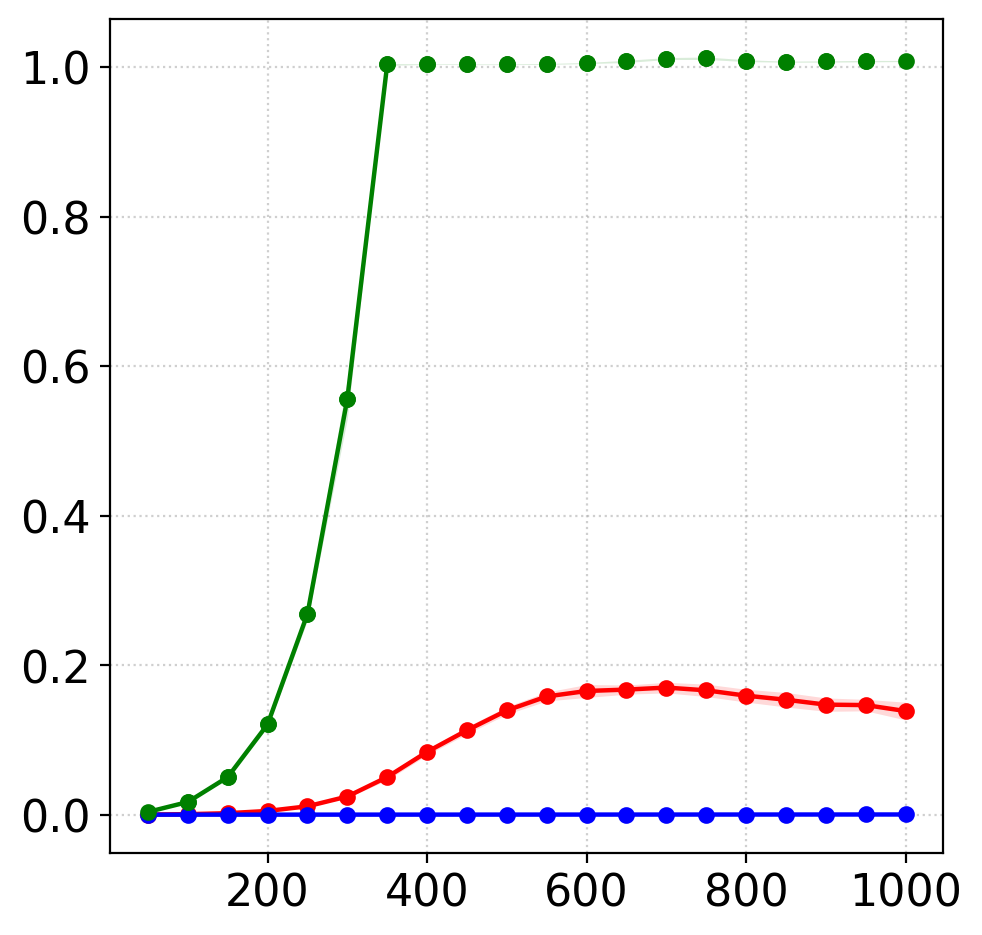}&
    \plotentry{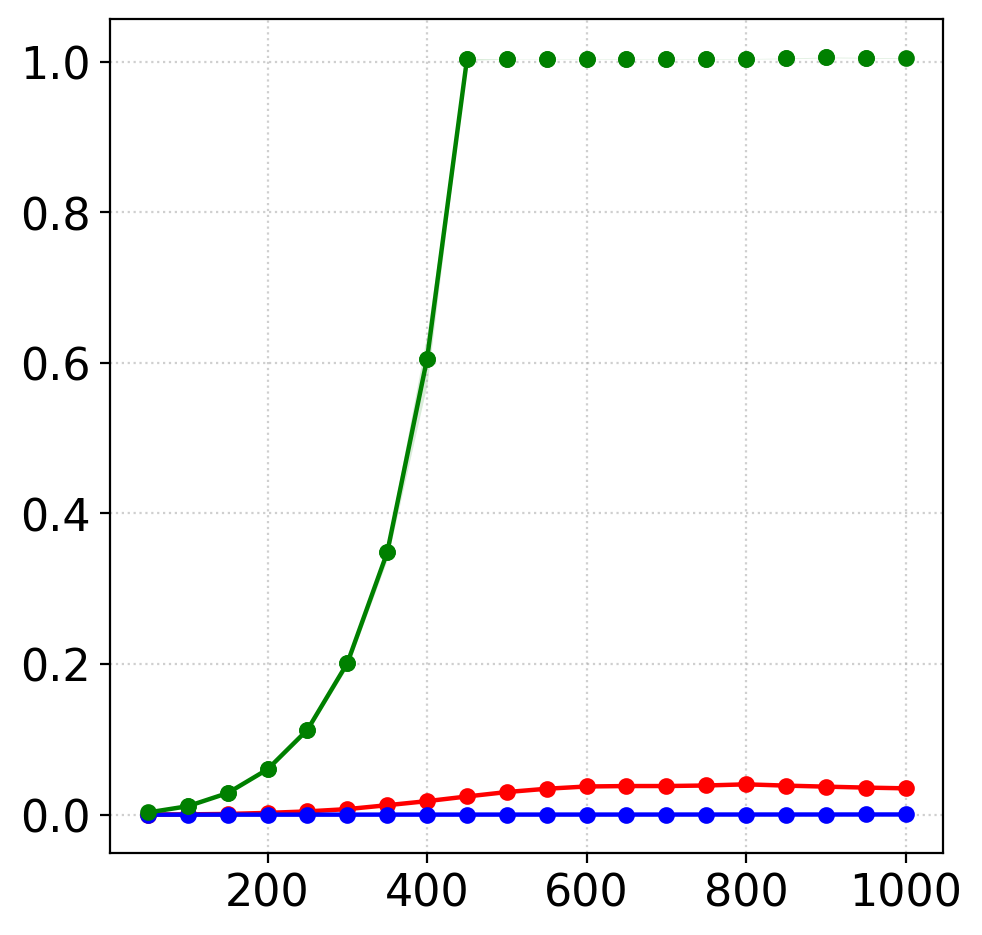}&
    \plotentry{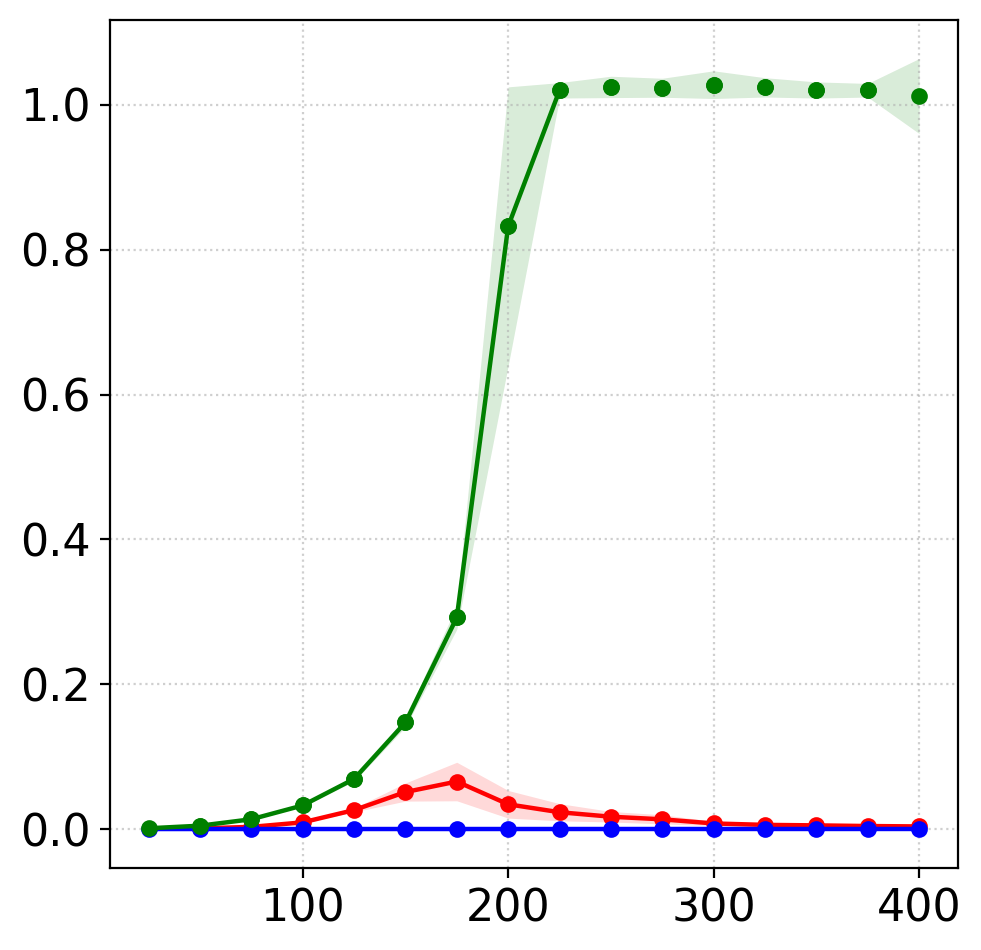} &
    \plotentry{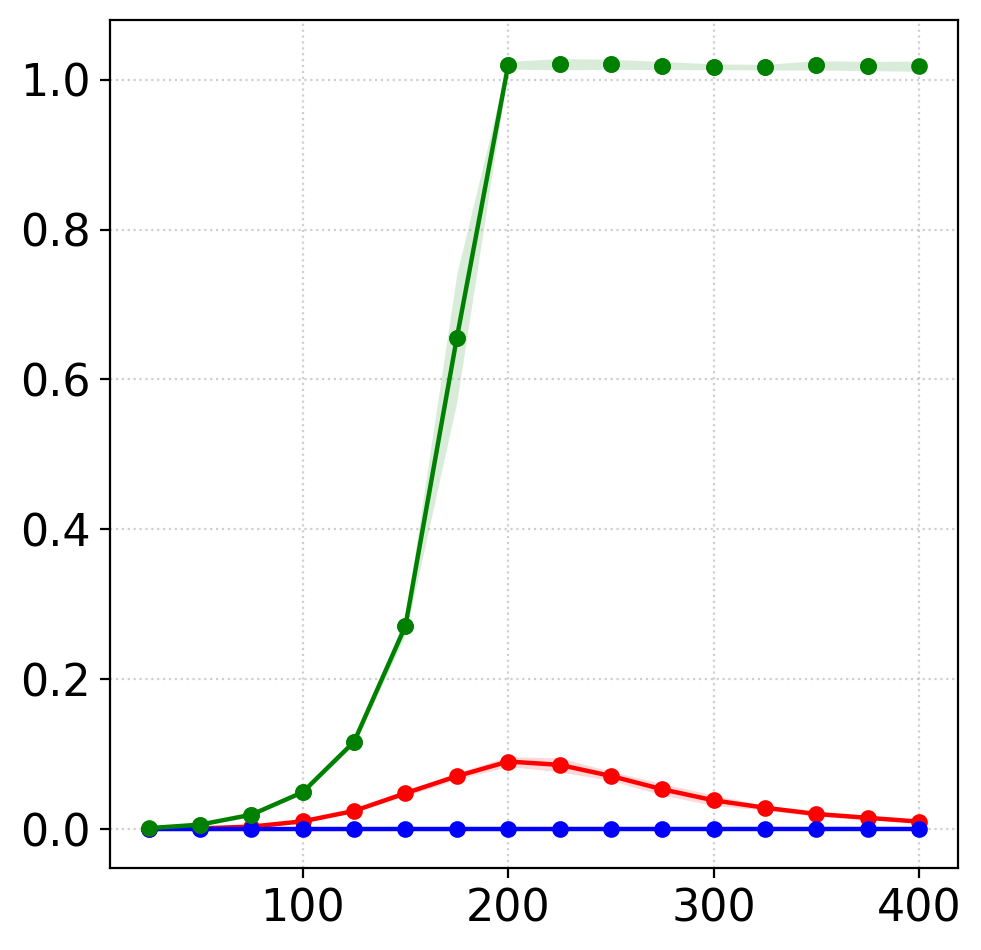} &
    \plotentry{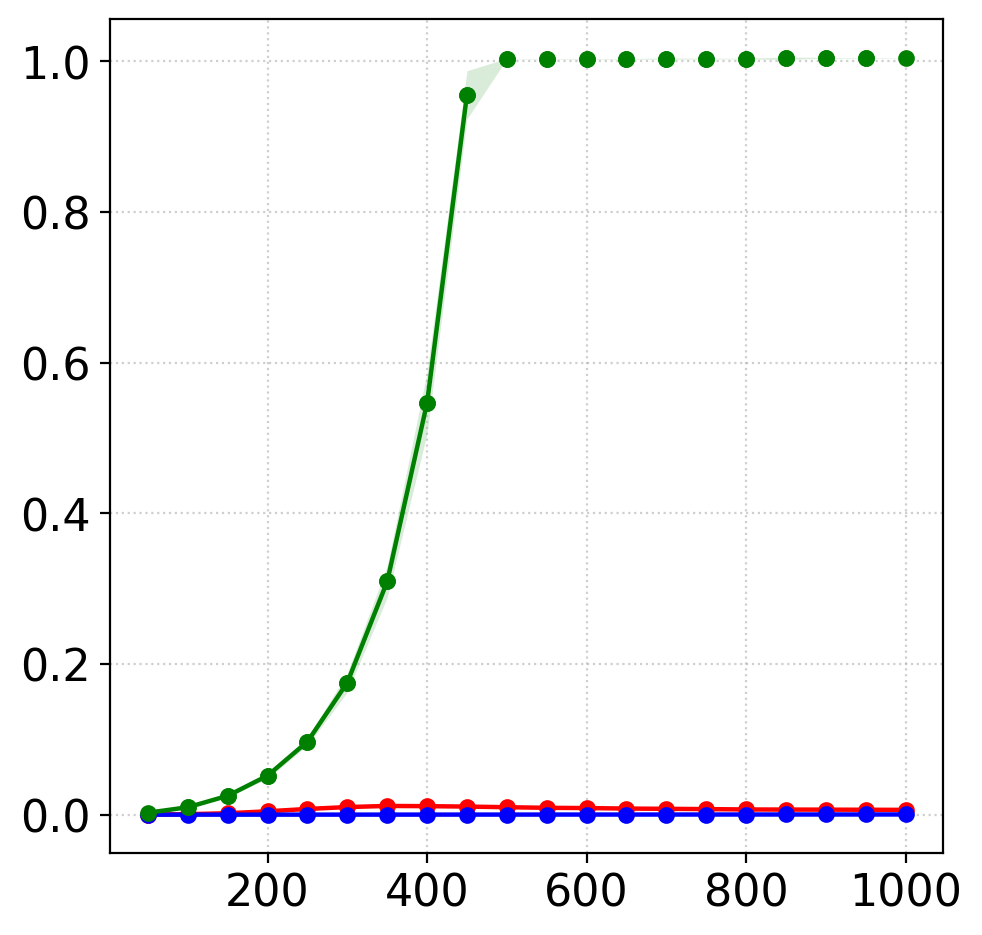}
    \\[0.3em]
    & \multicolumn{6}{c}{agents}
  \end{tabular}
  \caption{Random navigation simulations. The color coding is \textcolor{red}{GD-RHCR}, \textcolor{green!50!black}{RHCR}, and \textcolor{blue}{PIBT}}
  \label{random_graphs}
\end{figure*}
\textbf{Near RCHR Throughput of GD-RHCR with Reduced Compute Cost. }
For a wide variety of maps (warehouse-10-20-10-2-1, warehouse-10-20-10-2-2, sortation-1, random-64-64-20, room-64-64-var1, room-64-64-var2), GD-RHCR is able to maintain near RHCR performance before RHCR collapses. For example in warehouse-10-20-10-2-1, the throughput of GD-RHCR stays within $\approx 2\%$ with an average of $24.9$x improvement in plan time over RHCR before collapse. Although the visibility remains $\mathcal V = 2$, the soft constraints contributes significantly to incorporate global agent information across groups (see \cref{abalations}).

Further, we see that this is done at a significantly lower per-plan cost over RHCR. The staggered lazy evaluations of GD-RHCR significantly reduces the plan time required at each step because not all agents are planning simultaneously. Further, the paralellism substantially reduces the planning time for the agents that do evaluate at the same time.

\textbf{GD-RHCR Extends the Range of RHCR}
Because GD-RHCR restricts to planning for small groups (based on $K_{threshold}$), it is able to retain the planning leverage much further than RHCR. Even if challenging large congestion regions form, GD-RHCR will immediately use PIBT for those groups and continue to plan for the other smaller groups. This means that the throughput can remain higher and results in a much more graceful return to the PIBT as a fallback planner.

We see the most dramatic improvement in throughput on the warehouse-10-20-10-2-1 map. At its peak at $k = 800$, GD-RHCR sees a +57.7\% improvement over collapsed RHCR / PIBT while retaining its low per plan cost. Even at $k = 1000$, there are an average of approximately 250 groups that remain separated (see appendix \ref{group_stats} for full graphs). For warehouse-10-20-10-2-2 and sortation-1, we see an improvement of $12.6\%$ and $9.66\%$ respectively at $k = 1000$.

\textbf{Topology Influences Performance}
Overall, the topology of the map can substantially influence the performance of the algorithms. In maps such as warehouse-10-20-10-2-1 where the obstacles are dense and the shortest path (which wraps around obstacles) and Manhattan distance (which goes through obstacles) differ significantly, the topology induces smaller groups that allow for a significant advantage through the parallelism and group decentralized planning.

However, the advantage of GD-RHCR over the two methods is not as large when the maps are small (random-32-32-20 and room-32-32-var1) or less obstacle dense maps (sortation-2 and empty-48-48). In both cases, the topology induces large groups which must be planned all at once or drops through to the PIBT fallback when $K_{threshold}$ is reached.

Further, random-32-32-20, and room-32-32-var1 are not biconnected and can cause deadlocking in PIBT. Therefore for GD-RHCR that begins to rely on PIBT before RHCR can cause a reduced performance. However, we do see that computationally, GD-RHCR hits $K_{threshold}$ early and actually \textit{speeds up} with more agents. We also see in empty-48-48 that trivially all methods have similar throughput because RHCR itself does not have a significant advantage. However, the computation time of GD-RHCR is near PIBT because the large groups immediately transition to PIBT.

\section{Conclusion and Future Work}
In this work, we provided a theoretical grounding for RHCR using methods from the LI-MDP literature. We then used those analytical techniques to motivate our new GD-RHCR framework that satisfies similar theoretical guarantees to RHCR and performs well empirically across many maps into larger agent counts over RHCR.

For future works, we believe that group decentralization is a relatively new concept to the MAPF community and can be integrated in various ways. For example with traditional learning based methods or performing parallelism for other methods such as PIBT. Further, this work creates a simple setup to use different algoriths in different locations / settings which opens up many research directions.

\section{Acknowledgements}
Guannan Qu is supported by NSF Grants 2339112, 2512805, Jane Street, and Pennsylvania Infrastructure Technology Alliance. Jiaoyang Li is supported by NSF Grants 2328671 and 2441629. In addition, Alex DeWeese is supported by Leo Finzi Memorial Fellowship in Electrical \& Computer Engineering, the David H. Barakat and LaVerne Owen-Barakat CIT Dean's Fellowship, and the Fritsch Family Fellowship.

\bibliography{refs}

\appendix
\section{MAPF as LI-MDP}
\label{limdp_fit}
Below, we give the formal definition of the Locally Interdependent Multi-Agent MDP (LI-MDP) and demonstrate how our MAPF MDP model from \cref{rhcr_theory} fits into this model.

Assume we have agent $\mathcal N$ with some metric space $\mathcal X$ with a corresponding distance metric $d$. Each agent $i$ may each also have an internal state $\mathcal Y_i$ (e.g. battery power) resulting in a state space of $\mathcal S_i = (\mathcal X, \mathcal Y_i)$ with an arbitrary action space of $\mathcal A_i$. Each agent $i$ will transition independently in the environment according to a individual transition function $P_i(s'_i\lvert s_i,a_i)$ with $s_i,s_i'\in \mathcal S_i$, $a_i \in \mathcal A_i$ however, we assert that this probability is 0 when $d(s_i,s_i') > 1$ (agents cannot move more than 1 space at a time in the environment). Lastly, for any two agents $i,j \in \mathcal N$ we will assume a local reward function $\overline r_{i,j}(s_i, a_i, s_j, a_j)$ which is 0 when $d(s_i, s_j) > \mathcal R$ according to a dependence radius $\mathcal R \geq 0$.
Here, $i,j$ can be equal in which case $\overline r_{i,i}(s_i, a_i, s_i, a_i)$ represents the ``single agent reward'' which depends only on a single agent state / action (the agent will always be within distance $\mathcal R$ of itself so the restriction does not apply). Therefore the reward function decomposes into $\sum_{i, j\in g} \overline r_{i,j}(s_i, a_i, s_j, a_j) = \sum_{i \in \mathcal N} \overline r_{i}(s_i, a_i) + \sum_{i\neq j} \overline r_{i,j}(s_i, a_i, s_j, a_j)$ where $\overline r_{i}(s_i, a_i) = \overline r_{i,i}(s_i, a_i, s_i, a_i)$.

To summarize, we have the following:
\begin{definition}
Assume the definition of the components mentioned above. The LI-MDP is defined as:
\begin{itemize}
\item $\mathcal S := \times_{i \in \mathcal N}\mathcal S_i$ where $\mathcal S_i = (\mathcal X, \mathcal Y_i)$
\item $\mathcal A := \times_{i \in \mathcal N}\mathcal A_i$
\item $ P(s' \lvert s, a) = \prod_{i\in \mathcal N} P_i(s'_i \lvert s_i,a_i)$
\item $ r(s, a) = \sum_{i \in \mathcal N} \overline r_{i}(s_i, a_i) + \sum_{i\neq j} \overline r_{i,j}(s_i, a_i, s_j, a_j)$
\end{itemize}

\end{definition}

Notice the similarities with the MAPF MDP formulation described in \cref{rhcr_theory}. Each agent moves in an environment with a metric space at most one space at a time individually. The agents have a single agent goal incentive captured by $\overline r_{i}(s_i, a_i)$ and a interdependent collision penalty when agents have vertex collisions captured by $\overline r_{i,j}(s_i, a_i, s_j, a_j)$ with $\mathcal R = 0$. The only discrepancy is the edge collisions which allow for collision penalties with agents 1 step away (which we want to handle with $\mathcal R = 0$). It turns out that this does not impact the theory as a critical lemma called the Dependence Time Lemma (lemma \ref{dtl}) still holds.

\section{Proofs}
\label{appendix_theory}
\subsection{RHCR}
\label{rhcr_proof}
If $\pi_{finite}(s) = \{\pi_0, \pi_1, \ldots, \pi_{H - 1}\}$ is a finite horizon policy, let $V_h^{\pi_{finite}}(s) = \mathbb {E}_{\tau \sim \pi_{finite}\lvert_s} [\sum_{t = h}^{H - 1} \gamma^{t - h} r(s_t, a_t)]$. Assume $\pi^*_{finite}(s) = \{\pi_0^*, \pi_1^*, \ldots, \pi_{H - 1}^*\}$ is the finite horizon optimal policy calculated by RHCR at each replan window $\alpha$.

Recall, RHCR repeatedly takes the finite horizon policy only up to a certain number of timesteps $\alpha$. So, to express the discounted sum of rewards (the value function) of RHCR, we can repeatedly take the discounted rewards in the value function $V^{\pi^*_{finite}}_0(s_t)$ up to timesteps $\alpha$ and subtract the discounted value function beyond timestep $\alpha$.
This takes the form as follows $V^{RHCR}(s) = \mathbb{E}_{\tau \sim \pi\lvert_{s}}\bigg[V^{\pi^*_{finite}}_0(s) - \sum_{\ell = 1}^\infty \gamma^{\alpha\ell} \Delta_{\ell}^\tau\bigg]$ where $\Delta_{\ell}^\tau = V^{\pi^*_{finite}}_\alpha(s(\alpha\ell)) -  V^{\pi^*_{finite}}_0(s(\alpha\ell))$.

Now, we would like to bound $\Delta_\ell^\tau$. However, notice the value function terms within $\Delta_\ell^\tau$ contain discounted rewards that consider different time intervals. $V^{\pi^*_{finite}}_\alpha(s(\ell \alpha)) = \mathbb {E}_{\tau \sim \pi_{finite}^*\lvert_s} [\sum_{t = \alpha}^{H - 1} \gamma^{t - \alpha} r(s(t), a(t))]$ only considers $H - \alpha$ timesteps and $V^{\pi^*_{finite}}_0(s_{\alpha\ell}) = \mathbb {E}_{\tau \sim \pi_{finite}^*\lvert_s} [\sum_{t = 0}^{H - 1} \gamma^t r(s(t), a(t))]$ considers $H$ timesteps.

We will take an arbitrary finite horizon policy $\pi_{finite}$ and complete the trajectory within $V^{\pi^*_{finite}}_\alpha(s(\alpha(\ell - 1)))$. The extended value function $V^{\ell, extended}$ can be expressed as follows.
{\scriptsize
\begin{align*}
&V^{\ell, extended}= \mathbb {E}_{\tau \sim \pi_{finite}^*\lvert_s} \bigg[\sum_{t = \alpha}^{H - 1} \gamma^{t - \alpha} r(s_t, a_t)\bigg] \\
&\hspace{25ex}+ \mathbb {E}_{\tau \sim \pi_{finite}\lvert_s} \bigg[\sum_{t = H}^{H + \alpha - 1} \gamma^{t - \alpha} r(s(t), a(t))\bigg]\\
& = V^{\pi^*_{finite}}_\alpha(s(\alpha(\ell - 1))) + \mathbb {E}_{\tau \sim \pi_{finite}\lvert_s} \bigg[\sum_{t = H}^{H + \alpha - 1} \gamma^{t - \alpha} r(s_t, a_t)\bigg] \\
&\geq V^{\pi^*_{finite}}_\alpha(s(\alpha(\ell - 1))) - \gamma^{H-\alpha} (\frac{1 - \gamma^\alpha}{1 - \gamma})\\
&\geq V^{\pi^*_{finite}}_\alpha(s(\alpha(\ell - 1))) -  \frac{\gamma^{H-\alpha}}{1 - \gamma}
\end{align*}
}
Now, since $V^{\pi^*_{finite}}_0(s(\alpha\ell))$ is the optimal finite horizon reward, we may bound $\Delta_i^\tau$ as follows:
\begin{align*}
    \Delta^\tau_\ell &= V^{\pi^*_{finite}}_\alpha(s(\alpha\ell)) -  V^{\pi^*_{finite}}_0(s(\alpha\ell))\\
    &\leq V^{\ell, extended} -  V_0^{\pi^*_{finite}}(s(\alpha\ell)) + \frac{\gamma^{H - \alpha}}{1 - \gamma}\\
    &\leq \epsilon + \frac{\gamma^{H-\alpha}}{1 - \gamma}
\end{align*}

Therefore,
\begin{align*}
    V^{RHCR}(s) &= \mathbb{E}_{\tau \sim \pi\lvert_{s}}\bigg[V^{\pi^*_{finite}}_0(s) - \sum_{\ell = 1}^\infty \gamma^{\alpha\ell} \Delta_{\ell}^\tau\bigg]\\
    &\geq V^{\pi^*_{finite}}_0(s) - \frac{\gamma^{H}}{(1 - \gamma)^2} - \frac{\epsilon}{1 - \gamma}
\end{align*}

To complete our analysis, we may compare the optimal stationary policy $\pi^*$ and the RHCR trajectory as follows:
\begin{align*}
V^*(s) - V^{RHCR}(s) &\leq V^*(s) - V_0^{\pi^*_{finite}}(s) \\
&\hspace{10ex}+ \frac{\gamma^{H}}{(1 - \gamma)^2}+ \frac{\epsilon}{1 - \gamma}\\
 &\leq \frac{\gamma^H}{1 - \gamma} + \frac{\gamma^{H}}{(1 - \gamma)^2}  + \frac{\epsilon}{1 - \gamma}\\
 &\leq \frac{2\gamma^H}{(1 - \gamma)^2} + \frac{\epsilon}{1 - \gamma}
\end{align*}

\subsection{Group Decentralized RHCR}

In this section, we will prove \cref{gdrhcr_guarantee}. Recall, we will be using $c_{soft} = 0$ and $K_{threshold} = \infty$.

Let $Z(s) = \{z_1, z_2, \ldots, z_\ell\}$ where $z_i \subset \mathcal N$ be the partition formed by the transitive ``grouped decentralized'' connections.

Similar to \cite{deweese2024locally}, our proofs will rely on a fundamental geometric property that agents in separate groups cannot interact with each other within a certain number of steps. In the MAPF case, this means agents will not be able to collide within $c = \lfloor\frac{\mathcal V}{2}\rfloor$ timesteps. This is a special case of the Dependence Time Lemma from \cite{deweese2024locally} with $\mathcal R = 0$ which shows up in $c = \lfloor\frac{\mathcal V - \mathcal R}{2}\rfloor$ with the exception of edge collisions.

\begin{lemma}[Dependence Time Lemma for MAPF]
    For some state $s \in \mathcal S$, let $i,j \in \mathcal N$ be any two agents in different partitions in $Z(s)$. Then, agents $i,j$ cannot have vertex or edge collisions within $\lfloor\frac{\mathcal V}{2}\rfloor$ timesteps.
    \label{dtl}
\end{lemma}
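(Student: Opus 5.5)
The plan is to argue directly from the definition of the partition $Z(s)$ and the fact that each agent moves at most one space per timestep. Throughout, $d$ denotes the shortest-path distance on the open spaces $\mathcal X$. This $d$ satisfies the triangle inequality, and under the transition function every agent's position changes by at most $1$ in $d$ per step.

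\textbf{Step 1.} Since $i$ and $j$ lie in different connected components of the group decentralized graph (\cref{gd_def}), they are in particular not adjacent. Hence $d(x_i(0), x_j(0)) > \mathcal V$, i.e.\ $d(x_i(0), x_j(0)) \geq \mathcal V + 1$ because distances are integers. Only this pairwise fact is needed; transitivity of the grouping plays no further role beyond guaranteeing non-adjacency.

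\textbf{Step 2 (vertex collisions).} After $t$ steps each agent has moved at most $t$, so by the triangle inequality
\[ d(x_i(t), x_j(t)) \geq \mathcal V + 1 - 2t . \]
A vertex collision at time $t$ requires $x_i(t) = x_j(t)$, hence $\mathcal V + 1 - 2t \leq 0$, i.e.\ $t \geq (\mathcal V+1)/2$. For every $t \leq \lfloor \mathcal V/2 \rfloor$ we have $2t \leq \mathcal V$, so the distance is at least $1$ and no vertex collision occurs.

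\textbf{Step 3 (edge collisions, the main obstacle).} This is the case not covered by the LI-MDP lemma with $\mathcal R = 0$. An edge collision at step $t$ (from time $t-1$ to $t$) requires $x_i(t) = x_j(t-1)$ and $x_j(t) = x_i(t-1)$, which in particular forces
\[ d(x_i(t-1), x_j(t-1)) \leq 1 . \]
By Step 2 applied at time $t-1$, this distance is at least $\mathcal V + 1 - 2(t-1) = \mathcal V + 3 - 2t$. So $\mathcal V + 3 - 2t \leq 1$, giving $t \geq \mathcal V/2 + 1 > \lfloor \mathcal V/2 \rfloor$. Edge collisions are therefore also excluded within $\lfloor \mathcal V/2\rfloor$ timesteps.

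Before writing this out, I would fix the indexing convention: "within $c$ timesteps" should mean transitions ending at times $1,\dots,c$, and the $d_i$ component encoding $x^{prev}$ must be consistent with this. I would also note that blocked moves (staying in place) only help. Both bounds come with slack, so the lemma should hold under either reasonable convention.
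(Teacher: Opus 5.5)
Your proof is correct and follows essentially the same route as the paper, which simply invokes the LI-MDP Dependence Time Lemma with $\mathcal R = 0$ (proved by exactly your unit-speed/triangle-inequality bound from an initial separation exceeding $\mathcal V$) and adds a one-line remark on edge collisions that your Step 3 makes precise by bounding the separation at time $t-1$ rather than at time $t$. One correction to your closing remark: the bounds have no slack in general (for even $\mathcal V$, two agents at distance $\mathcal V+1$ in a corridor can have a vertex or an edge collision at time $\lfloor \mathcal V/2\rfloor+1$), so the lemma holds only under the convention you fixed first, namely transitions ending at times $1,\dots,\lfloor \mathcal V/2\rfloor$, and that is also the convention the $\gamma^{\lfloor \mathcal V/2\rfloor+1}$ exponent in the GD-RHCR proof relies on.
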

\proof{
See proof for the Dependence Time Lemma in \cite{deweese2024locally} with $\mathcal R = 0$. Edge collisions do not make a difference in the proof because showing that the closest two agents in different groups can get is distance 1 in $\lfloor\frac{\mathcal V}{2}\rfloor$ steps rules out both vertex and edge collisions.
}

Notice by our reward model in \cref{rhcr_theory}, $r(s, a) =  \sum_i \big(I[x_i = g_i]r_{goal} + \sum_j I[collision(x_i, x_j)]p_{collide}\big)$, for any policy $\pi$, we may decompose the value function as follows for any partition $P$:
\begin{align*}
V^\pi_h(s) = \sum_{p \in P} \Big( [V_{h}^{\pi}]_p(s_p)+ \sum_{i \in p}[V_h^{\pi}]_{i\rightarrow \overline p}(s)\Big)
\end{align*}
where  we define $\overline p := \mathcal N \setminus p$ and for every $p' \subset \mathcal N$ both
\begin{align*}
&[V^\pi_h]_{p'}(s) = \mathbb E\bigg[\sum_{t = 0}^\infty\gamma^t \sum_{i \in p'} I[x_i = g_i]r_{goal} \\
&\hspace{20ex}+ \sum_{j \in p'} I[collision(x_i, x_j)]p_{collide}\bigg]
\end{align*}
and
 \begin{align*}
    [V_h^{\pi}]_{i\rightarrow p'}(s) = \mathbb E\bigg[\sum_{t = 0}^\infty\gamma^t \sum_{j \in p'}I[collision(x_i, x_j)]p_{collide}\bigg].
 \end{align*}  Symbolically, $i\rightarrow p'$ represents the interactions of agent $i$ (or collisions) with agents in $p'$.

Using these new notations, we may express the trajectory value obtained by GD-RHCR as
\begin{align*}
V^{GD}(s) \geq \mathbb{E}_{\tau \sim \pi\lvert_{s}}\bigg[\sum_{z \in Z(s)}[V^{\pi^*_{finite}}_0]_z(s) - \sum_{t = 1}^\infty \gamma^t \Delta_{t}^\tau\bigg] - \frac{\epsilon}{1 - \gamma}
\end{align*}
where
{\scriptsize
\begin{align*}
&\Delta_{t}^\tau = \sum_{z \in Z(s)}I[lazy(z, t)]\bigg(\sum_{i \in z}\Big([V^{\pi^*_{finite}}_{\delta_{i}^t}]_z(s_{z_{t,i}}(t)) \\
&\hspace{10ex}+ [V^{\pi^*_{finite}}_{\delta_{i}^t}]_{i \rightarrow z_{t,i}\setminus z}(s_{z_{t,i}}(t))\Big) -  [V^{\pi^*_{finite}}_0]_z(s(t))\bigg)
\end{align*}
}
Here $I(lazy(z,t))$ is an indicator whether lazy evaluation was triggered at that timestep and $z_{t,i}$ is the group associated with the previous "plan".

Intuitively, we repeatedly remove the remainder of the trajectory when lazy evaluation is triggered and bring in the new trajectory plan for each of the groups. This accounts for the collisions across groups because in order for agents to collide, they must enter each others groups first (triggering a lazy evaluation).

Next, as in \cref{rhcr_proof}, we will extend the trajectory of $[V^{\pi^*_{finite}}_{\delta_{i}^t}]_i(s_{z_{t,i}}(t))$ with some arbitrary policy $\pi_{i, finite}$. Notice here that in our extension $V^{i,t, ext}$, we will only consider collisions between agents that were initially in the same $z_{t,i}$ group. Defining $r_{i, max}$ as the maximum reward possible for a single agent (we assume $\sum_{i \in \mathcal N}\lvert r_{i, max}\rvert\leq 1$), we have

{\scriptsize
\begin{align*}
&\sum_{i\in\mathcal N} I[lazy(z, t)]V^{i,t,ext} = \sum_{i\in\mathcal N} I[lazy(z, t)]\bigg(V^{\pi^*_{finite}}_{\delta_i^t}(s_{z_{t,i}}(t)) \\
&\hspace{20ex}+ \mathbb {E}_{\tau \sim \pi_{i, finite}\lvert_s} \bigg[\sum_{t' = H}^{H + \delta_i^t - 1} \gamma^{t' - \delta_i^t} \bigg(I[x_i = g_i]r_{goal} \\
&\hspace{40ex}+ \sum_{j \in z_{t,i}}I[collision(x_i, x_j)]p_{collide}\bigg)\bigg] \\
&\geq \sum_{i\in\mathcal N}I[lazy(z, t)]\Big( V^{\pi^*_{finite}}_{\delta_i^t}(s_{z_{t,i}}(t)) - \gamma^{H-\delta_i^t} (\frac{1 - \gamma^{\delta_i^t}}{1 - \gamma})r_{i,max}\Big)\\
&\geq \sum_{i\in\mathcal N}I[lazy(z, t)]\Big( V^{\pi^*_{finite}}_{\delta_i^t}(s_{z_{t,i}}(t)) -  \frac{\gamma^{H-\delta_i^t}}{1 - \gamma}r_{i,max}\Big)\\
\end{align*}
}

Now we may bound $\Delta_t^\tau$ as follows.
{\scriptsize
\begin{align}
&\gamma^t\Delta_{t}^\tau = \gamma^t\sum_{z \in Z(s)}I[lazy(z, t)]\bigg(\sum_{i \in z}\Big([V^{\pi^*_{finite}}_{\delta_{i}^t}]_z(s_{z_{t,i}}(t)) \nonumber\\
&\hspace{20ex}+ [V^{\pi^*_{finite}}_{\delta_{i}^t}]_{i \rightarrow z_{t,i}\setminus z}(s_{z_{t,i}}(t))\Big) -  [V^{\pi^*_{finite}}_0]_z(s(t))\bigg)\\
    &\leq \gamma^t\sum_{z \in Z(s(t))}I[lazy(z, t)]\bigg( \sum_{i \in z} V^{i,t,ext} -  [V_0^{\pi^*_{finite}}]_z(s(t)) \nonumber\\
    &\hspace{25ex}\sum_{i \in z}[V^{\pi^*_{finite}}_{\delta_{i}^t}]_{i \rightarrow z_{t,i}\setminus z}(s_{z_{t,i}}(t))+ \sum_{i \in z}\frac{\gamma^{H - \delta_i^t}r_{i,max}}{1 - \gamma} \bigg)\label{eq_sub}\\
    &\leq \sum_{z \in Z(s)}I[lazy(z, t)]\Big(\gamma^t(\sum_{i \in z} V^{i,t,ext} -  [V_0^{\pi^*_{finite}}]_z(s(t))) \\
    &\hspace{30ex}+ \sum_{i \in z} \frac{\gamma^{\lfloor \frac{\mathcal V}{2}\rfloor + t - \delta_i^t + 1}r_{i,max}}{1 - \gamma}+ \frac{\gamma^{H + t - \delta_i^t}r_{i,max}}{1 - \gamma} \Big)\label{eq_dtl}\\
    &\leq \sum_{z \in Z(s)}I[lazy(z, t)]\Big(\sum_{i \in z}\gamma^{I[t > \lfloor \frac{\mathcal V}{2}\rfloor]\cdot t}r_{i,max} \\
    &\hspace{30ex}+ \sum_{i \in z} \frac{\gamma^{\lfloor \frac{\mathcal V}{2}\rfloor
    + t - \delta_i^t + 1}r_{i,max}}{1 - \gamma}
    + \frac{\gamma^{H + t - \delta_i^t}r_{i,max}}{1 - \gamma} \Big)\label{eq_delta}
\end{align}
}
In line \ref{eq_sub}, we substitute our extension. In line \ref{eq_dtl}, we use the Dependence Time Lemma (lemma \ref{dtl}) where cross group interactions are beyond $\lfloor \frac{\mathcal V}{2}\rfloor$ iterations (with a head start of $\delta_i^t$ iterations). In line \ref{eq_delta}, recall that in the definition of $V^{i,t, ext}$ only the collisions between agents in the same $z_{t,i}$ groups were added. For cross group interactions, these may leave residual terms. However, these terms must occur beyond $\lfloor\frac{\mathcal V}{2}\rfloor$ iterations by the Dependence Time Lemma. Therefore we introduce the quantity with the indicator in the exponent $I[t > \lfloor \frac{\mathcal V}{2}\rfloor]\cdot t$.

Now notice that $t - \delta_i^t$ is some quantity unique for every t (when the indicator $I[lazy(z,t)]$ is triggered) and therefore the $t - \delta_i^t$ terms can be summed up as indices of $1, \ldots, \infty$.
Therefore we have

{\small
\begin{align*}
\mathbb{E}_{\tau \sim \pi\lvert_{s}}\bigg[\sum_{t = 1}^\infty \gamma^{t} \Delta_{t}^\tau\bigg]\leq \frac{\gamma^{H}}{(1 - \gamma)^2} + \frac{2\gamma^{\lfloor \mathcal V / 2\rfloor + 1}}{(1 - \gamma)^2}\\
\end{align*}
}
And substituting into our $V^{GD}$ expression,
{\small
\begin{align*}
V^{GD}(s) &\geq \mathbb{E}_{\tau \sim \pi\lvert_{s}}\bigg[\sum_{z \in Z(s)}[V^{\pi^*_{finite}}_0]_z(s) - \sum_{t = 1}^\infty \gamma^{t} \Delta_{t}^\tau\bigg] - \frac{\epsilon}{1 - \gamma}\\
&\geq \sum_{z \in Z(s)}[V^{\pi^*_{finite}}_0]_z(s) - \frac{\gamma^{H}}{(1 - \gamma)^2} - \frac{2\gamma^{\lfloor \mathcal V / 2\rfloor + 1}}{(1 - \gamma)^2} - \frac{\epsilon}{1 - \gamma}\\
&\geq \sum_{z \in Z(s)}[V^{\pi^*_{finite}}_0]_z(s) - \frac{4\gamma^{\min(H, \lfloor \mathcal V / 2\rfloor + 1)}}{(1 - \gamma)^2} - \frac{\epsilon}{1 - \gamma}
\end{align*}
}
Finally, we complete our analysis by comparing to the optimal value function $V^*(s)$.
{\small
\begin{align*}
V^*(s) - V^{GD}(s) &\leq V^*(s) - \sum_{z \in Z(s)}[V^{\pi^*_{finite}}_0]_z(s) \\
&\hspace{10ex}+ \frac{4\gamma^{\min(H, \lfloor \mathcal V / 2\rfloor + 1)}}{(1 - \gamma)^2} + \frac{\epsilon}{1 - \gamma}\\
 &\leq \frac{2\gamma^{\min(H, \lfloor \mathcal V / 2\rfloor + 1)}}{1 - \gamma} + \frac{4\gamma^{\min(H, \lfloor \mathcal V / 2\rfloor + 1)}}{(1 - \gamma)^2} + \frac{\epsilon}{1 - \gamma}\\
 &\leq \frac{6\gamma^{\min(H, \lfloor \mathcal V / 2\rfloor + 1)}}{(1 - \gamma)^2} + \frac{\epsilon}{1 - \gamma}
\end{align*}
}

Here, the penultimate line uses again the Dependence Time Lemma (lemma \ref{dtl}).

\clearpage
\begin{figure*}[t!]
  \centering
  \setlength{\tabcolsep}{1pt}
  \setlength{\plotwidth}{0.20\linewidth}
  \setlength{\plotheight}{2.8cm}
  \begin{tabular}{@{}c@{}ccc@{}}
    &
    Visibility &
    Soft Constraint&
    Fallback
    \\[0.8em]
    \ylabeltp &
    \plotentry{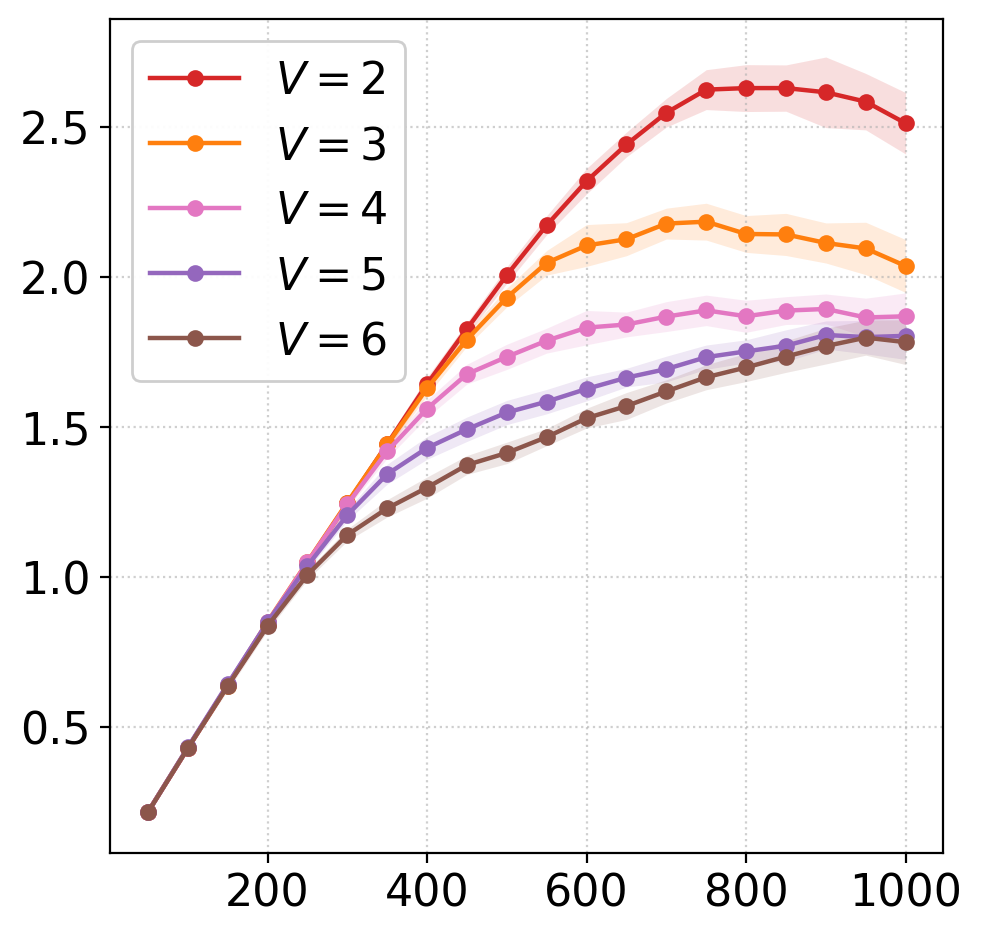} &
    \plotentry{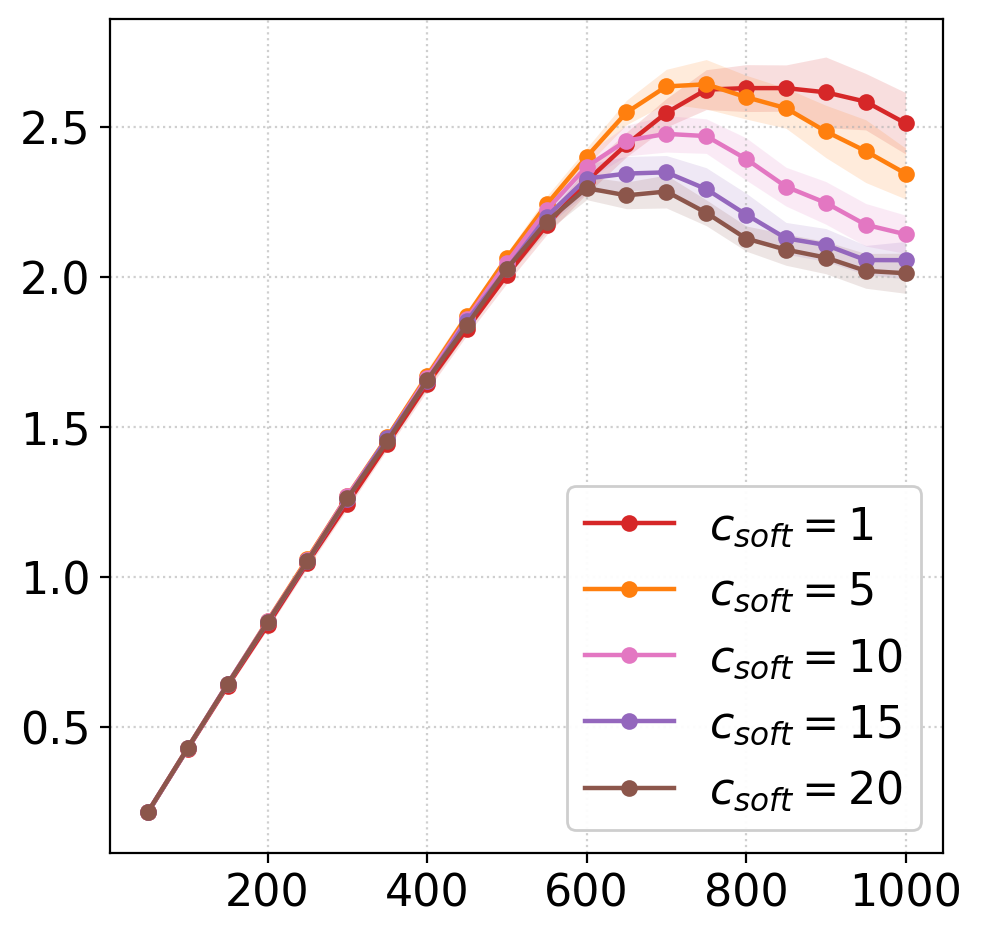}&
    \plotentry{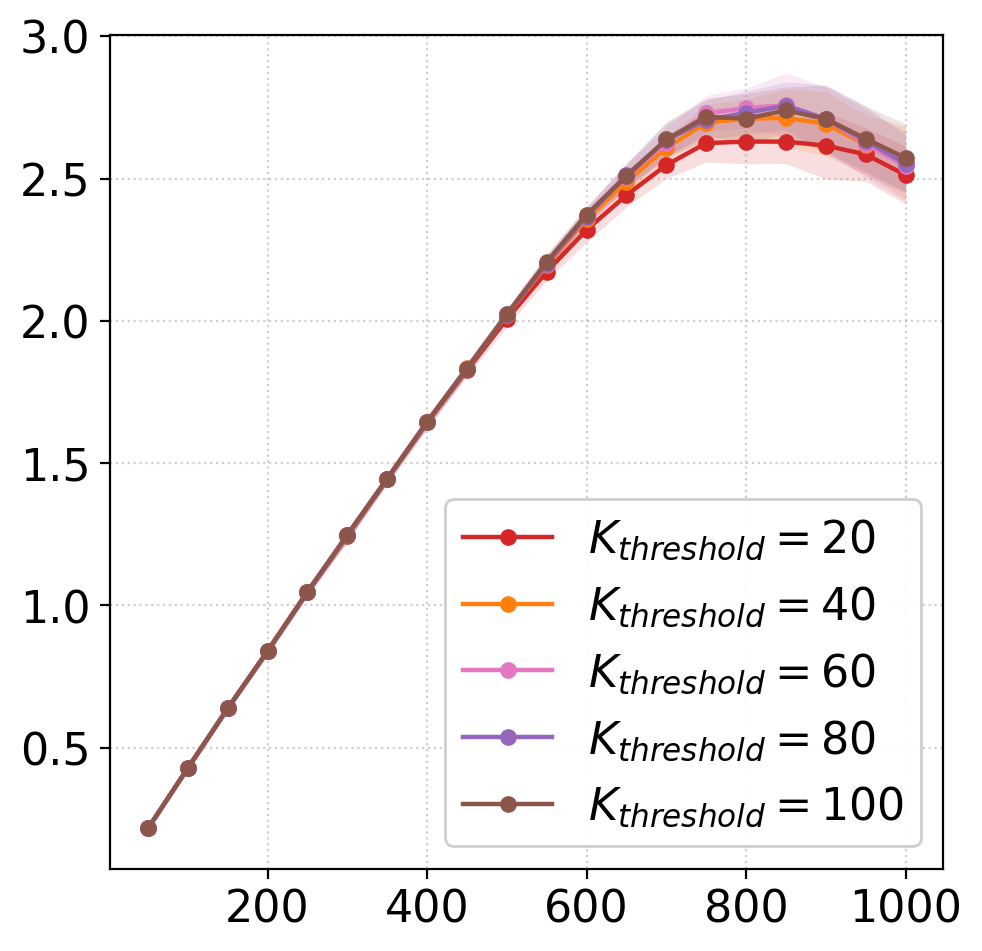}
    \\[0.3em]
    \ylabelrt &
    \plotentry{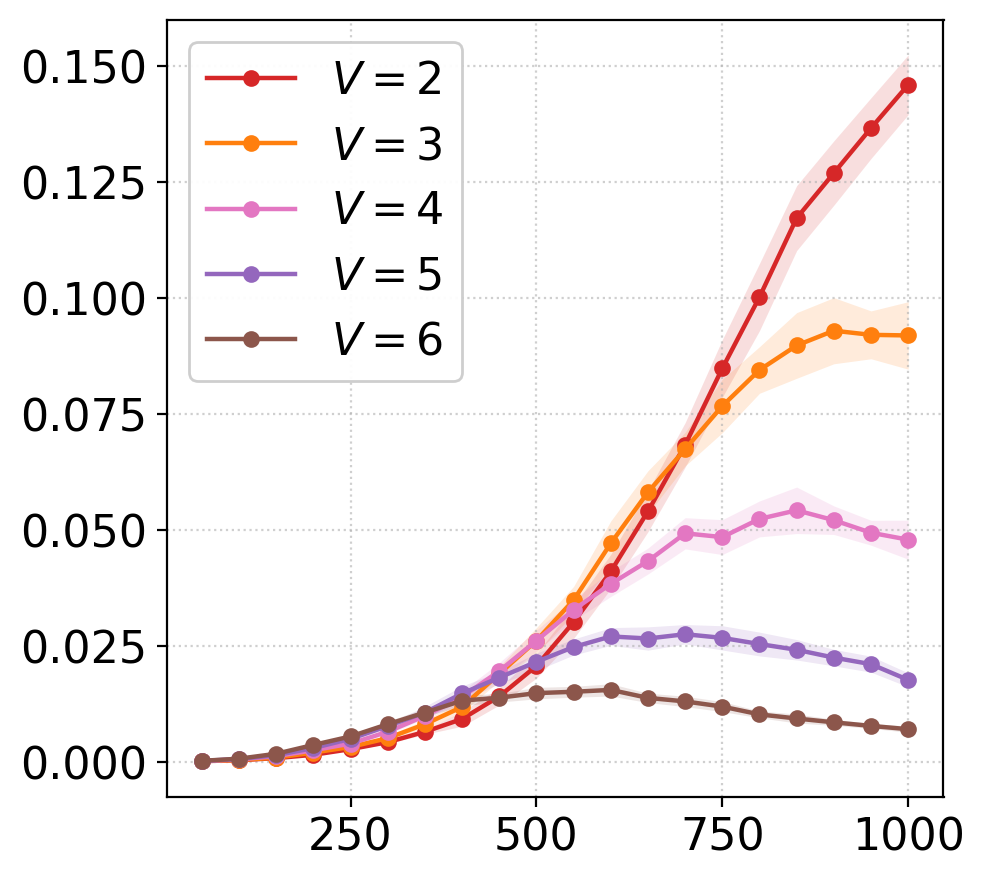} &
    \plotentry{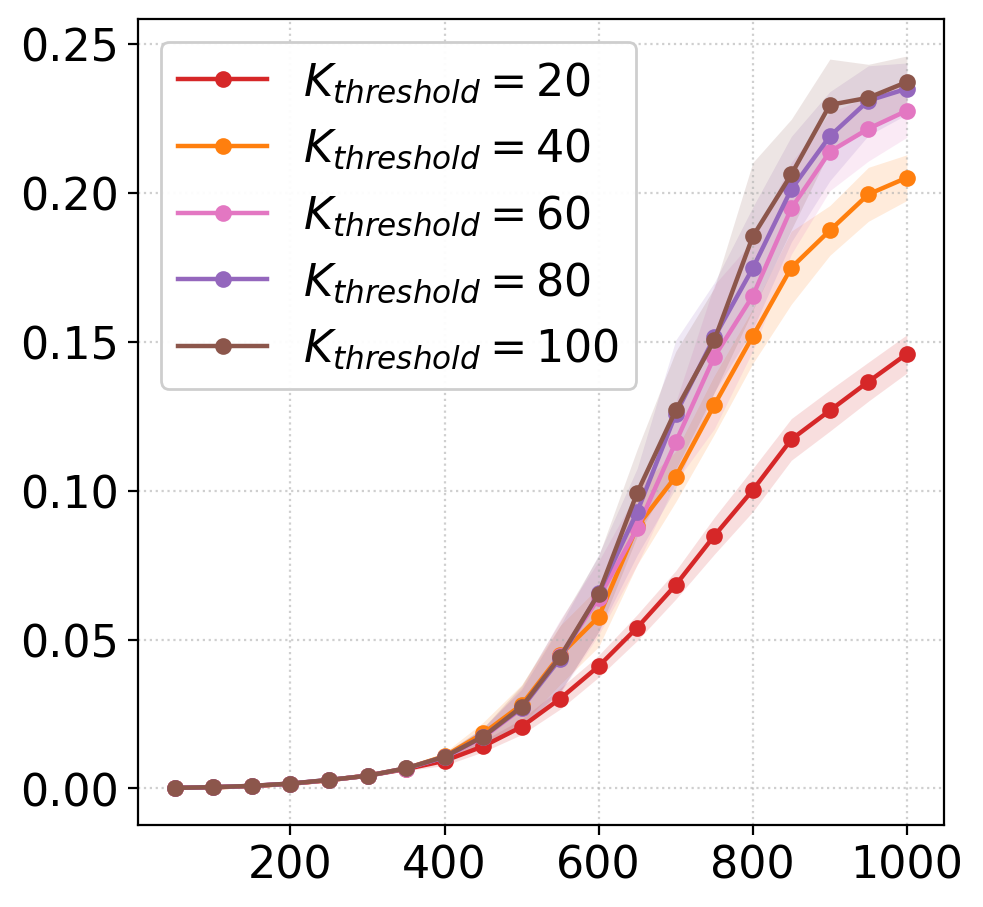}&
    \plotentry{images/warehouse1_fallback_sweep_mean_maxpt.png}
    \\[0.3em]
    & \multicolumn{3}{c}{\small agents}
  \end{tabular}
  \caption{Warehouse-10-20-10-2-1 abalation simulations: \textcolor{red}{red}, \textcolor{orange}{orange}, \textcolor{pink!70!red}{pink}, \textcolor{purple}{purple}, \textcolor{brown}{brown} is least to most of value (see legend)}
  \label{abalation_graphs}
\end{figure*}
\subsection{Abalations}
\label{abalations}
The outcome of our ablation simulations on warehouse-10-20-10-2-1 are shown in \cref{abalation_graphs}.

\subsubsection{Advantage of Smaller Visibilities}
When using different visibilities with a fixed $K_{threshold} = 20$, overall we observe that the algorithm hits this threshold sooner resulting a degradation of performance \cref{abalation_graphs}. Without this threshold, if we view RHCR as GD-RHCR with $V = \infty$ (ignoring a minor discrepancy with the replan window), we observed in previous simulations that in the region before RHCR collapse that the throughput remains close for most maps (aside from the smaller 32x32 maps) indicating the effectiveness of using smaller visibilities with soft constraints.

\subsubsection{Advantage of Smaller Soft Constraints}
When increasing $c_{soft}$, we observe a relatively fast degradation in the computation time which translates to a lower throughput when groups begin to time out.

\subsubsection{Advantage of Lower Fallback}
The simulation shows that the throughput is relatively agnostic to the increase in $K$ of $K_{threshold} = \min(0.1k, K)$. This is because for this map, the topology breaks agents into smaller groups which GD-RHCR continues to route into the large agent range and the PIBT fallback is effectively decongesting the larger groups.

\section{Simulation Times}
\label{appendix_wall}
In this section, we will show the scaled plan times and wall times for each of the following simulations shown in \cref{mapd_wall} and \cref{random_wall}. It is important to note for the following that the timeout cap placed on RHCR will cap the wall time but if this wall time is hit, this means RHCR has failed and is simply running the PIBT fallback.

Below, we have plotted both the scaled planning time and total wall times. The scaled planning time is simply the average plan time plots taken from \cref{mapd_graphs} and \cref{random_graphs} and scaled down the RHCR times by 1/5 representing the evaluation of RHCR happening 1/5th of the timesteps due to the simulation replan window $\alpha= 5$ that were used. We hope that by showing both these plots, we can get better insight into the contribution of the due to the fewer evaluations by RHCR and the error of the core contention and scheduling issues.

\textbf{Wall Times Are Faster}
Aside from some anomolous maps, we see that overall the wall times are consistently achieve at least a 2x improvement in computation time in the range before RHCR collapse. If we include the region of RHCR after collapse, this improvement ratio can often be made exponentially large as we increase the timeout time for RHCR. Also beyond the RHCR collapse region, RHCR is simply taking the fallback PIBT and not accomplishing any meaningful work.

The anomalous maps appear to be the smaller 32x32 random and room maps which is expected by our main discussion in \cref{simulations}, however, we see that warehouse-10-20-10-2-2 has an oddly large wall time that matches RHCR. Looking at the discrepancy between the scaled plan time and wall time gives a much clearer picture that this is mostly due to scheduling issues and core contention.

This primarily shows in the warehouse-10-20-10-2-2 map because the number of groups is large (the largest of all the simulations - see appendix \ref{group_stats}).

\textbf{Fallback Means Faster Wall Times}
The maps where GD-RHCR begins to rely on $SOLVER2$ (PIBT) more and cannot perform well in, we see that the computation time speeds up, sometimes significantly as in random-32-32-20 and room-32-32-var1. For larger agent counts, the groups join and become large and immediately enters PIBT, dropping the wall time. This graceful retreat to the fallback is compared to RHCR which has a computation time that is often exponential with the number of agents only to hit a timeout and falling back anyways.

\section{Group Statistics}
\label{group_stats}
A summary of group stastics for each simulation is provided in \cref{mapd_group} and \cref{random_groups}. We can see that for most maps the average number of groups increases then trends down with the number of agents. This makes sense as the density of agents increases on the map, they tend to form larger groups indicated by the larger average group size. We see clearly for the examples where GD-RHCR had throughput near PIBT (sortation-2, random-32-32-20, room-32-32-var1) the average number of groups is low towards the end range and has a high average max group size.

\newcommand{\ylabelscaled}{\rotatebox{90}{\hspace{2ex}\small scaled plan time[s]}}

\renewcommand{\ylabelwt}{\rotatebox{90}{\hspace{4ex}\small wall time [s]}}

\begin{figure*}[!t]
  \centering
  \setlength{\tabcolsep}{1pt}
\setlength{\plotheight}{3.1cm}
  \setlength{\plotwidth}{0.20\linewidth}
  \begin{tabular}{@{}c@{}cccc@{}}
    &
    \setlength{\mapwidth}{0.21\linewidth}
    \setlength{\mapheight}{2cm}
    \mapentry{maps/warehouse-10-20-10-2-1_placements2.png}{warehouse-10-20-10-2-1} &
    \setlength{\mapwidth}{0.21\linewidth}
    \setlength{\mapheight}{2cm}
    \mapentry{maps/warehouse-10-20-10-2-2_placements2.png}{warehouse-10-20-10-2-2} &
    \setlength{\mapheight}{2cm}
    \setlength{\mapwidth}{0.19\linewidth}
    \mapentry{maps/sortation_w50_h22_gx1_gy1_bx3_by3_skip8_lead2_placements.png}{sortation-1}&
    \setlength{\mapwidth}{0.21\linewidth}
    \setlength{\mapheight}{2cm}
    \mapentry{maps/sortation_w25_h11_gx2_gy2_bx3_by3_skip9_lead2_placements.png}{sortation-2}
    \\[0.8em]
    \\[0.3em]

    \ylabelscaled &
    \plotentry{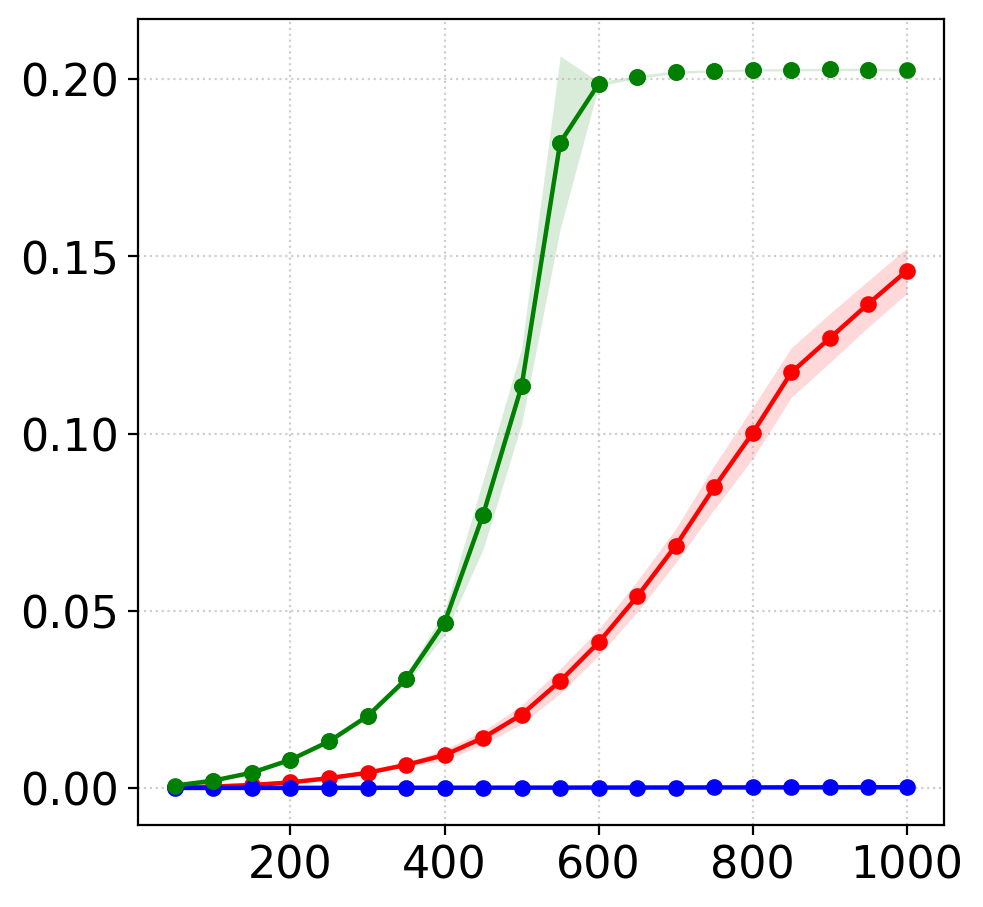} &
    \plotentry{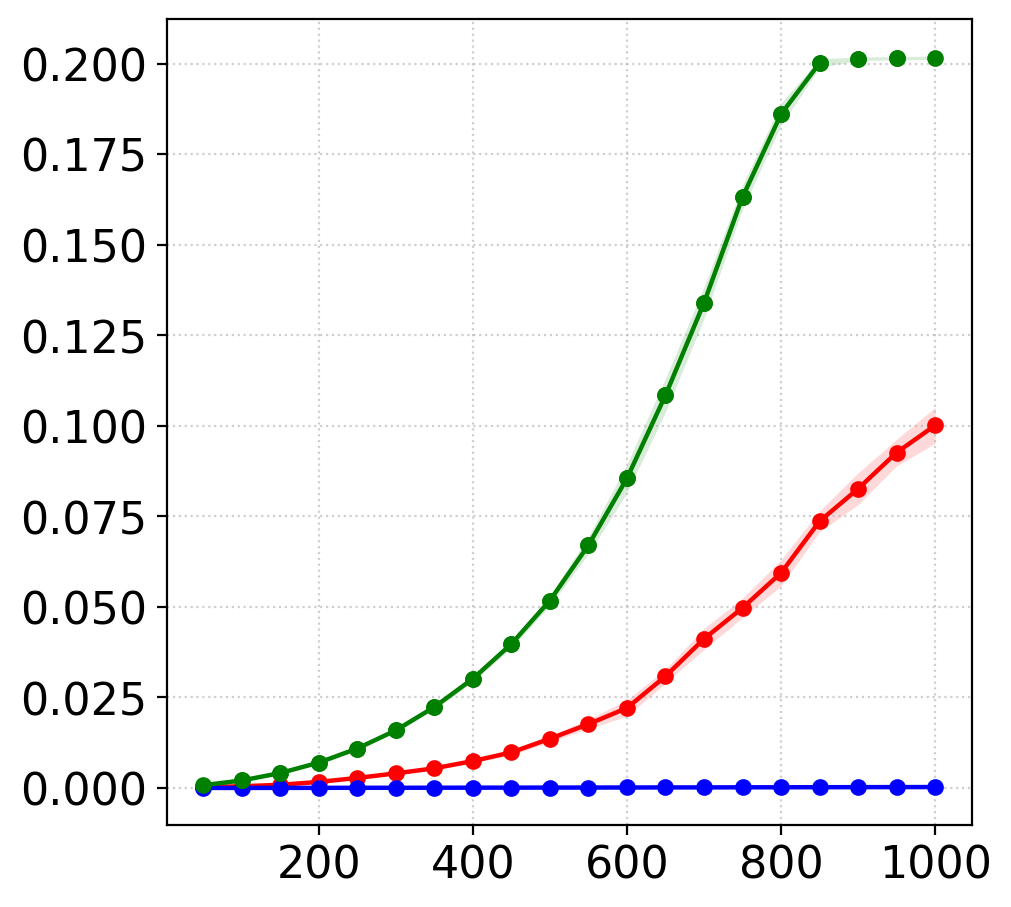} &
    \plotentry{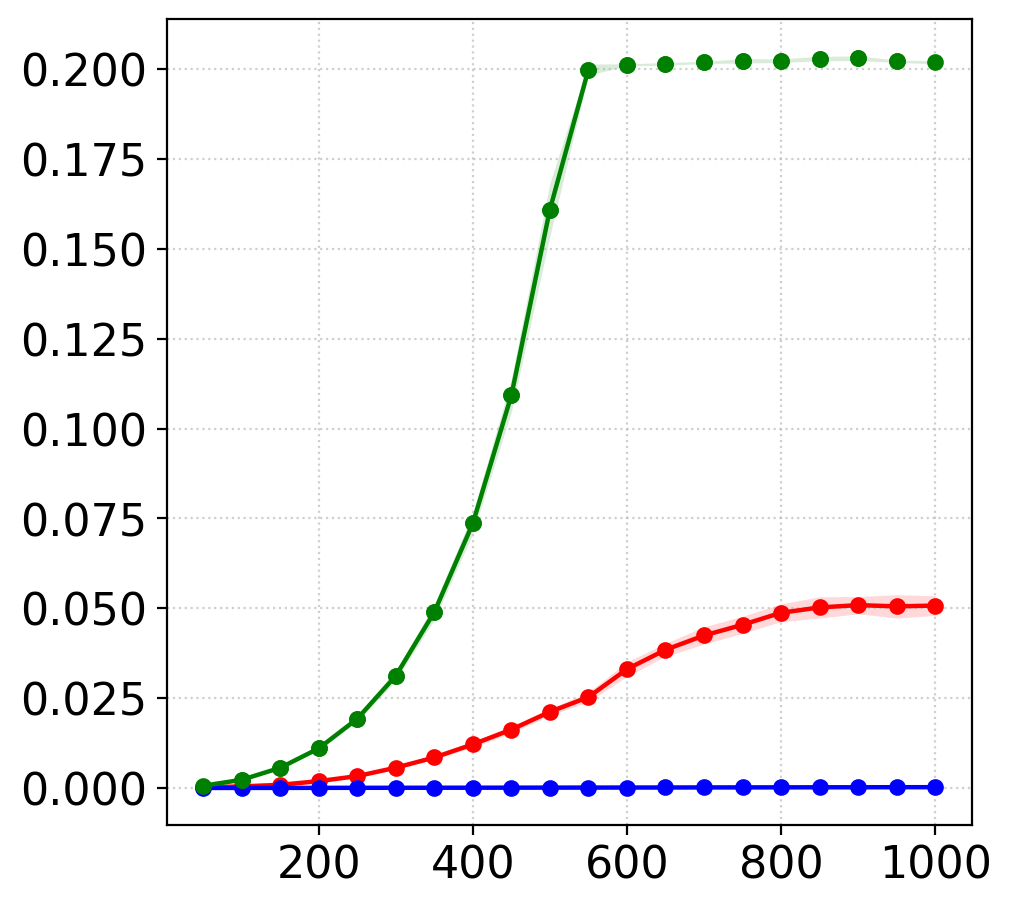} &
    \plotentry{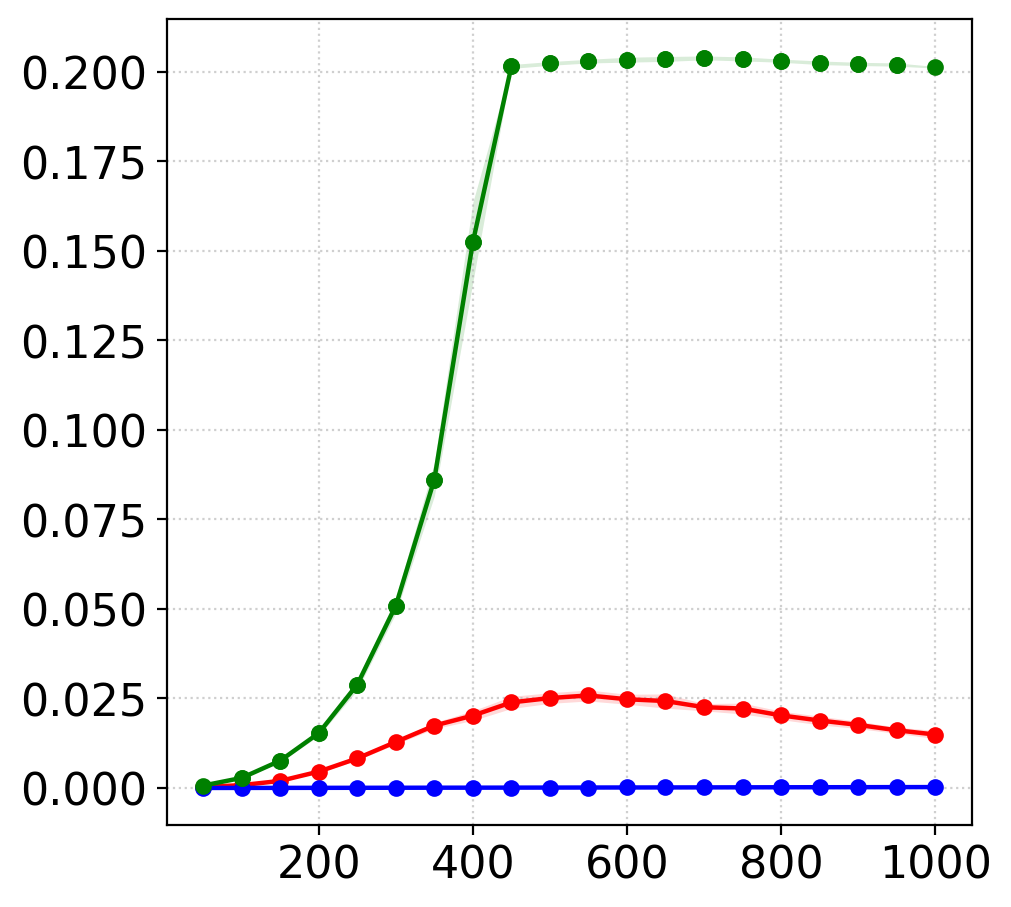}
    \\
    \\[0.2em]
    \ylabelwt &
    \plotentry{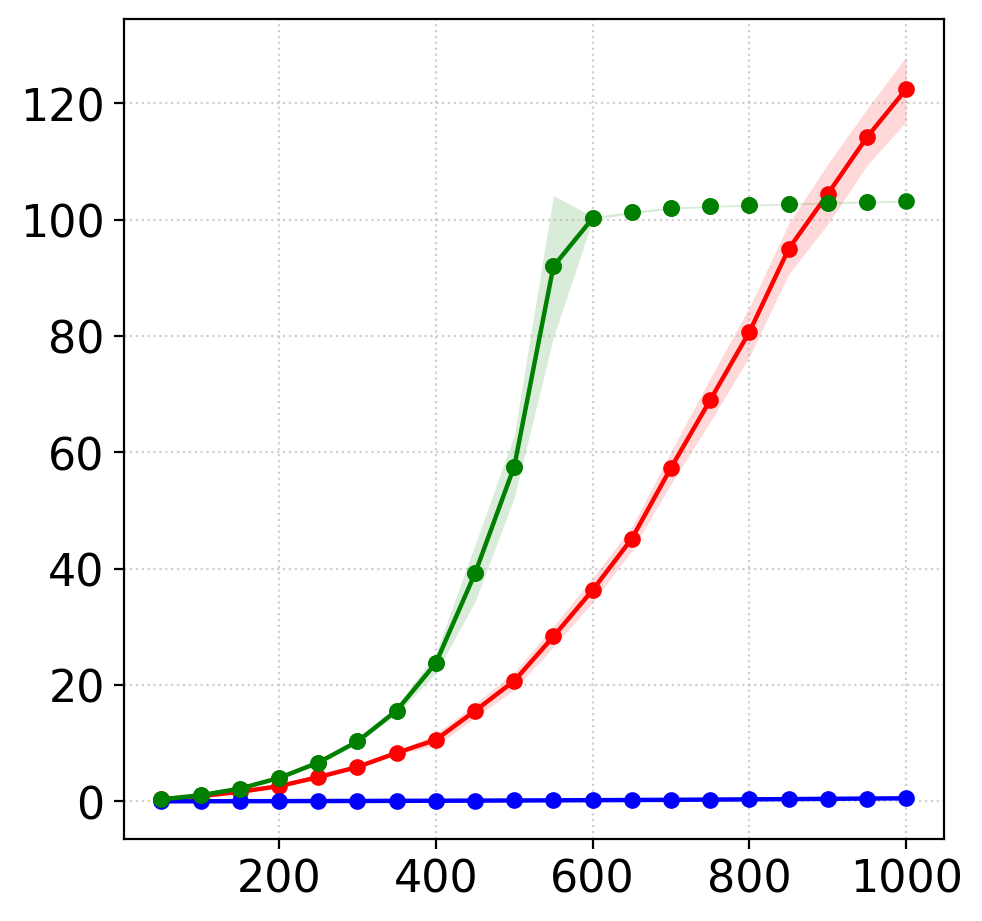} &
    \plotentry{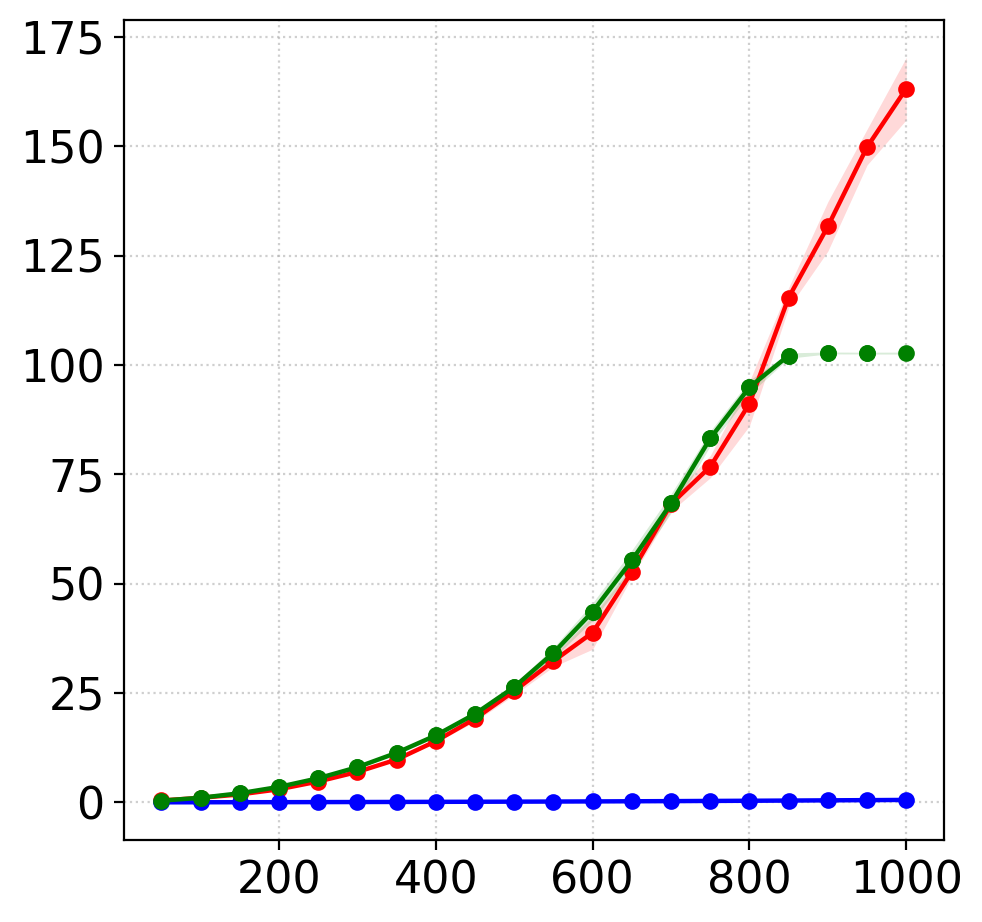} &
    \plotentry{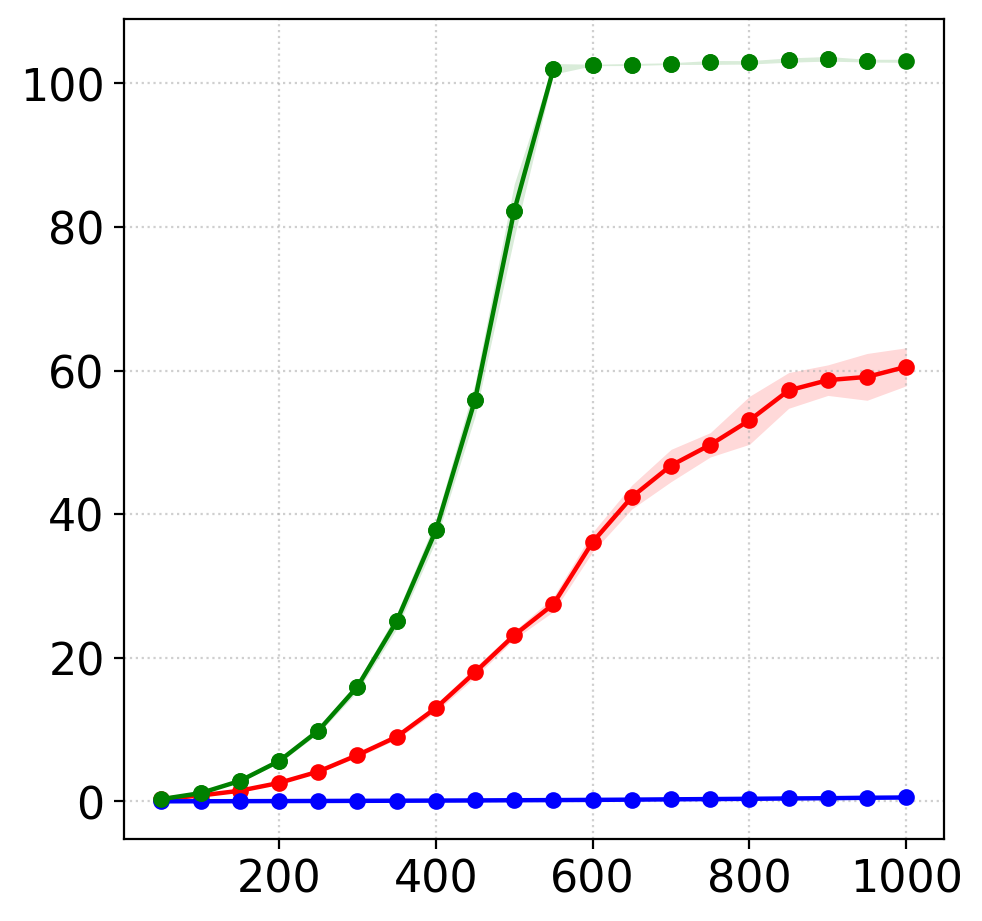} &
    \plotentry{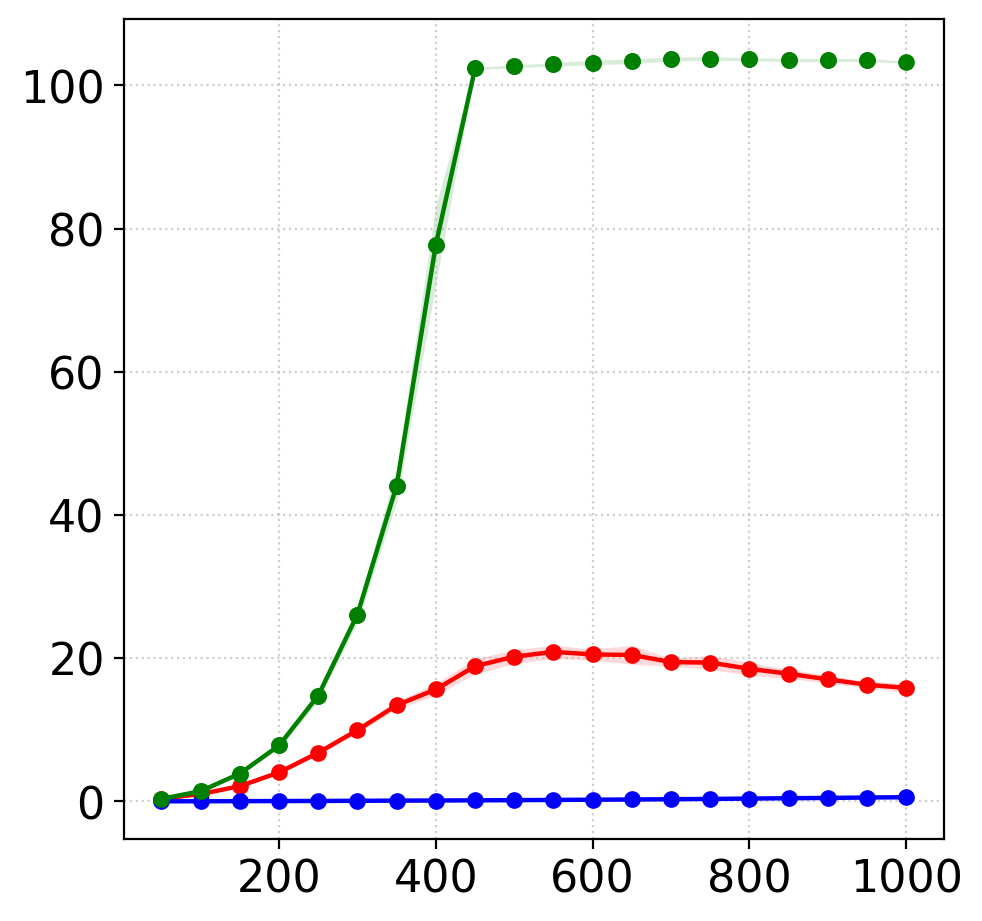}
    \\
    & \multicolumn{4}{c}{\small agents}
    \\[0.2em]
  \end{tabular}
  \caption{Wall times for MAPD. The color coding is \textcolor{red}{GD-RHCR}, \textcolor{green!50!black}{RHCR}, and \textcolor{blue}{PIBT}}
  \label{mapd_wall}
\end{figure*}

\begin{figure*}[!ht]
  \centering
  \setlength{\tabcolsep}{1pt}
  \setlength{\plotwidth}{0.16\linewidth}
  \setlength{\plotheight}{2.8cm}
  \begin{tabular}{@{}c@{}cccccc@{}}
    \mapentry{maps/random-64-64-20_hotspots_full.png}{random-64-64-20} &
    \mapentry{maps/room-64-64-uniform_1_hotspots_full.png}{room-64-64-var1} &
    \mapentry{maps/room-64-64-uniform_2_hotspots_full.png}{room-64-64-var2}&
    \mapentry{maps/random-32-32-20_hotspots_full}{random-32-32-20} &
    \mapentry{maps/room-32-32-uniform_1_hotspots_full.png}{room-32-32-var1} &
    \mapentry{maps/empty-48-48_hotspots_full.png}{empty-48-48} &
    \\[0.8em]
    \ylabelscaled
    \plotentry{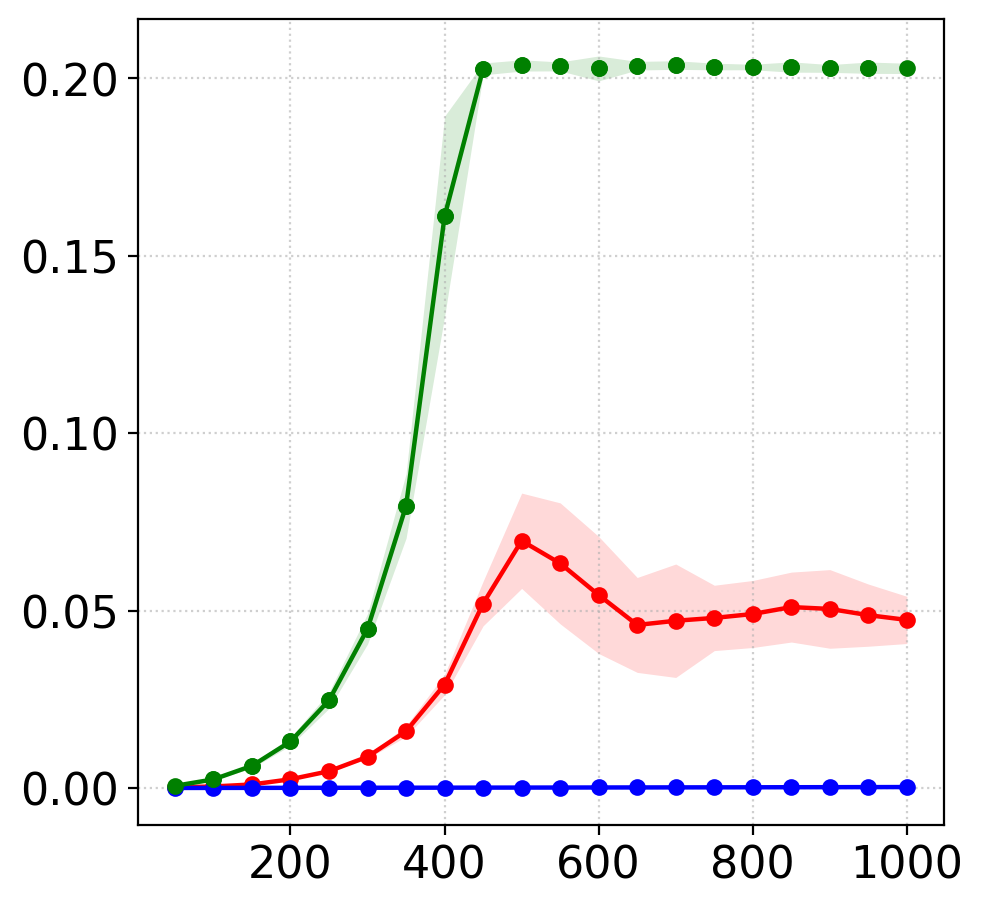} &
    \plotentry{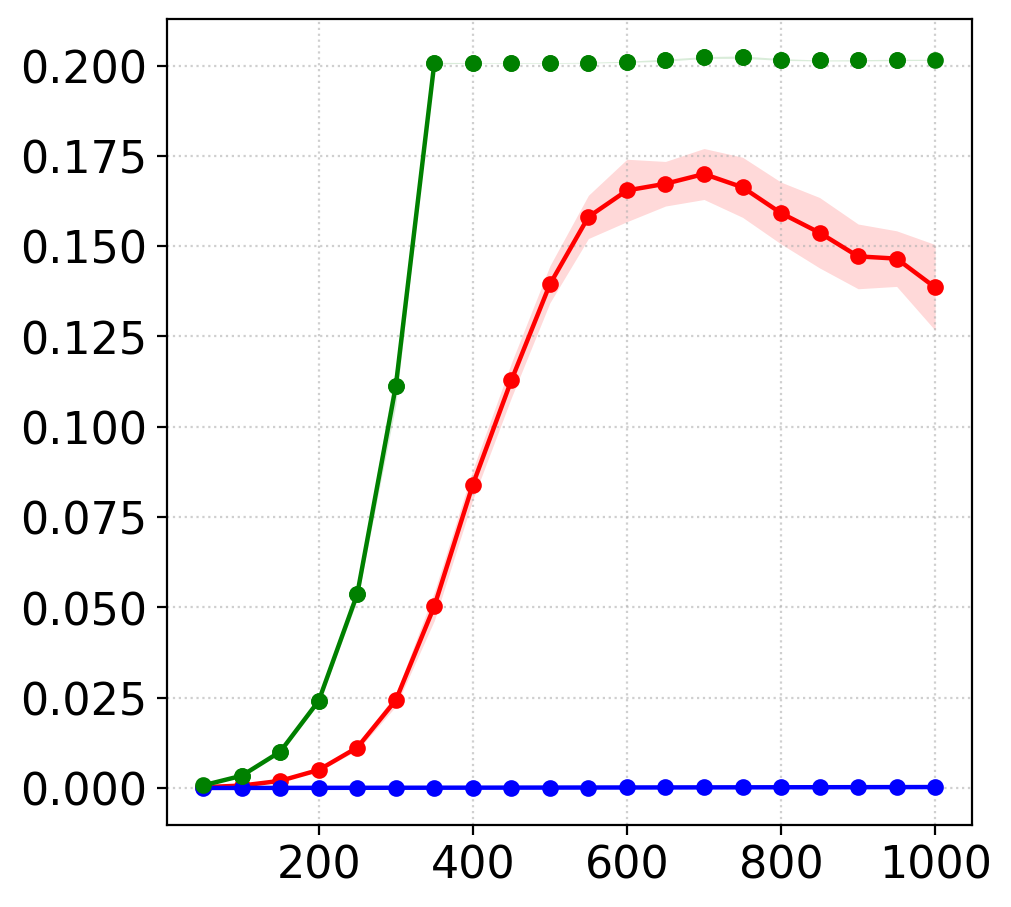}&
    \plotentry{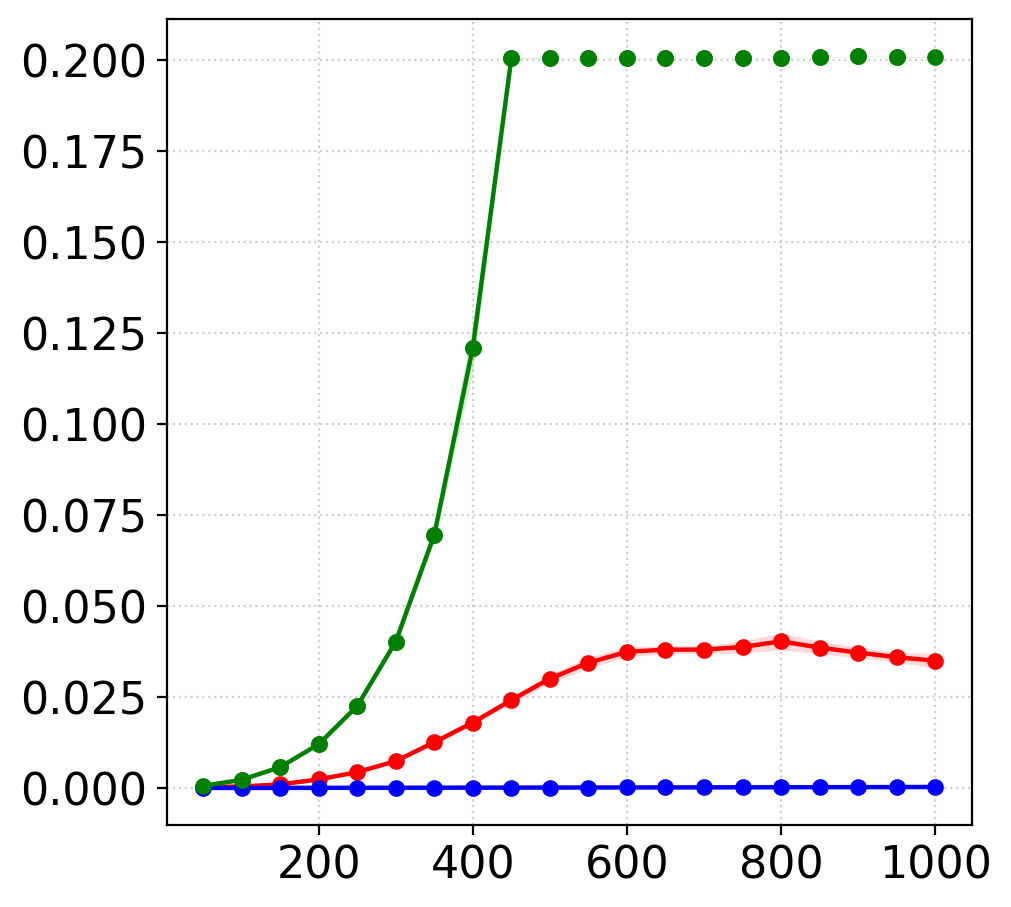}&
    \plotentry{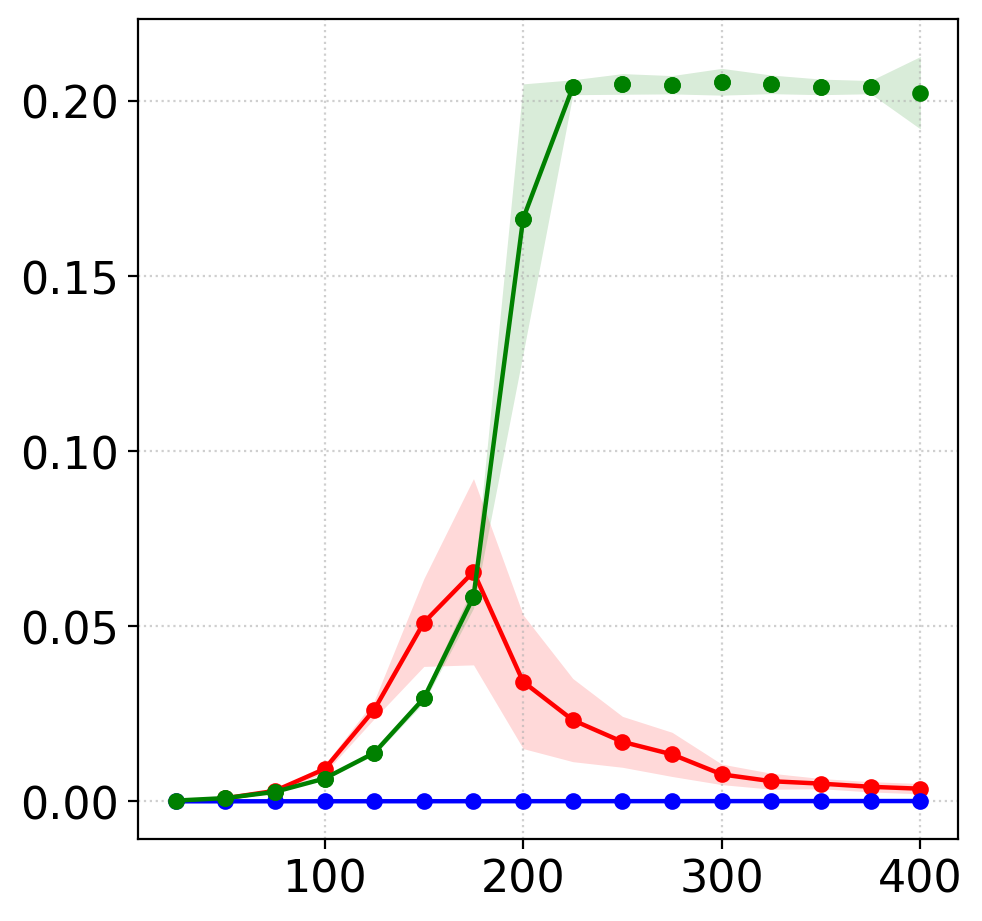} &
    \plotentry{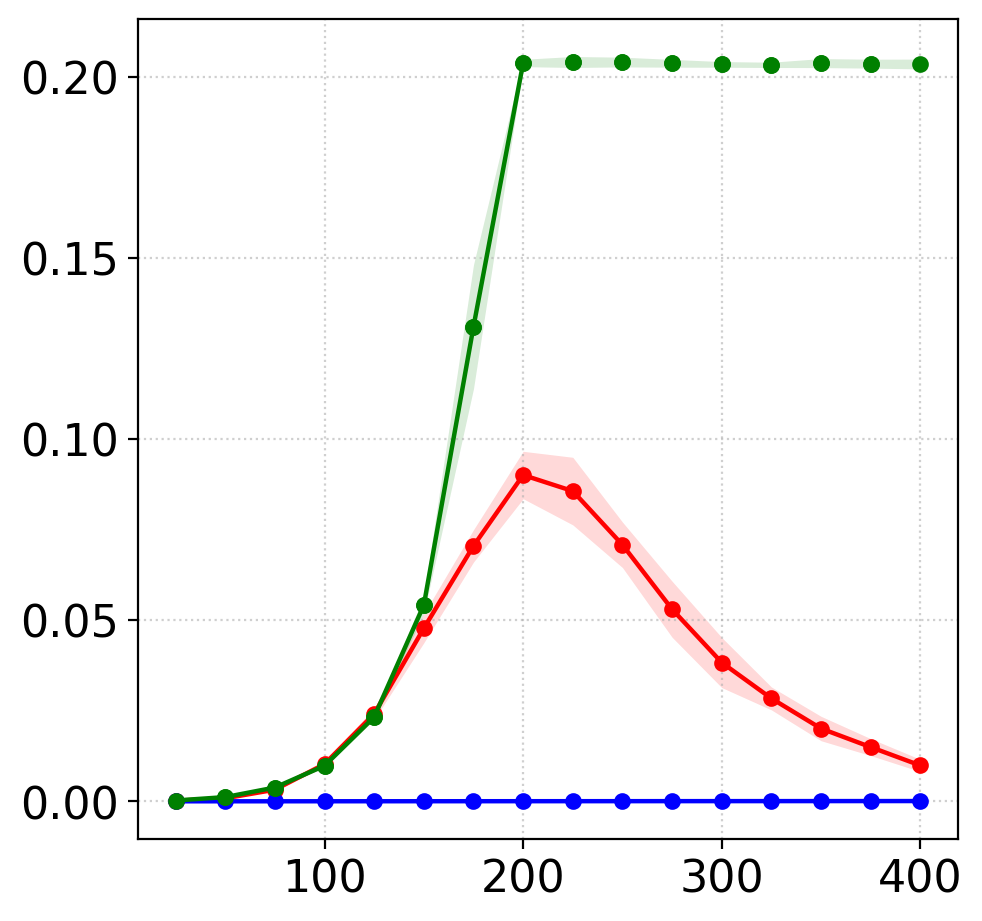} &
    \plotentry{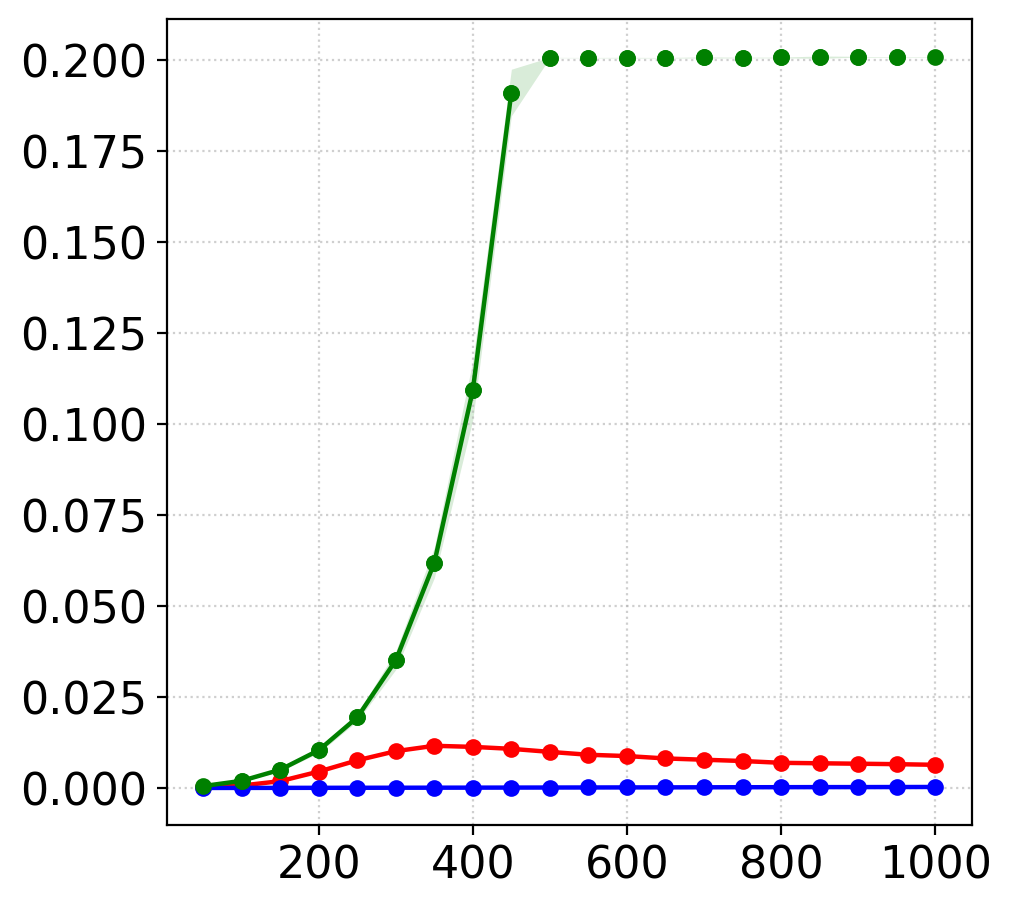} &
    \\
    \\[0.2em]
    \ylabelwt
    \plotentry{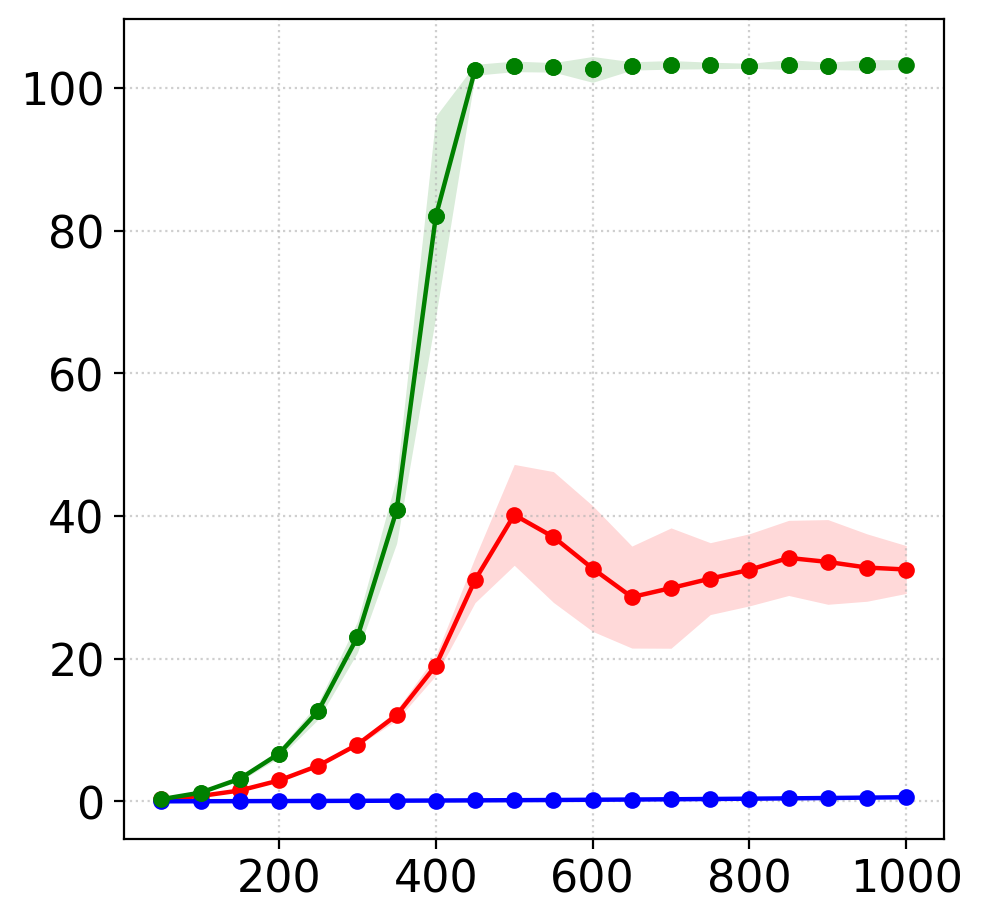} &
    \plotentry{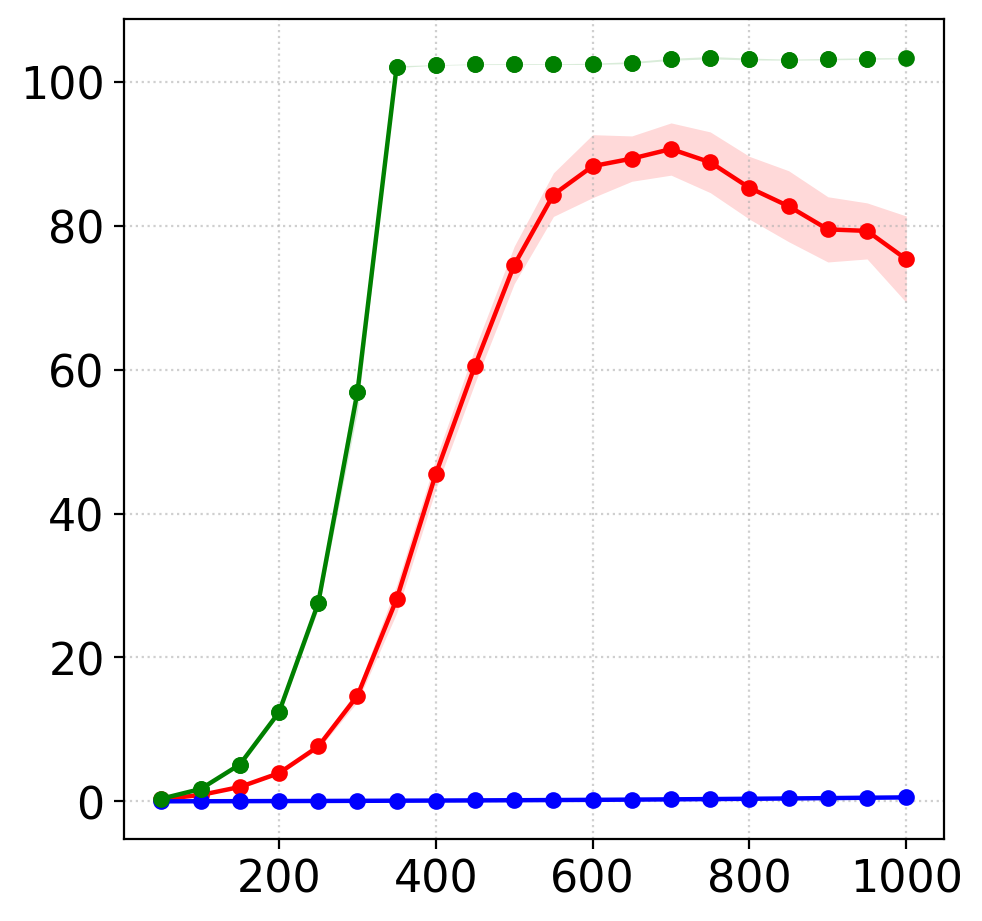}&
    \plotentry{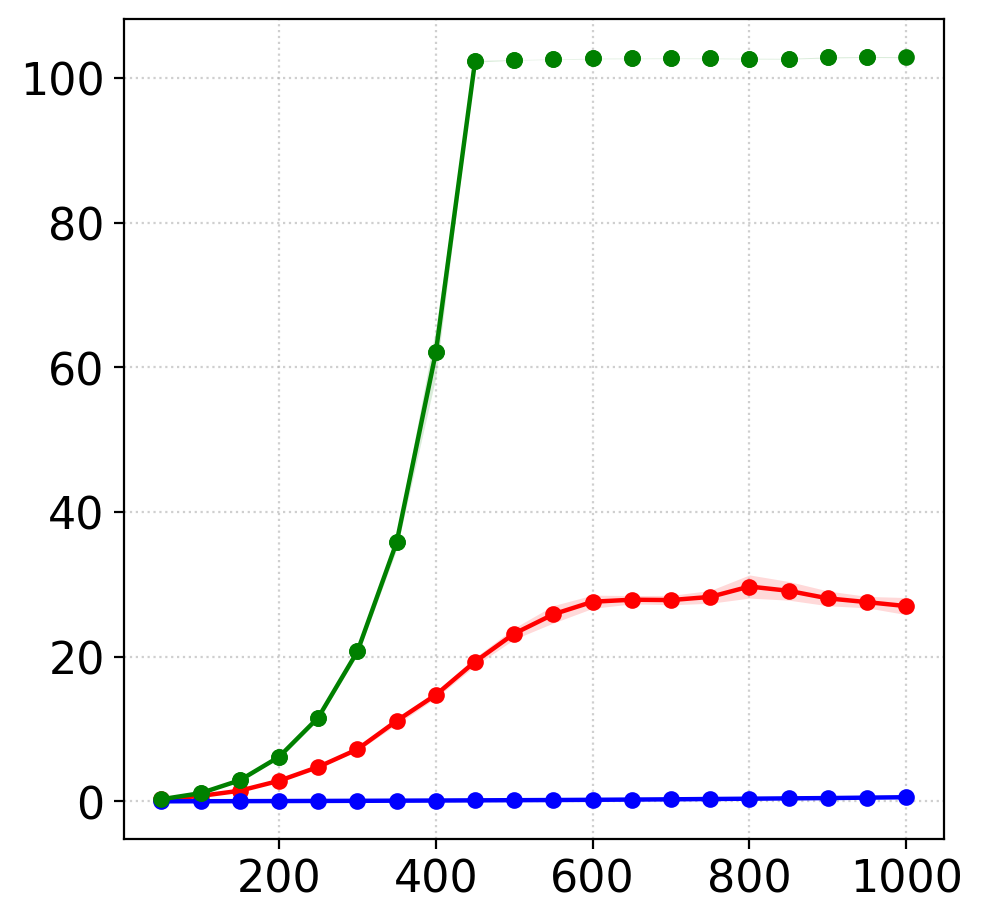}&
    \plotentry{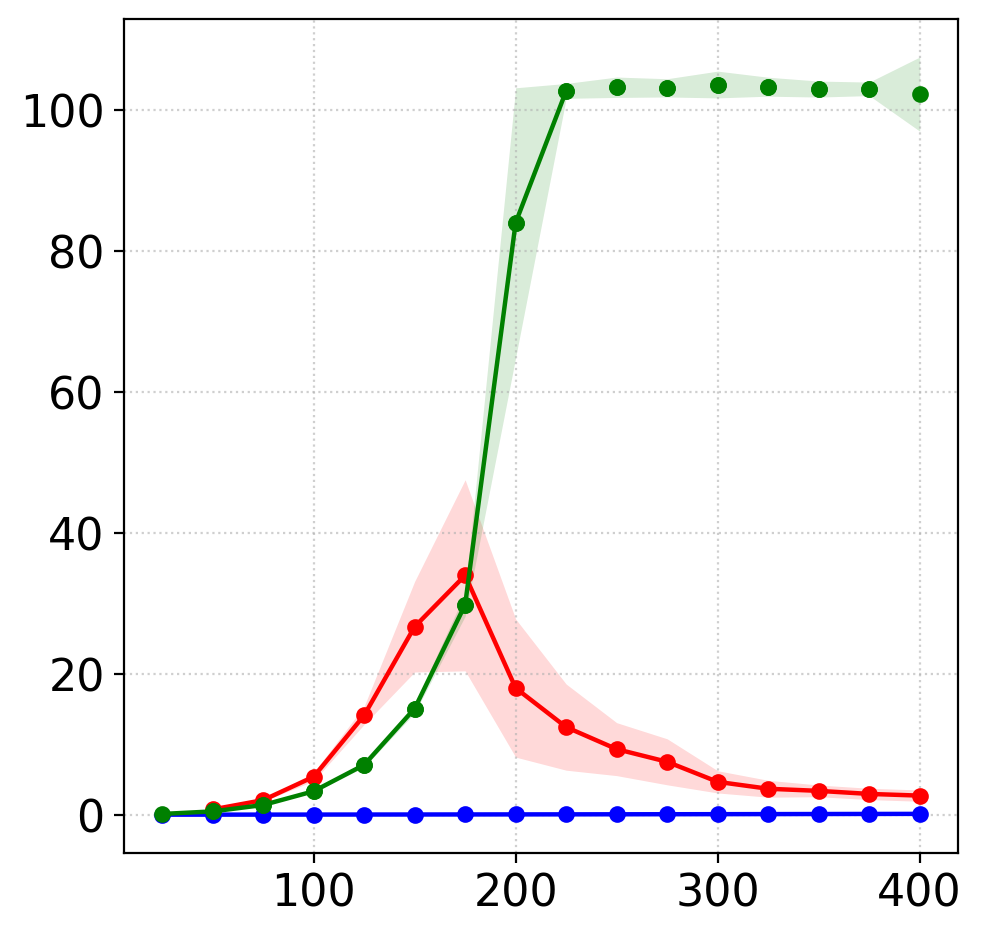} &
    \plotentry{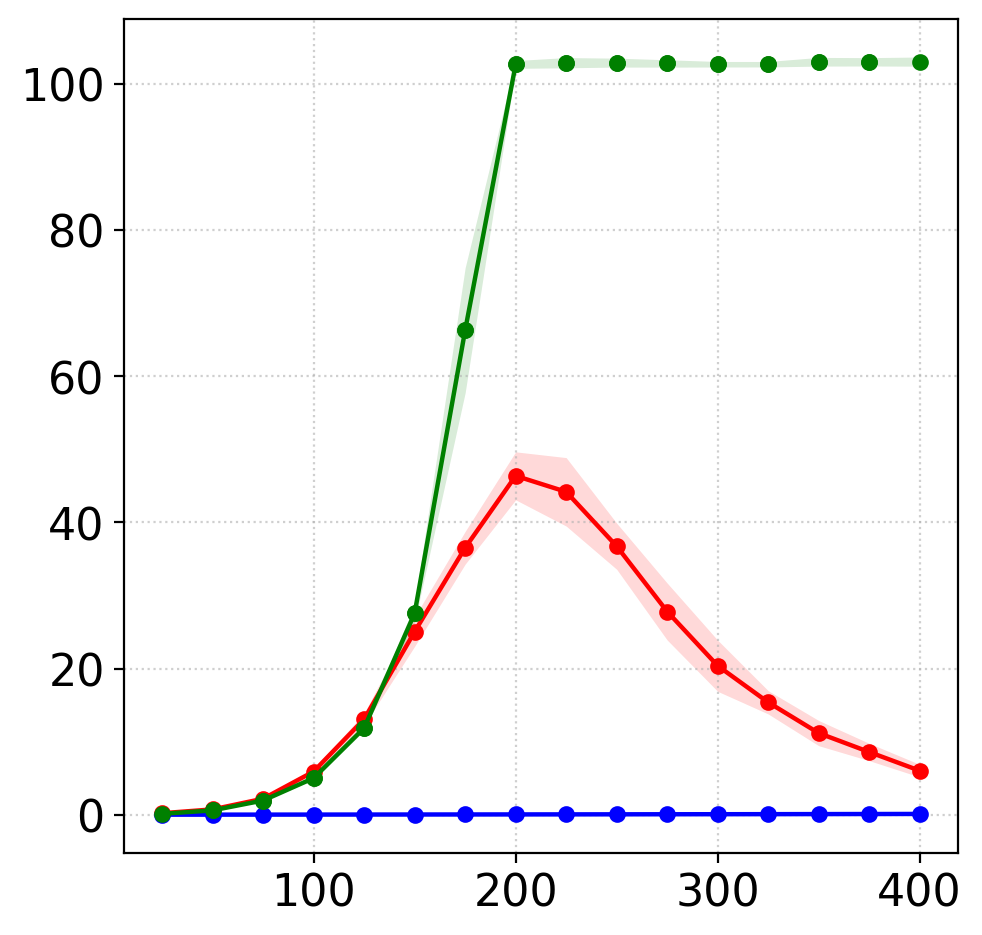} &
    \plotentry{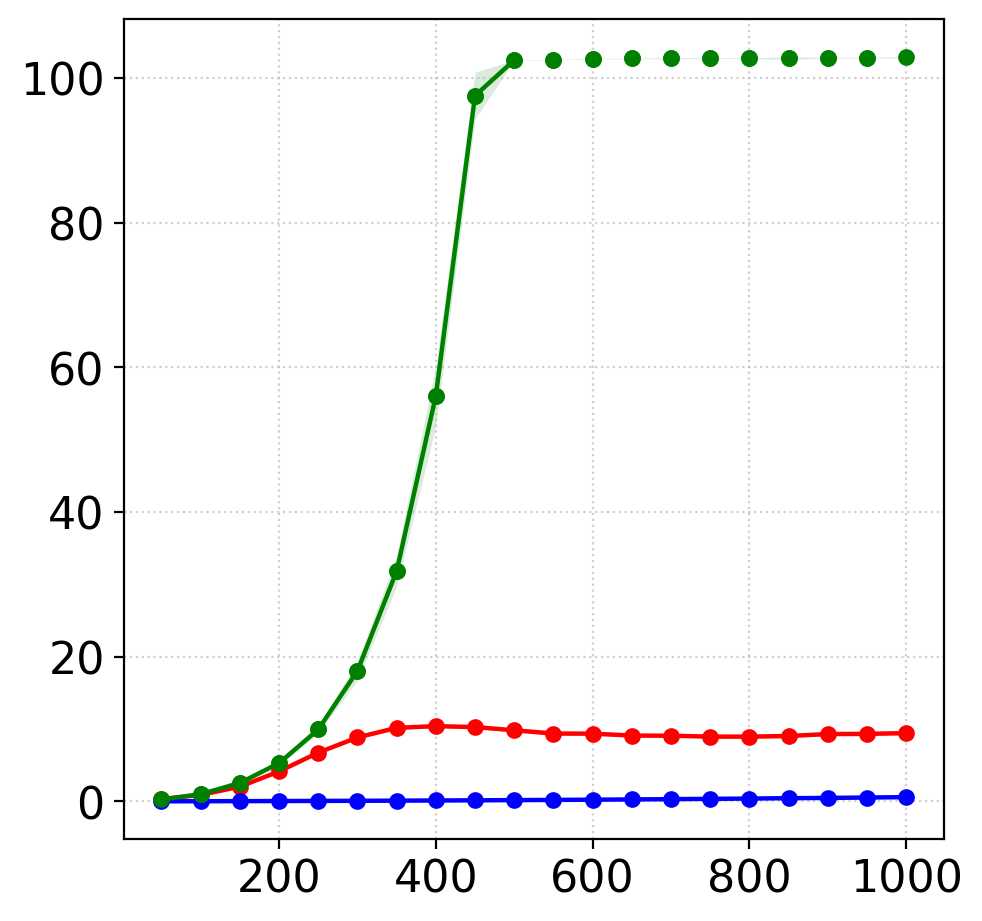} &
    \\
    & \multicolumn{4}{c}{agents}
  \end{tabular}
  \caption{Wall times for random navigation. The color coding is \textcolor{red}{GD-RHCR}, \textcolor{green!50!black}{RHCR}, and \textcolor{blue}{PIBT}}
  \label{random_wall}
\end{figure*}

\newcommand{\ylabelnumgroups}{\rotatebox{90}{\hspace{3ex}\small mean num groups}}
\newcommand{\ylabelmeangroupsize}{\rotatebox{90}{\hspace{3ex} \small mean group size}}
\newcommand{\ylabelmaxgroupsize}{\rotatebox{90}{\hspace{1ex}\small mean max group size}}

\begin{figure*}[!t]
  \centering
  \setlength{\tabcolsep}{1pt}
\setlength{\plotheight}{3.1cm}
  \setlength{\plotwidth}{0.20\linewidth}
  \begin{tabular}{@{}c@{}cccc@{}}
    &
    \setlength{\mapwidth}{0.21\linewidth}
    \setlength{\mapheight}{2cm}
    \mapentry{maps/warehouse-10-20-10-2-1_placements2.png}{warehouse-10-20-10-2-1} &
    \setlength{\mapwidth}{0.21\linewidth}
    \setlength{\mapheight}{2cm}
    \mapentry{maps/warehouse-10-20-10-2-2_placements2.png}{warehouse-10-20-10-2-2} &
    \setlength{\mapheight}{2cm}
    \setlength{\mapwidth}{0.19\linewidth}
    \mapentry{maps/sortation_w50_h22_gx1_gy1_bx3_by3_skip8_lead2_placements.png}{sortation-1}&
    \setlength{\mapwidth}{0.21\linewidth}
    \setlength{\mapheight}{2cm}
    \mapentry{maps/sortation_w25_h11_gx2_gy2_bx3_by3_skip9_lead2_placements.png}{sortation-2}
    \\[0.8em]
    \\[0.3em]

    \ylabelnumgroups &
    \plotentry{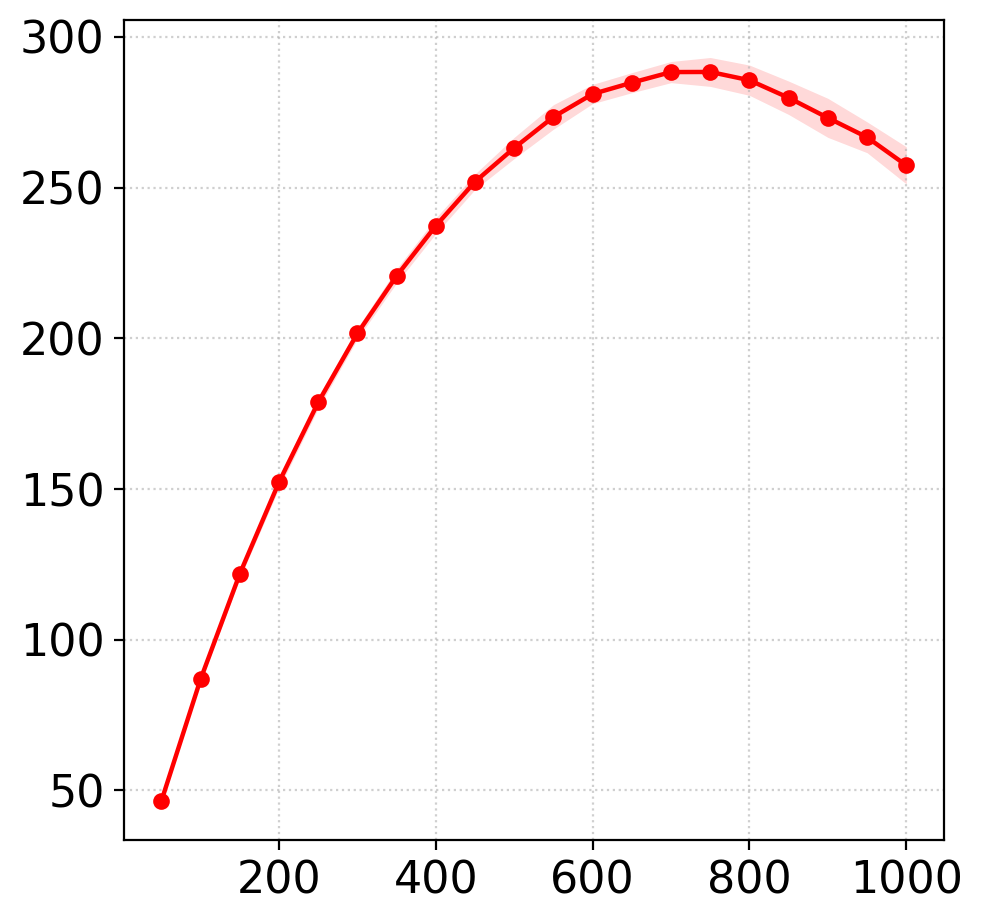} &
    \plotentry{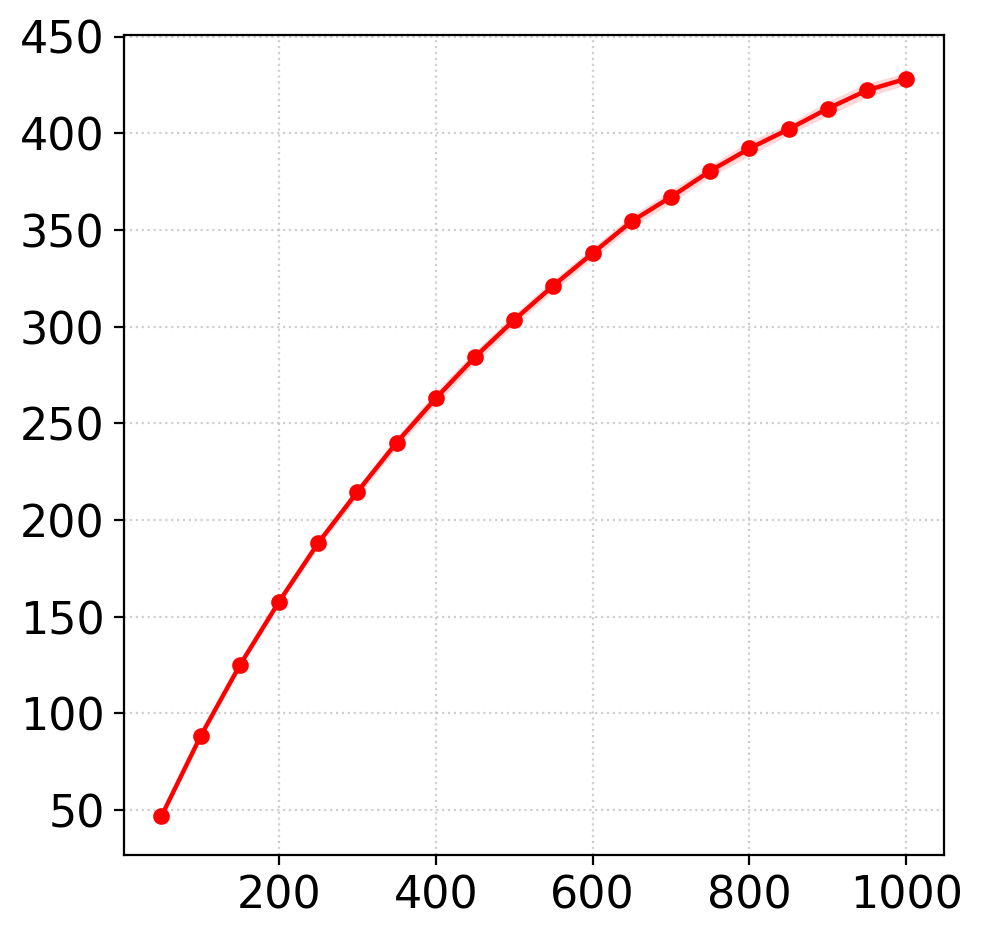} &
    \plotentry{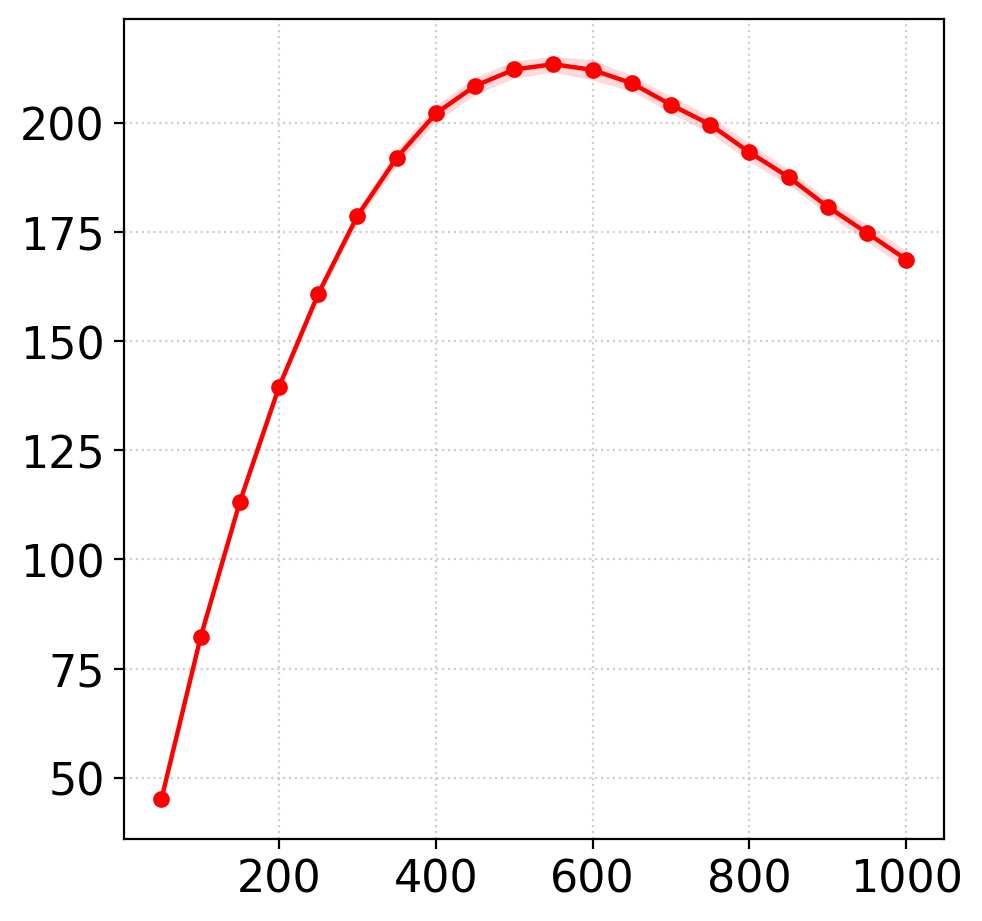} &
    \plotentry{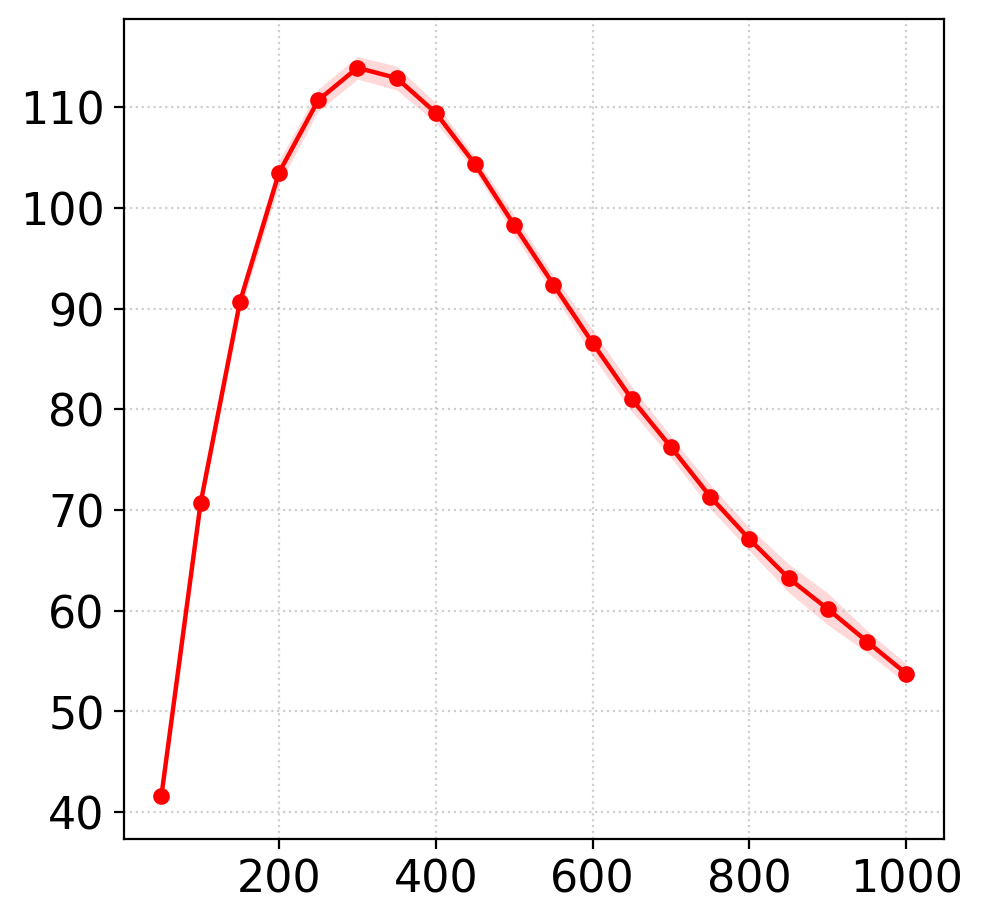}
    \\
    \ylabelmeangroupsize &
    \plotentry{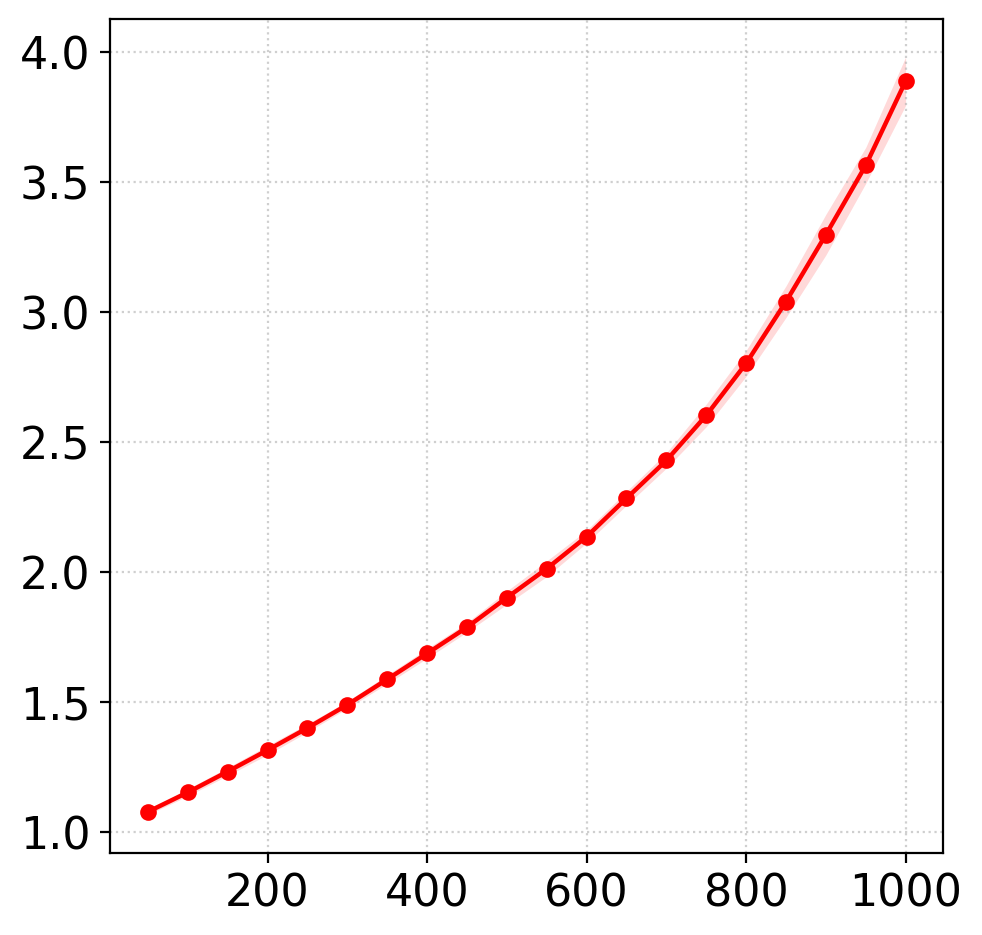} &
    \plotentry{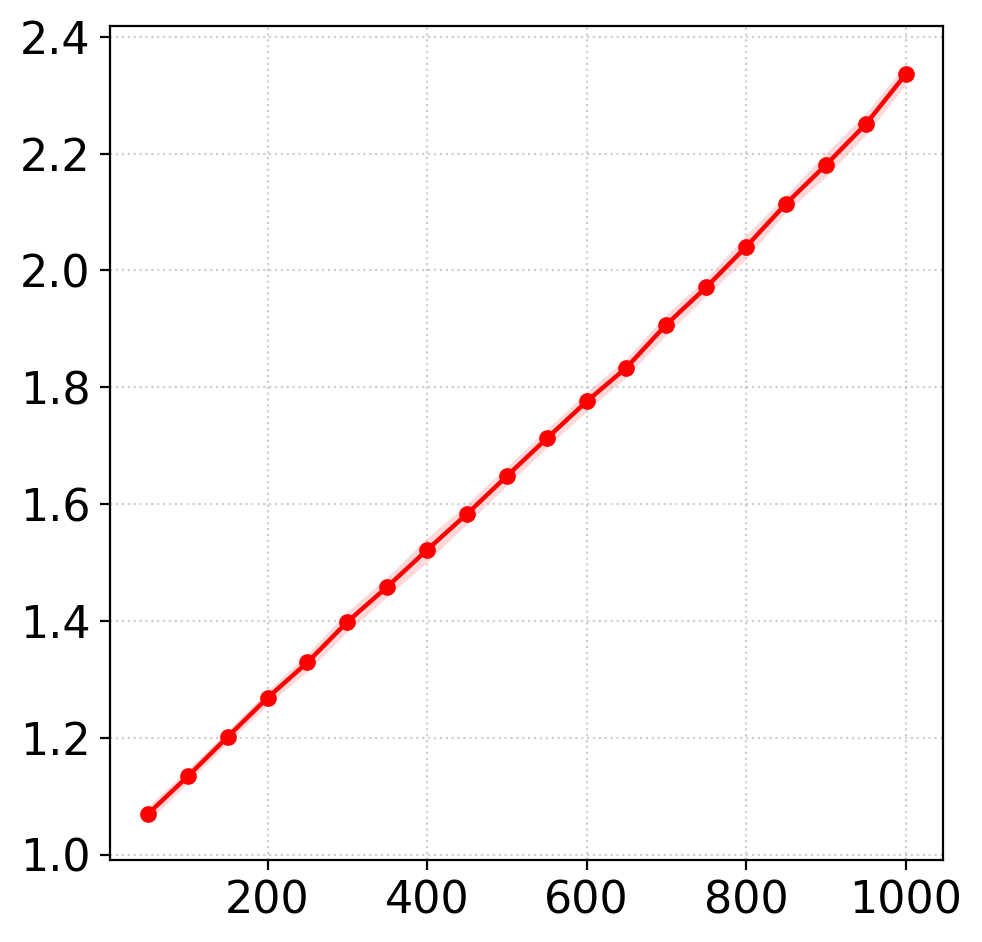} &
    \plotentry{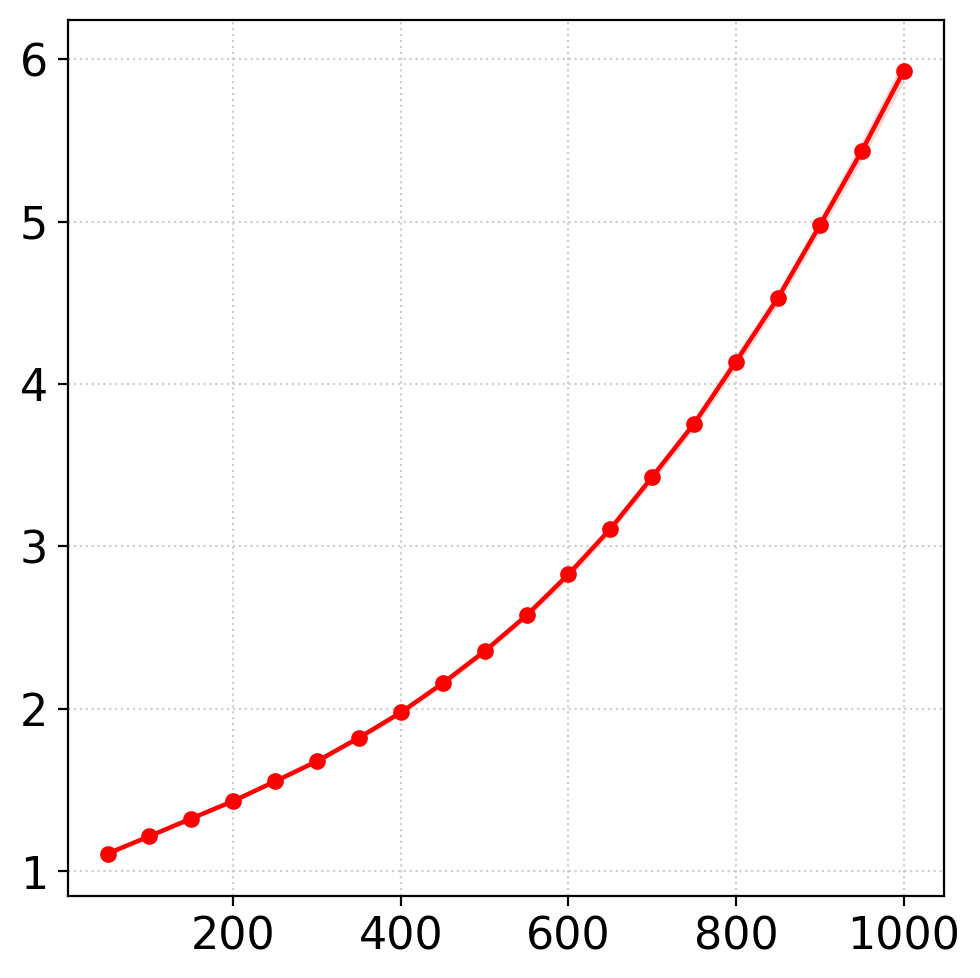} &
    \plotentry{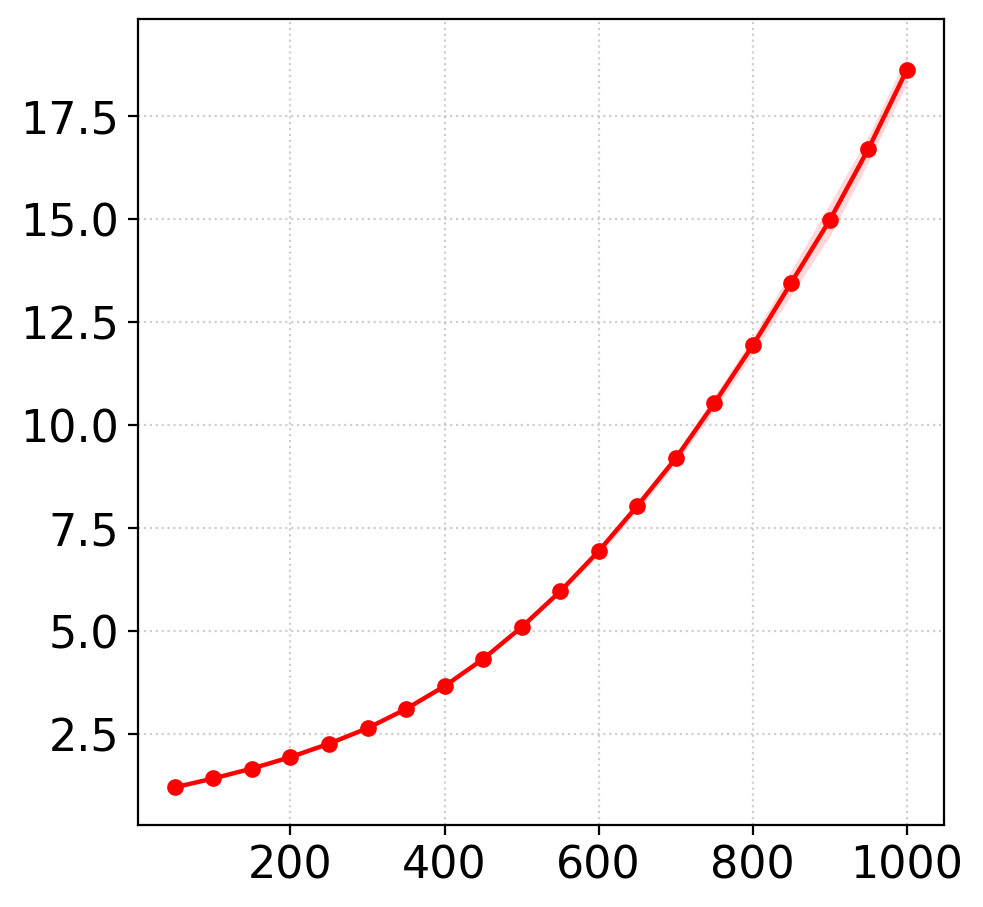}
    \\
    \ylabelmaxgroupsize &
    \plotentry{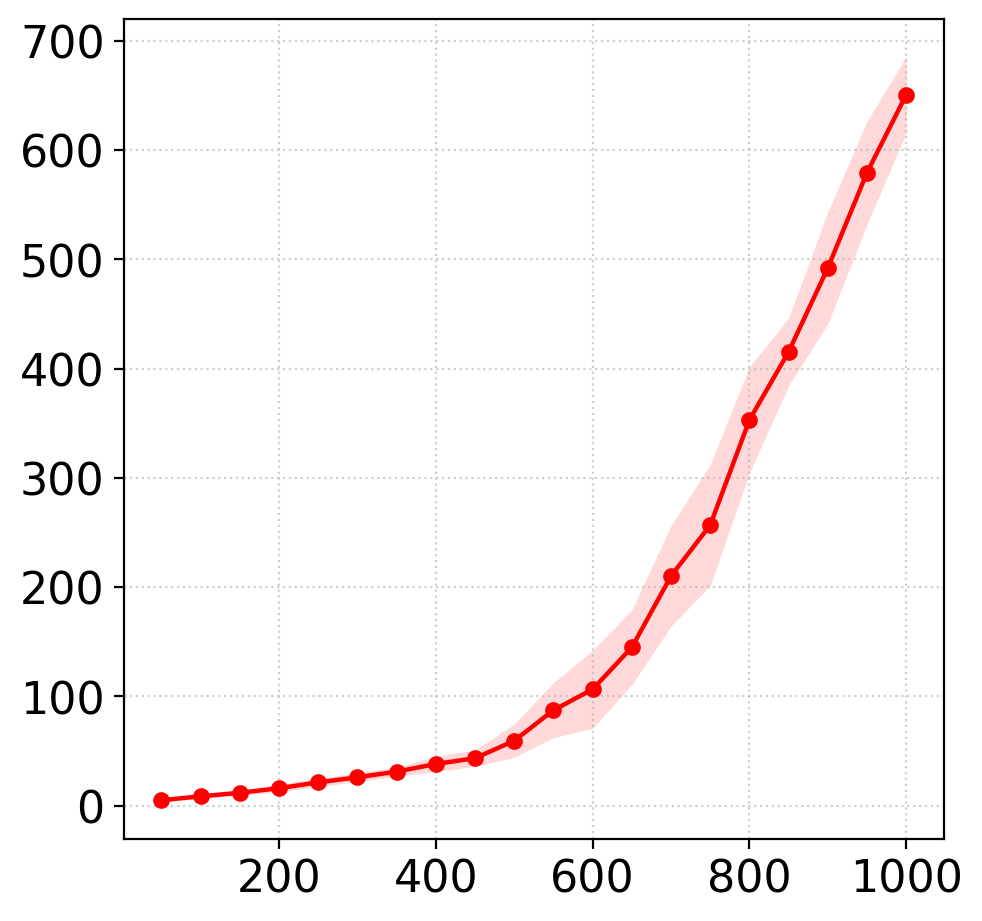} &
    \plotentry{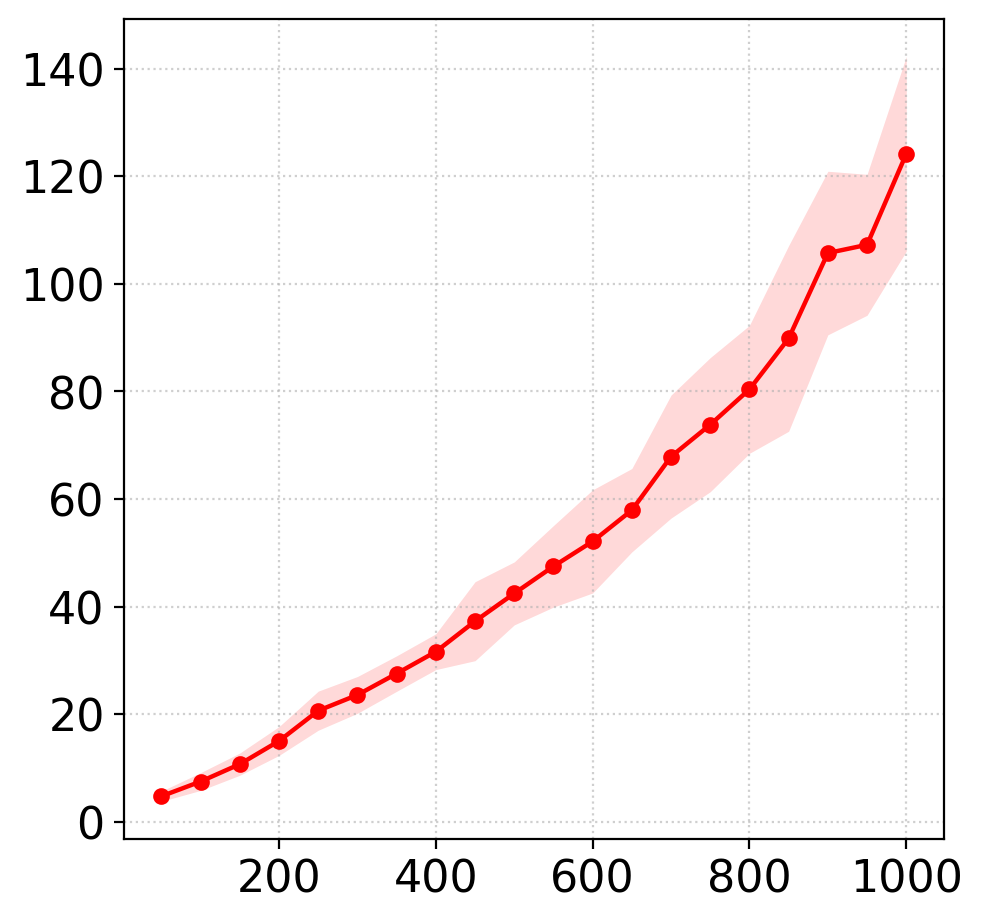} &
    \plotentry{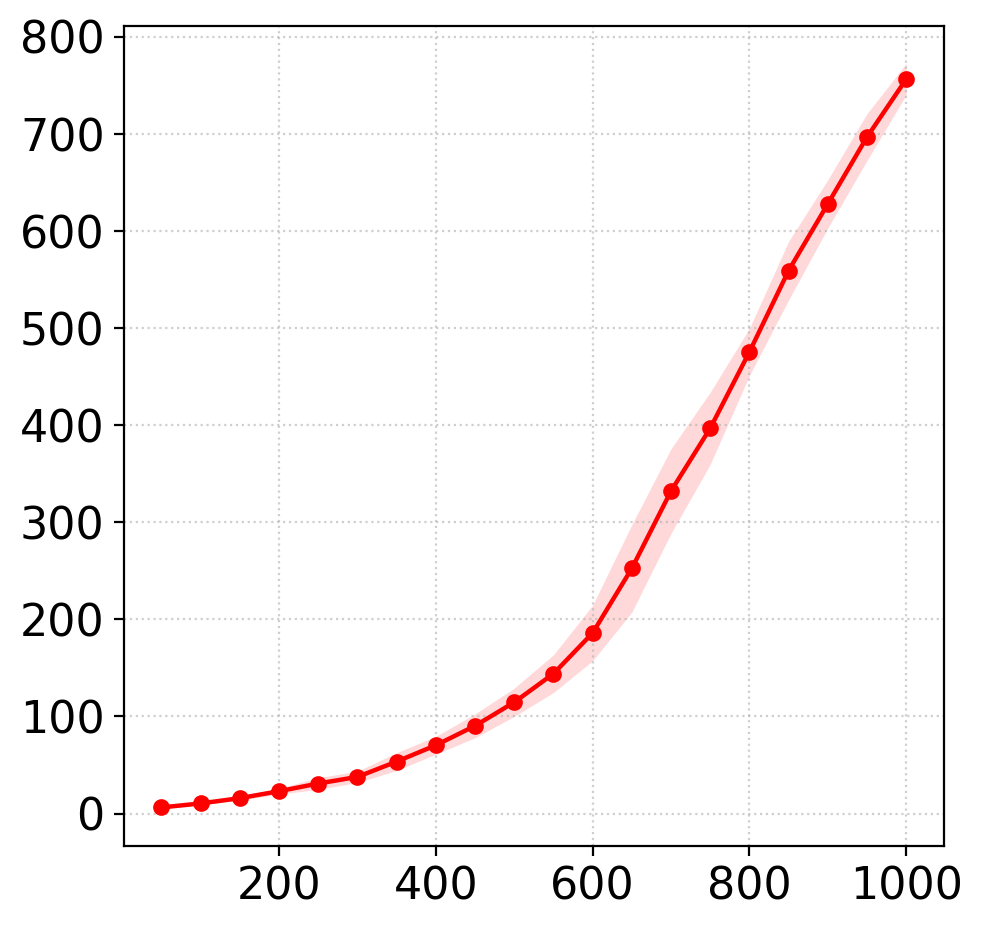} &
    \plotentry{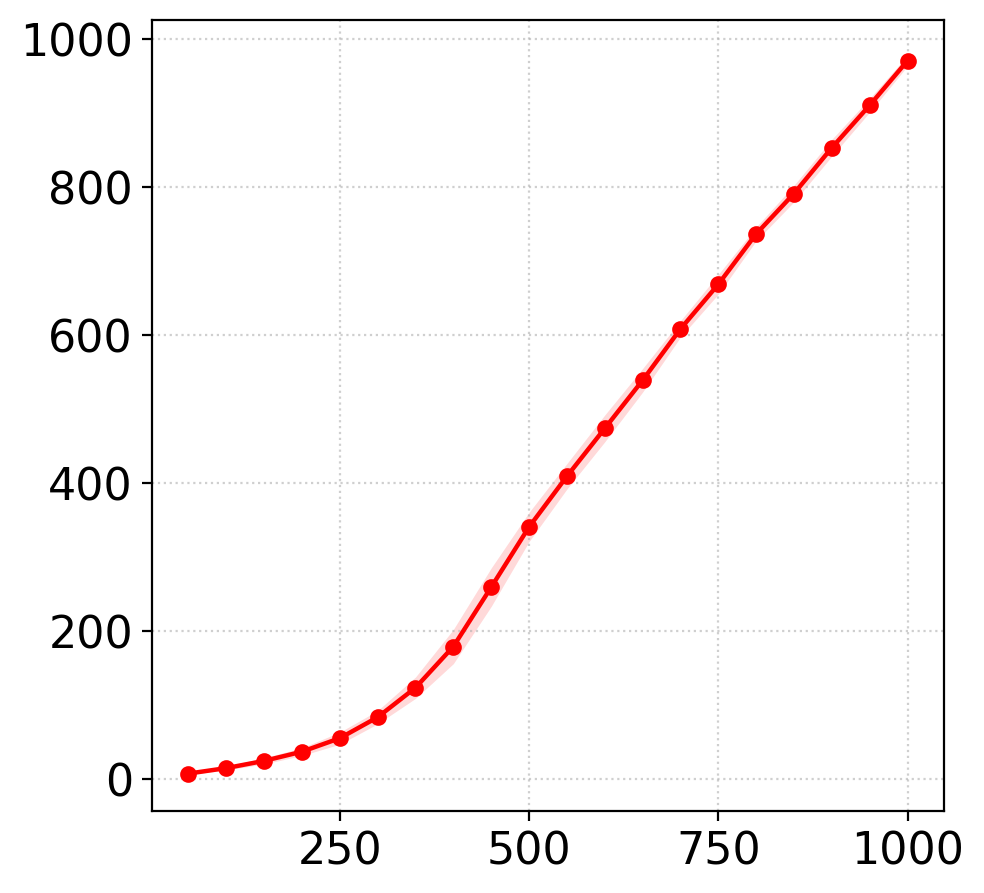}
    \\
    & \multicolumn{4}{c}{\small agents}
    \\[0.2em]
  \end{tabular}
  \caption{MAPD group statistics for GD-RHCR. The color coding is \textcolor{red}{GD-RHCR}}
  \label{mapd_group}
\end{figure*}

\begin{figure*}[!ht]
  \centering
  \setlength{\tabcolsep}{1pt}
  \setlength{\plotwidth}{0.16\linewidth}
  \setlength{\plotheight}{2.8cm}
  \begin{tabular}{@{}c@{}cccccc@{}}
    \mapentry{maps/random-64-64-20_hotspots_full.png}{random-64-64-20} &
    \mapentry{maps/room-64-64-uniform_1_hotspots_full.png}{room-64-64-var1} &
    \mapentry{maps/room-64-64-uniform_2_hotspots_full.png}{room-64-64-var2}&
    \mapentry{maps/random-32-32-20_hotspots_full}{random-32-32-20} &
    \mapentry{maps/room-32-32-uniform_1_hotspots_full.png}{room-32-32-var1} &
    \mapentry{maps/empty-48-48_hotspots_full.png}{empty-48-48} &
    \\[0.8em]
    \ylabelnumgroups
    \plotentry{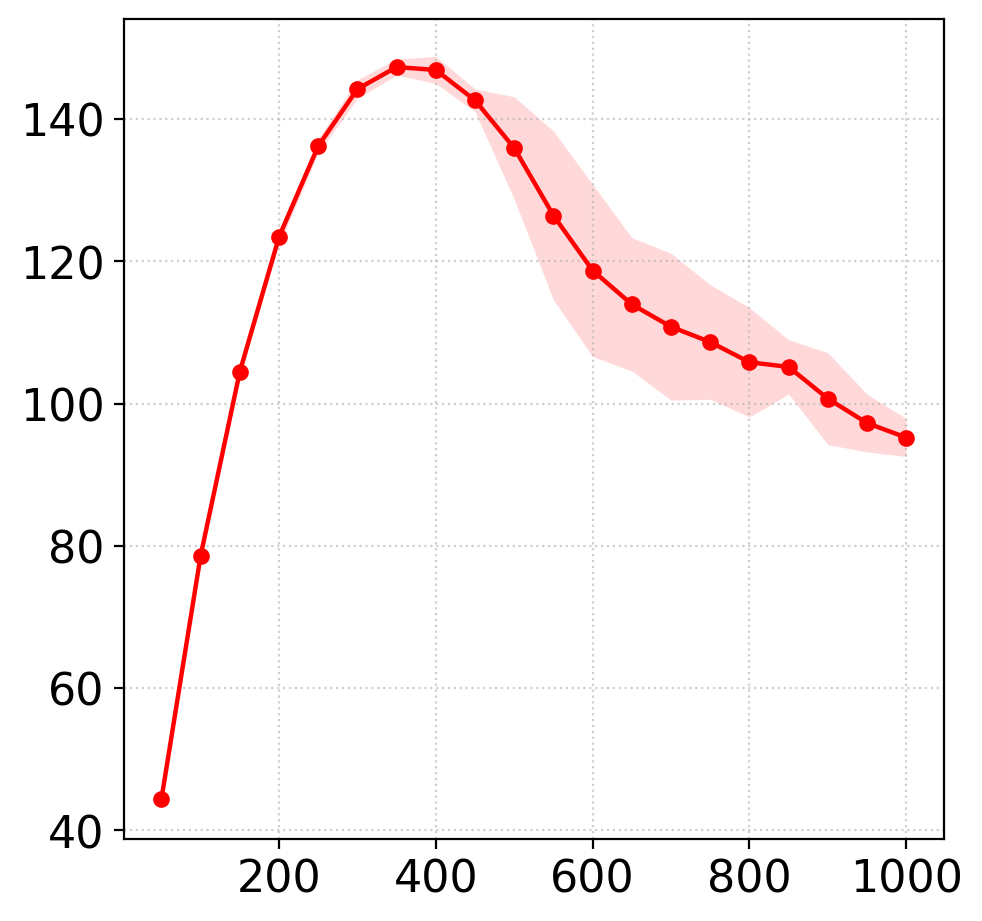} &
    \plotentry{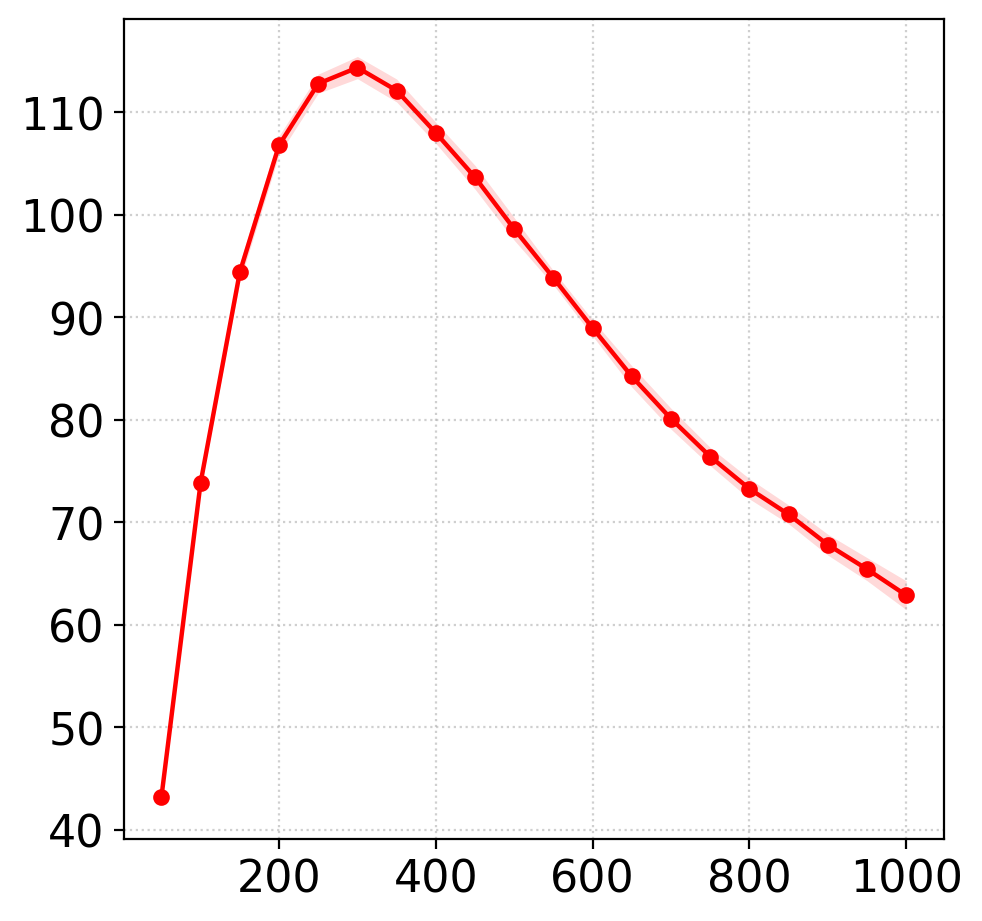}&
    \plotentry{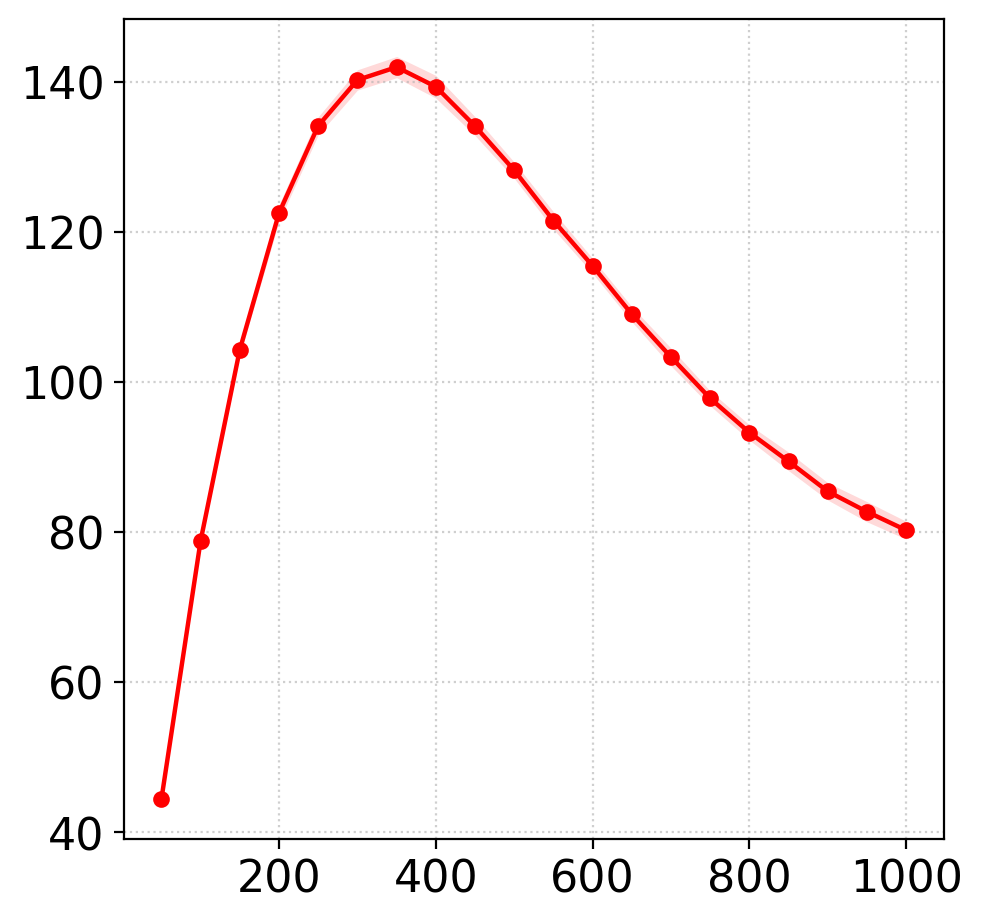}&
    \plotentry{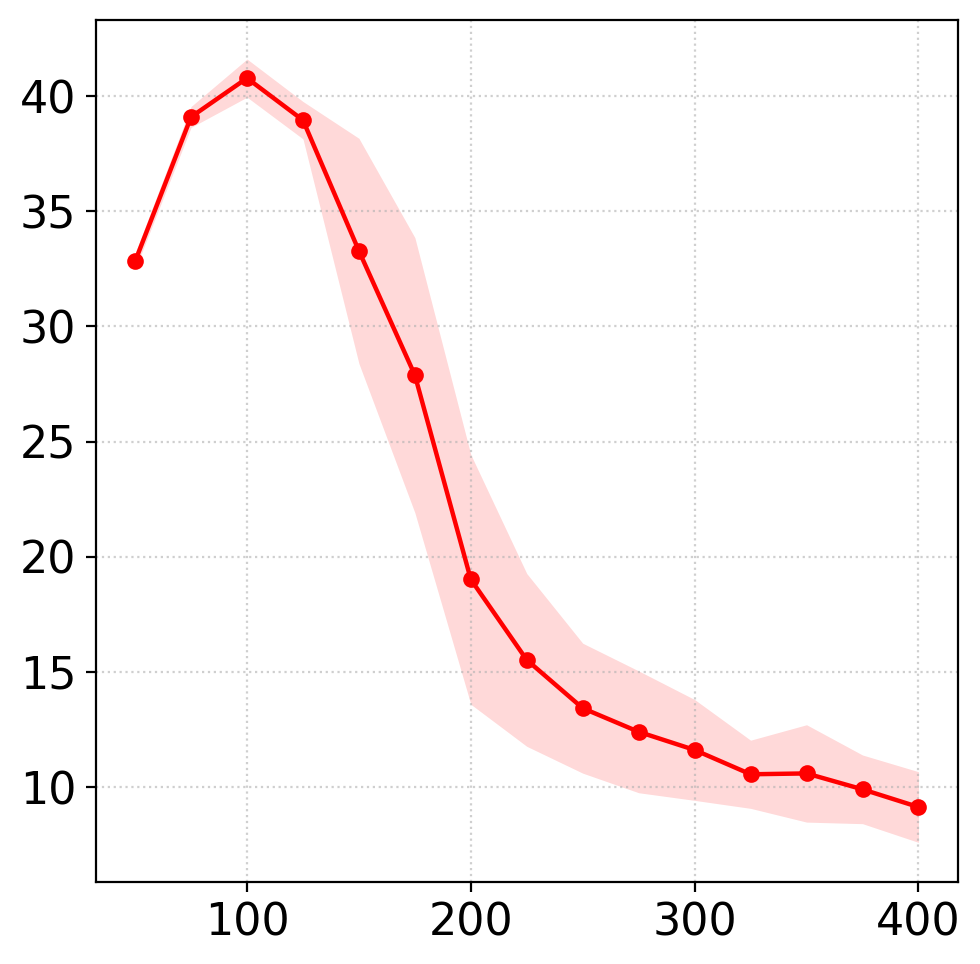} &
    \plotentry{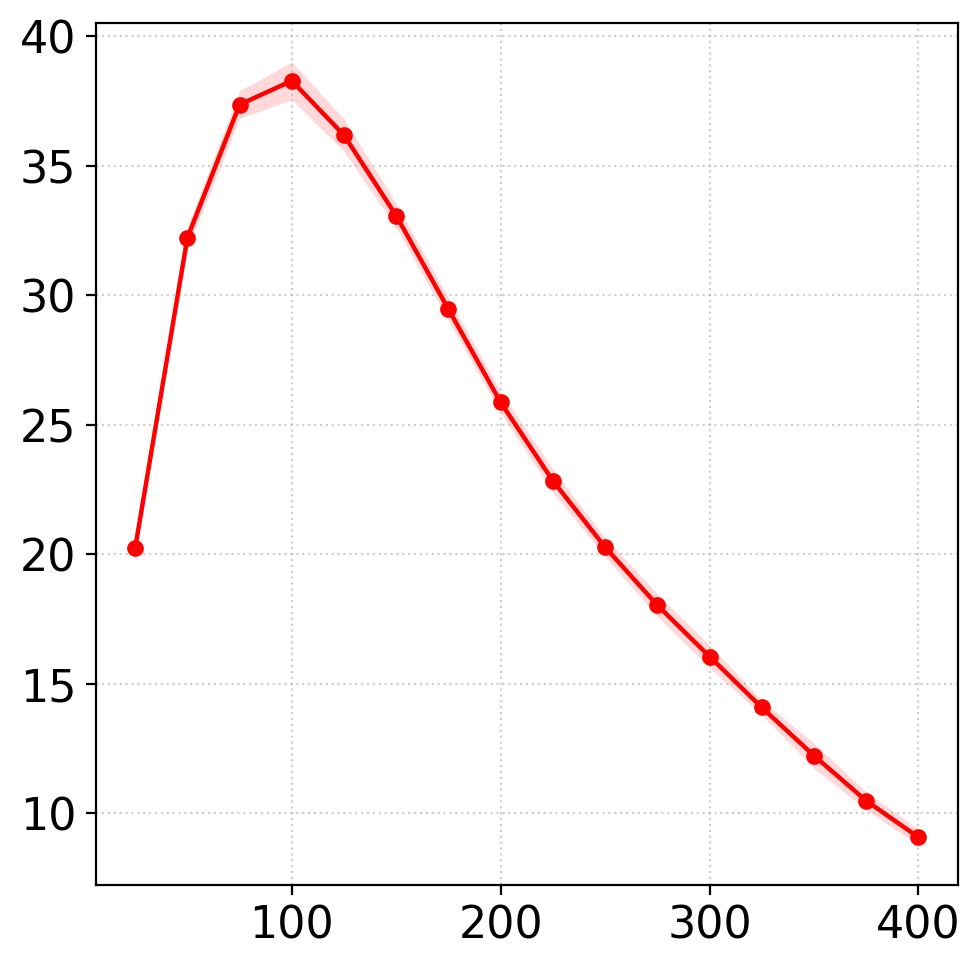} &
    \plotentry{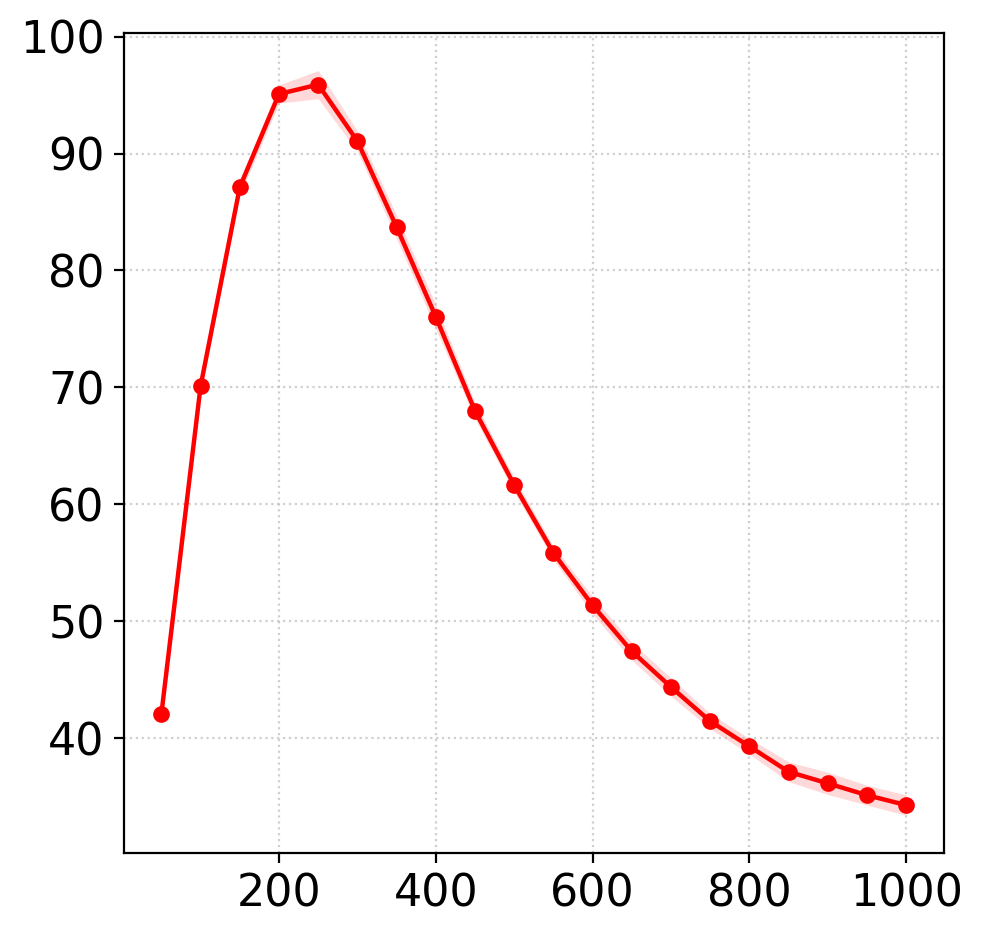} &
    \\
    \ylabelmeangroupsize
    \plotentry{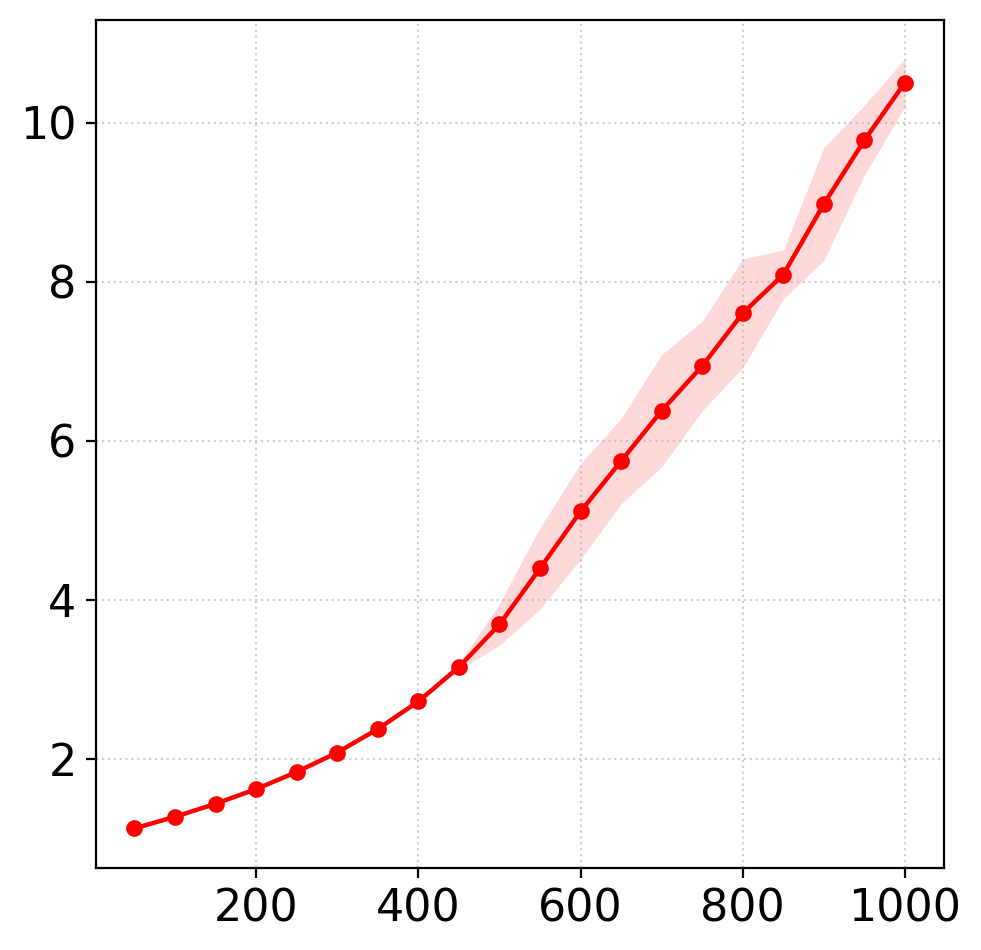} &
    \plotentry{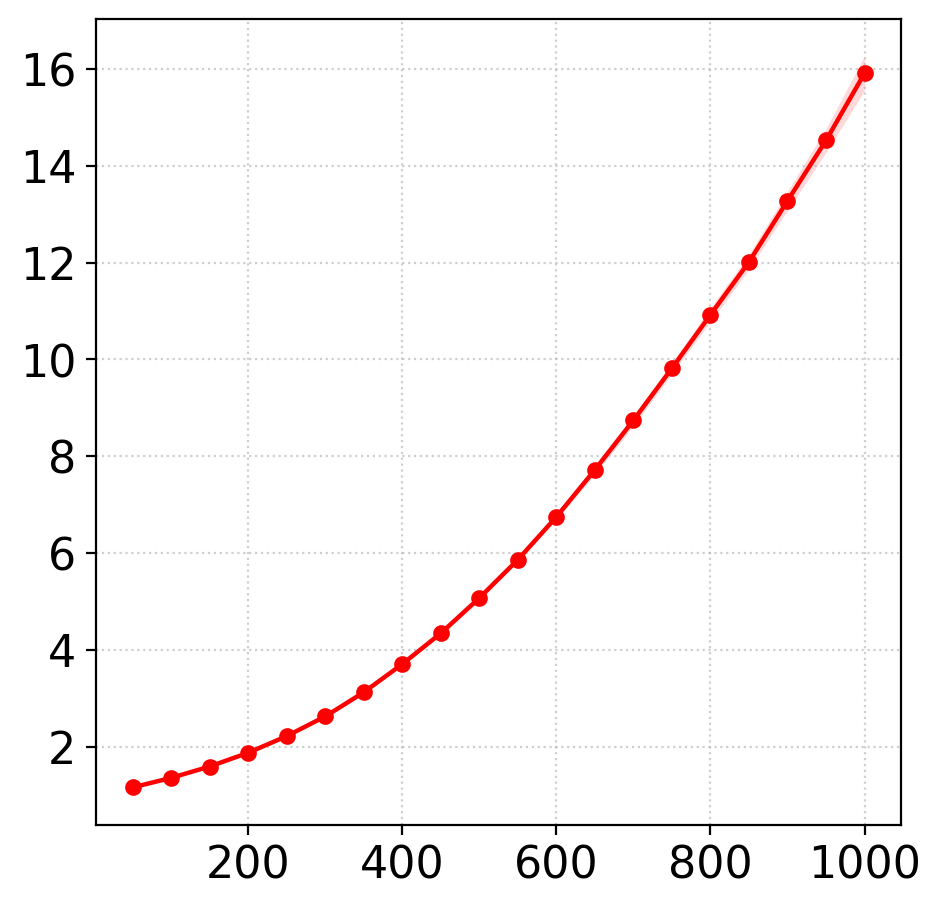}&
    \plotentry{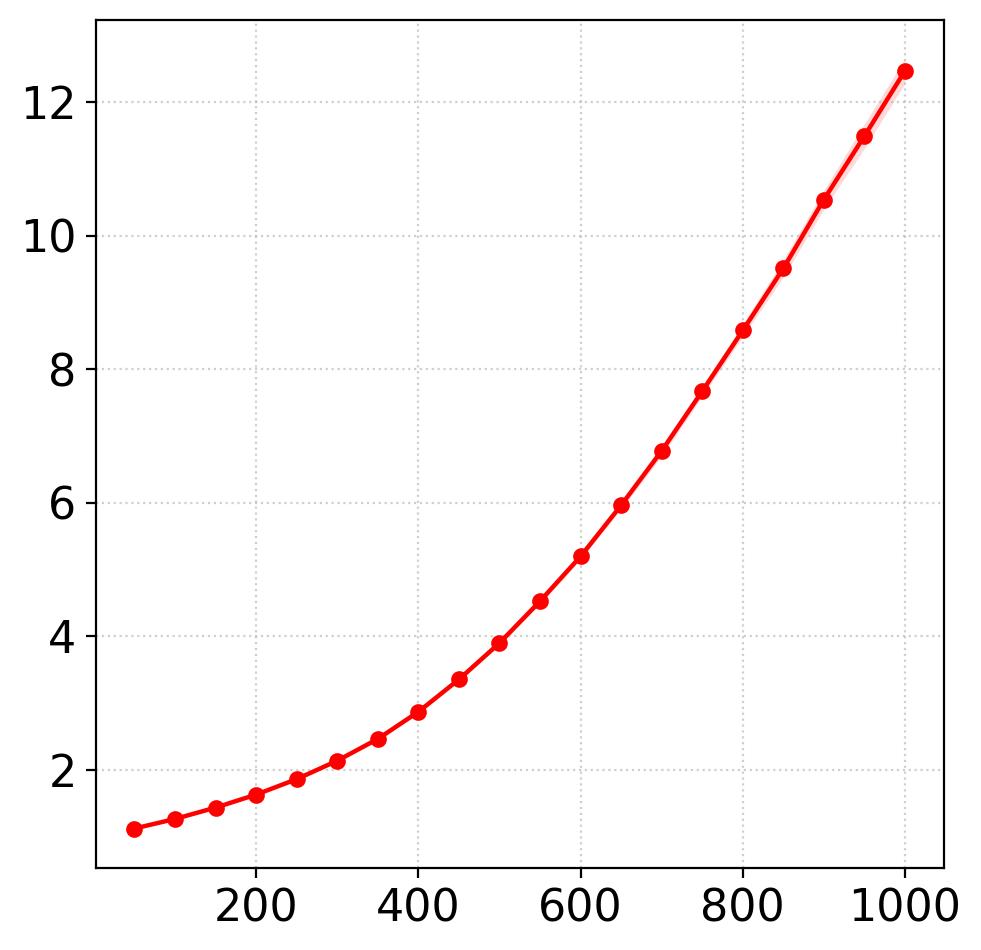}&
    \plotentry{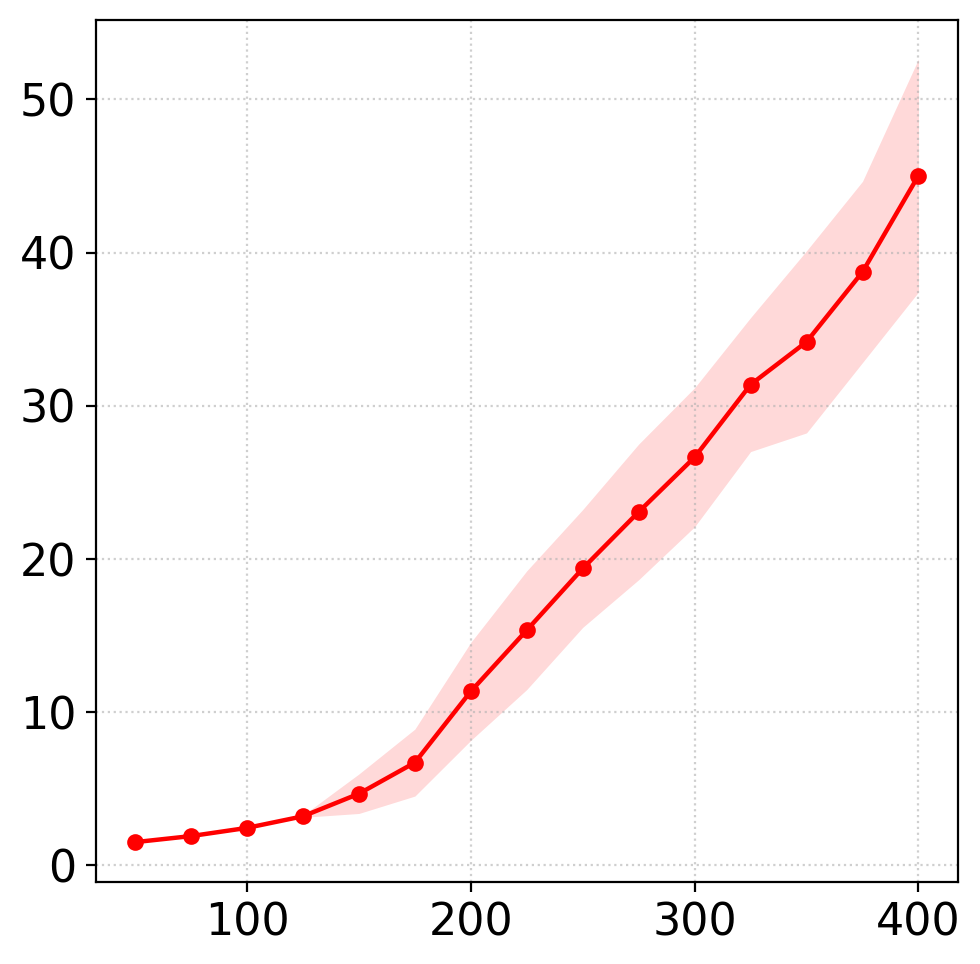} &
    \plotentry{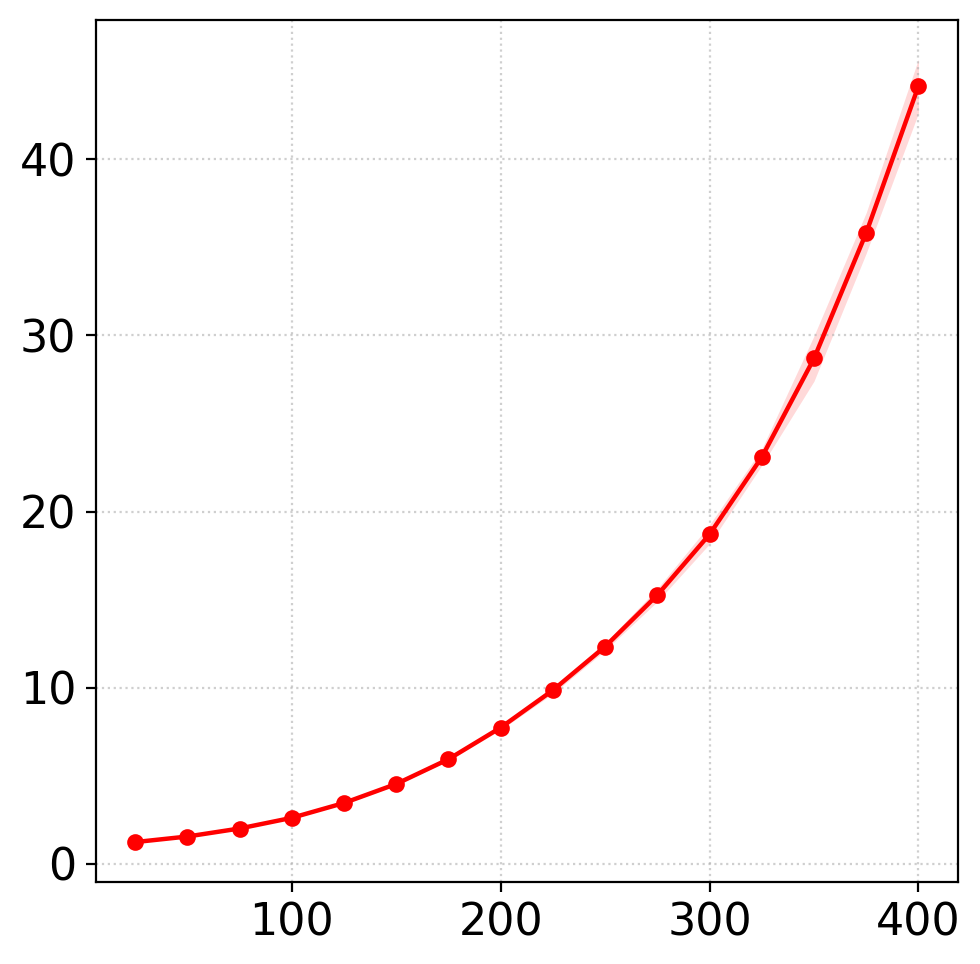} &
    \plotentry{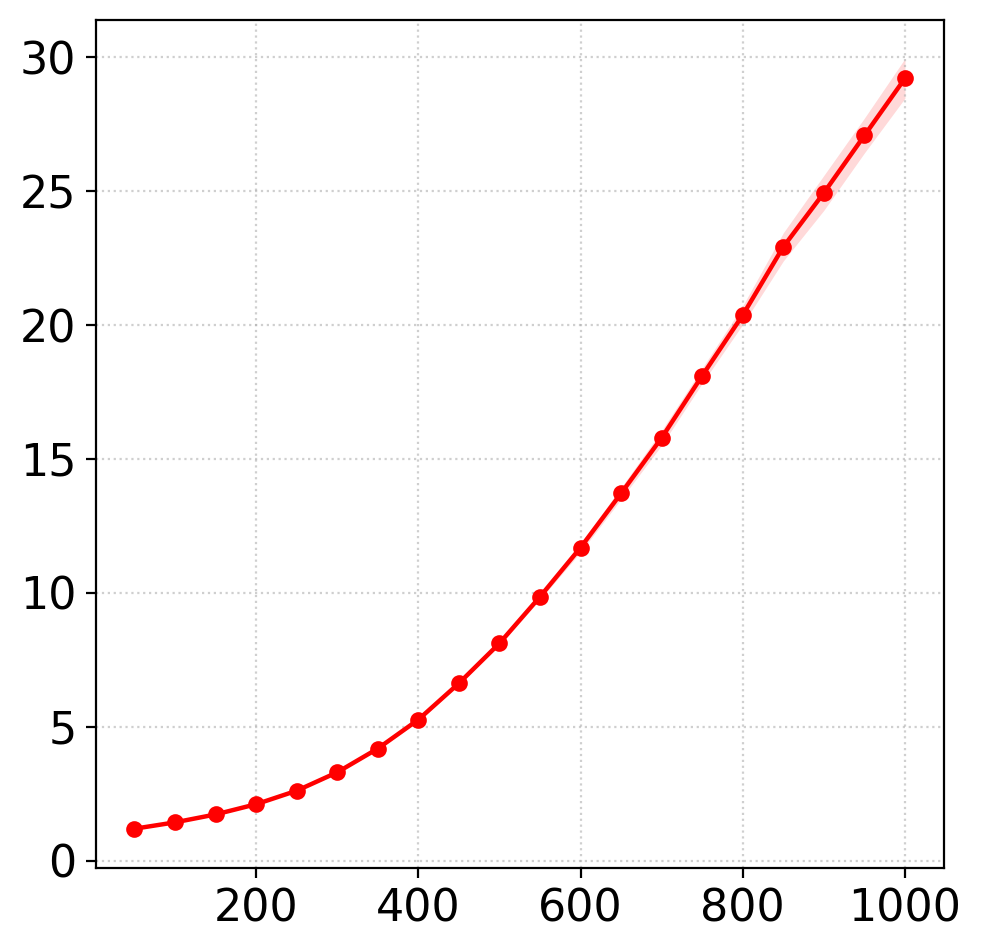} &
    \\
    \ylabelmaxgroupsize
    \plotentry{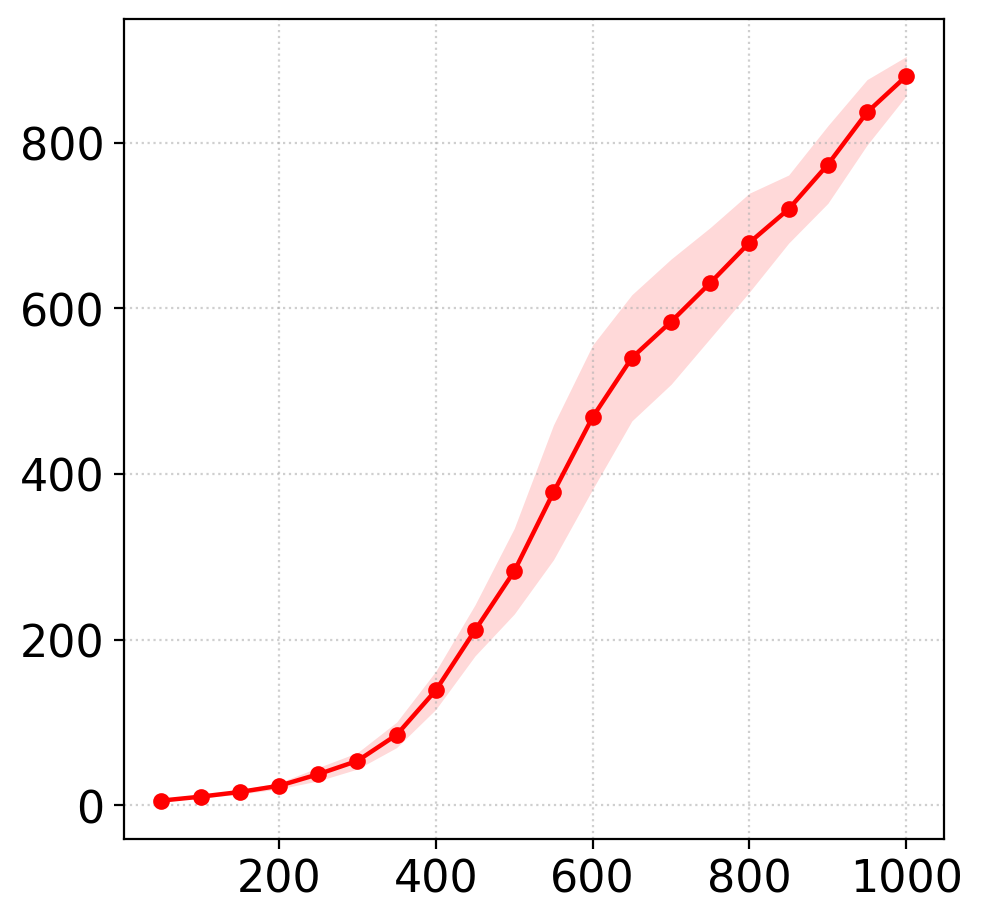} &
    \plotentry{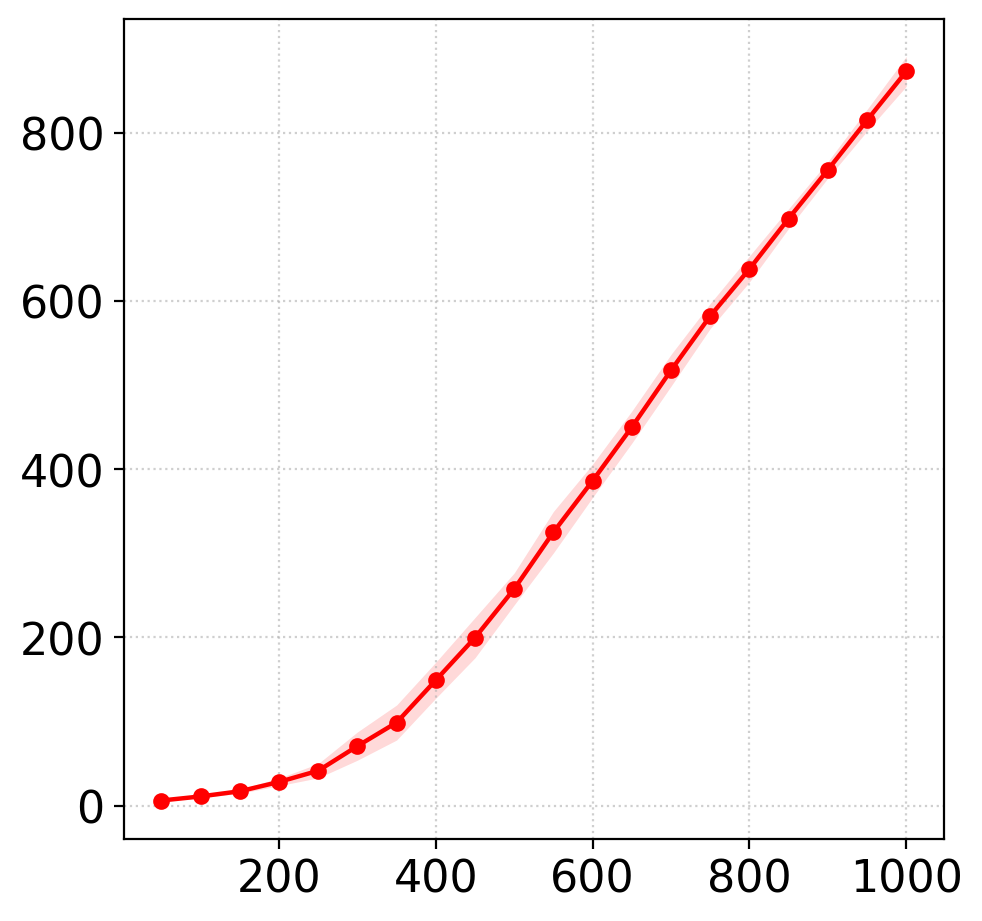}&
    \plotentry{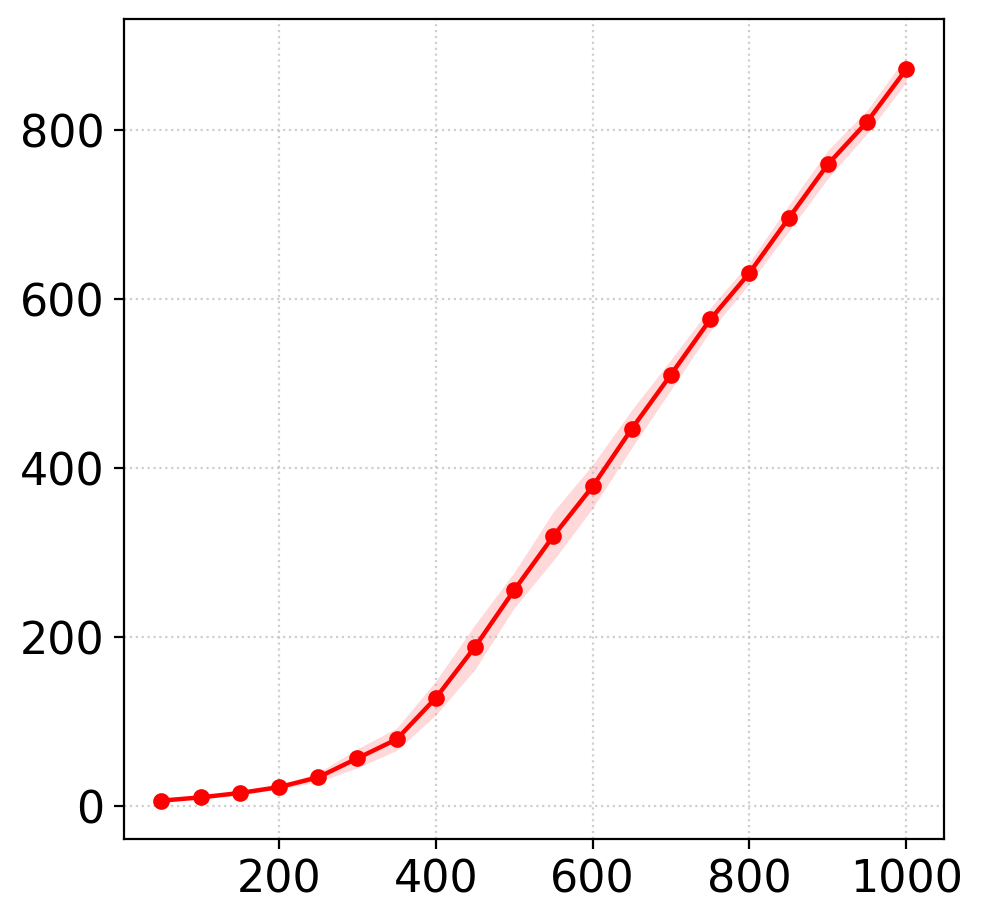}&
    \plotentry{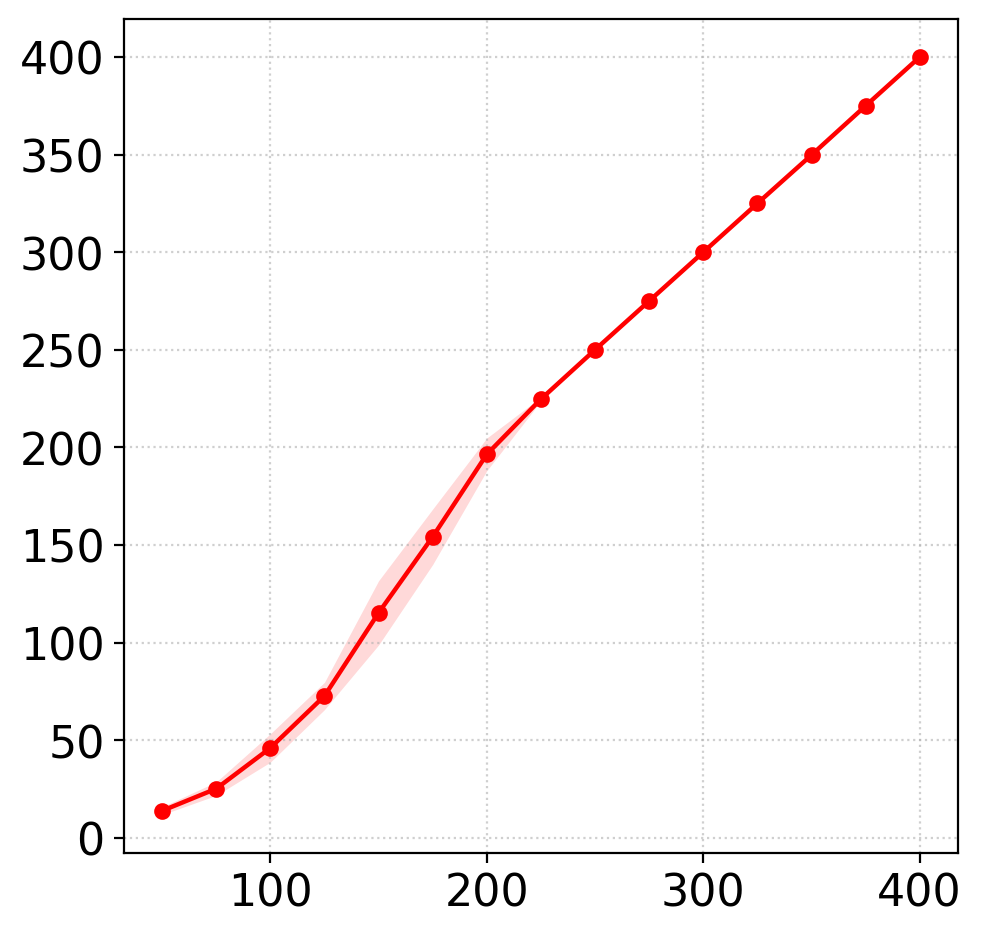} &
    \plotentry{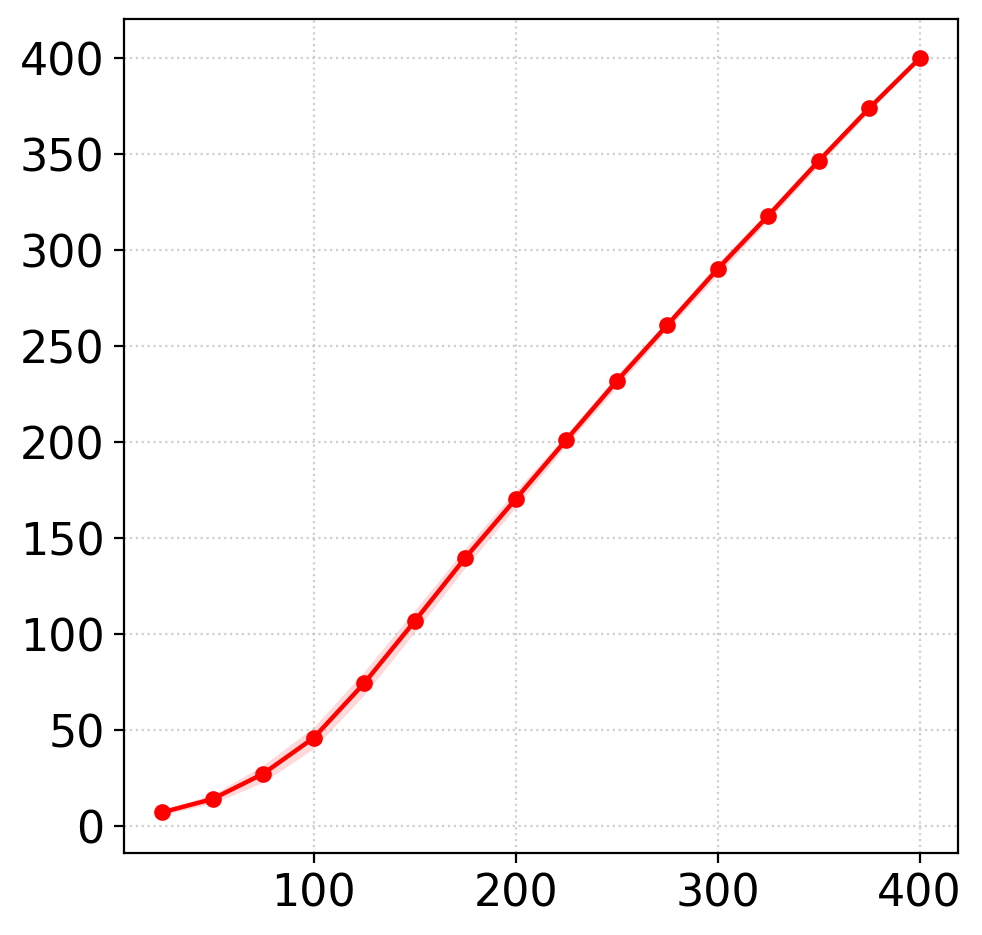} &
    \plotentry{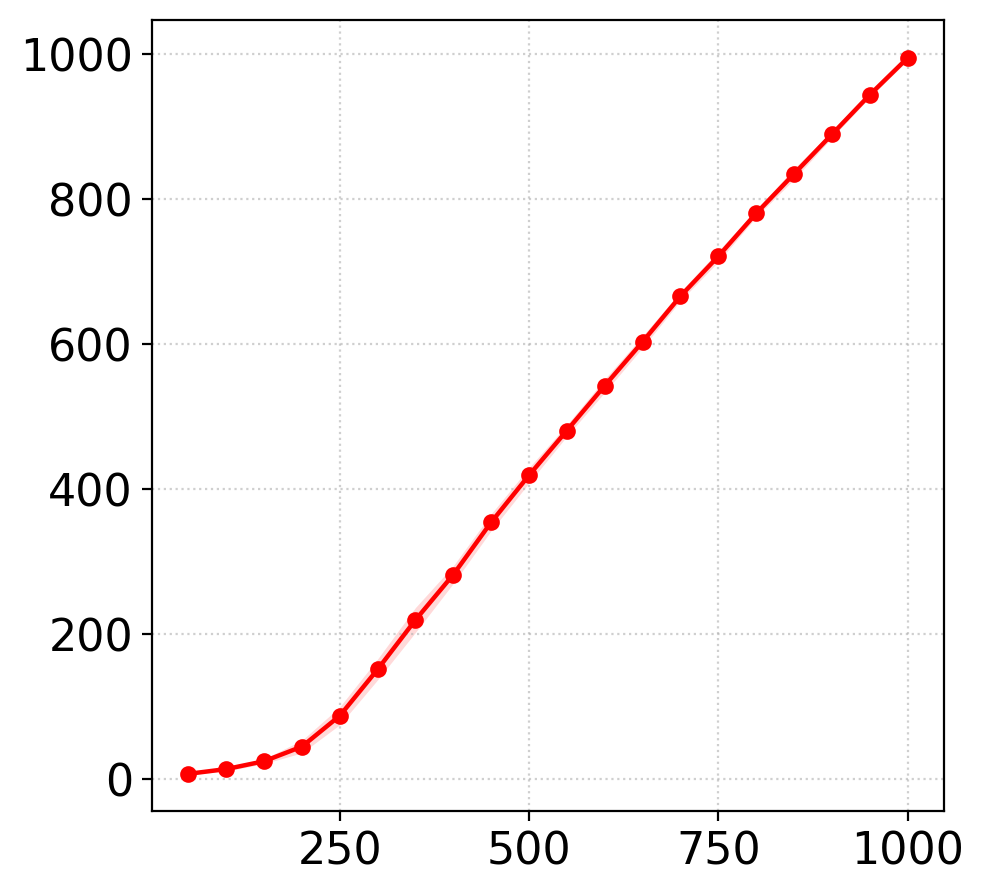} &
    \\
    & \multicolumn{4}{c}{agents}
  \end{tabular}
  \caption{Random navigation group statistics for GD-RHCR. The color coding is \textcolor{red}{GD-RHCR}}
  \label{random_groups}
\end{figure*}

\end{document}